\documentclass[11pt]{article}

\usepackage[a4paper,margin=1.05in]{geometry}
\usepackage[T1]{fontenc}
\usepackage{lmodern}
\usepackage{microtype}
\usepackage{amsmath,amssymb,amsthm,mathtools}
\usepackage{enumitem}
\usepackage[hidelinks]{hyperref}
\hypersetup{
  pdftitle={Finite- and long-range spin-wave free-energy asymptotics for the two-dimensional quantum Heisenberg ferromagnet},
  pdfauthor={Andreas Klippel},
  pdfsubject={Low-temperature free energy of finite-range and power-law Heisenberg ferromagnets in a weak Zeeman field}
}

\allowdisplaybreaks
\numberwithin{equation}{section}

\newtheorem{theorem}{Theorem}[section]
\newtheorem{proposition}[theorem]{Proposition}
\newtheorem{lemma}[theorem]{Lemma}

\theoremstyle{remark}
\newtheorem{remark}[theorem]{Remark}

\newcommand{\N}{\mathbb N}
\newcommand{\Z}{\mathbb Z}
\newcommand{\R}{\mathbb R}

\newcommand{\Tr}{\operatorname{Tr}}
\newcommand{\1}{\mathbf 1}
\newcommand{\cH}{\mathcal H}
\newcommand{\cE}{\mathcal E}
\newcommand{\cM}{\mathcal M}
\newcommand{\Sp}{\mathbf S}
\newcommand{\Sym}{\operatorname{Sym}}
\newcommand{\gap}{\operatorname{gap}}
\newcommand{\HW}{\mathrm{HW}}
\newcommand{\LiTwo}{\operatorname{Li}_2}
\newcommand{\Li}{\operatorname{Li}}

\title{Finite- and long-range spin-wave free-energy asymptotics for the\\
two-dimensional quantum Heisenberg ferromagnet}
\author{Andreas Klippel\thanks{Corresponding author: \href{mailto:anklippe@uni-mainz.de}{anklippe@uni-mainz.de}}\\[0.35em]
\small Institut f\"ur Mathematik, Johannes Gutenberg-Universit\"at Mainz\\[-0.1em]
\small Staudingerweg 9, 55128 Mainz, Germany}
\date{October 7, 2026}

\begin{document}
\maketitle

\begin{abstract}
For the two-dimensional quantum Heisenberg ferromagnet with asymptotically
radial power-law exchange, we prove the leading low-temperature free
energy at every fixed spin, uniformly on the weak-field scale.
We obtain fractional and logarithmically corrected laws associated with
the dispersions predicted in
\cite{NakanoTakahashi1994,NakanoTakahashi1995}.
Quantum Monte Carlo simulations support the associated anomalous magnon
dispersions for spin one-half \cite{SongZhaoZhouMeng2023}.
For kernels with finite second moment, we also extend the
nearest-neighbour theorem \cite{Klippel2026}, generalizing Takahashi's
quadratic law \cite{Takahashi1986} and its weak-field dependence
\cite{Hofmann2012}.
\end{abstract}

\medskip
\noindent\textbf{Keywords.} Quantum Heisenberg ferromagnet, long-range
exchange, weak magnetic field, free energy, spin waves, exclusion process,
fractional heat kernel, effective resistance, polylogarithm.

\smallskip
\noindent\textbf{Mathematics Subject Classification (2020).}
82B10, 82B20, 81Q10, 60K35.

\section{Introduction}
\label{sec:introduction}

At low temperature, the simplest excitations of a ferromagnet are
collective waves of the spin direction. Their quanta are called magnons.
Spin-wave theory treats these excitations as independent bosons and
predicts the leading thermodynamic behaviour from their dispersion
\cite{Dyson1956,Takahashi1986}. The exact quantum spin system is more
constrained: magnons interact, and the Holstein--Primakoff representation
allows at most \(2S\) of them at a site carrying spin \(S\)
\cite{HolsteinPrimakoff1940}. A rigorous derivation of the spin-wave
free energy must show that these effects are negligible after summing
over all states that contribute at low temperature.

This question has been settled for nearest-neighbour exchange at fixed
spin in several dimensions. Correggi, Giuliani and Seiringer established
the leading free-energy asymptotics in three dimensions
\cite{CorreggiGiulianiSeiringer2015}. Napiórkowski and Seiringer proved
the one-dimensional law and obtained the spin-wave upper bound in two
dimensions, leaving the matching lower bound open
\cite{NapiorkowskiSeiringer2021}. Seiringer subsequently highlighted this
missing two-dimensional bound as the remaining open case
\cite{Seiringer2025}. The nearest-neighbour result of \cite{Klippel2026}
provides that lower bound. It leads naturally to the question addressed
here: does the free-magnon description remain valid when spins interact
over arbitrarily large distances?

Long-range exchange changes the low-energy modes themselves. For a
two-dimensional interaction decaying as \(|z|^{-2-\sigma}\), with
\(0<\sigma<2\), a mode of small momentum \(p\) has energy proportional
to \(|p|^\sigma\), rather than the quadratic dispersion of the
nearest-neighbour model. At the borderline \(\sigma=2\), the dispersion
has the form \(|p|^2\log(1/|p|)\). These behaviours were identified in the
spin-wave and Green-function treatments of Nakano and Takahashi
\cite{NakanoTakahashi1994,NakanoTakahashi1995}. They change the number
of thermally accessible modes and hence the predicted temperature law
of the free energy. Determining the one-magnon dispersion alone does
not establish that prediction for the interacting spin system.

Quantum Monte Carlo calculations support the associated anomalous
magnon dispersions for spin one-half \cite{SongZhaoZhouMeng2023}.
Other numerical studies investigate transition temperatures, correlations,
and magnetic order
\cite{VassilievCottamRojdestvenski2001,ZhaoSongQiRongMeng2023}.
These results motivate the long-range spin-wave picture, while the
free-energy coefficients require a separate analysis. The free-energy
question also remains meaningful in regimes where spontaneous magnetic
order is absent \cite{MerminWagner1966,Bruno2001}.

Our main result establishes the leading free-magnon free energy for
these long-range interactions at every fixed spin. We allow the exact
power law to be modified at short distances and by faster-decaying terms.
The result therefore gives a rigorous confirmation and extension of the
spin-wave prediction associated with the known dispersions.

To make this statement precise, place a spin \(S\) at each site of
\(\Z^2\), where \(S\) is any fixed positive half-integer, and write
\(\Sp_x=(S_x^1,S_x^2,S_x^3)\) for its spin operators. The nonnegative
weight \(J(z)\) measures the coupling between spins separated by \(z\).
We assume \(J(z)=J(-z)\), \(J(0)=0\), and \(\sum_zJ(z)<\infty\).
With inverse temperature \(\beta>0\) and a field \(h\geq0\) in the third
spin direction, the Hamiltonian on a finite set \(\Lambda\) is
\begin{equation}
 H_{\Lambda,J,h}^{(S)}
 :=\frac12\sum_{x\in\Lambda}
 \sum_{\substack{z\in\Z^2\\x+z\in\Lambda}}
 J(z)\bigl(S^2-\Sp_x\mathbin\cdot\Sp_{x+z}\bigr)
 +h\sum_{x\in\Lambda}(S-S_x^3).
 \label{eq:model-hamiltonian}
\end{equation}
The exchange favours parallel spins, and the normalization assigns zero
energy to the fully polarized state. With \(B_L=\{1,\ldots,L\}^2\),
the thermodynamic free energy per site is
\begin{equation}
 f_J(\beta,S,h)
 :=-\lim_{L\to\infty}\frac{1}{\beta L^2}
       \log\Tr e^{-\beta H_{B_L,J,h}^{(S)}}.
 \label{eq:model-free-energy}
\end{equation}
The limit exists for summable exchange. We first state the result at
zero field, where its meaning is most direct.

The energy of a single magnon is \(S\varepsilon_J(p)\), with
\begin{equation}
 \varepsilon_J(p):=\sum_zJ(z)(1-\cos(p\cdot z)),
 \qquad p\in[-\pi,\pi)^2.
 \label{eq:model-dispersion}
\end{equation}
Treating magnons as independent bosons therefore gives the spin-wave
free energy
\begin{equation}
 f_J^{\rm sw}(\beta,S)
 :=\frac1{\beta(2\pi)^2}\int_{[-\pi,\pi)^2}
       \log\!\left(1-e^{-\beta S\varepsilon_J(p)}\right)\,dp.
 \label{eq:spin-wave-free-energy}
\end{equation}
The issue is whether it also captures the leading free energy of the
full spin system.

For the long-range kernels considered here, fix \(0<\sigma\leq2\),
a tail amplitude \(a>0\), and \(\eta>0\). With \(e_1,e_2\) the
coordinate vectors and \(|z|\) the Euclidean norm, assume
\begin{equation}
 J(e_1),J(e_2)>0,\qquad
 J(z)=a|z|^{-2-\sigma}
       +O(|z|^{-2-\sigma-\eta})\quad (|z|\to\infty).
 \label{eq:power-law-hypothesis}
\end{equation}
The remainder is uniform in direction.

\begin{theorem}[Long-range spin-wave free energy]
\label{thm:long-range}
Under \eqref{eq:power-law-hypothesis}, at every fixed spin,
\begin{equation}
 \lim_{\beta\to\infty}
 \frac{f_J(\beta,S,0)}{f_J^{\rm sw}(\beta,S)}=1.
 \label{eq:main-spin-wave-equivalence}
\end{equation}
\end{theorem}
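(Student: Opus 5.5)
The proof splits into an upper bound \(f_J\le(1+o(1))f^{\rm sw}_J\) and a matching lower bound, with both sides negative. First I would compute the spin-wave side from the small-\(p\) behaviour of \(\varepsilon_J\). Under \eqref{eq:power-law-hypothesis}, for \(0<\sigma<2\) one has \(\varepsilon_J(p)=c_\sigma a|p|^\sigma(1+o(1))\) as \(p\to0\), with \(c_\sigma=\int_{\R^2}(1-\cos(\hat e\cdot y))|y|^{-2-\sigma}dy\). This follows by comparing the lattice sum with the integral; the \(O(|z|^{-2-\sigma-\eta})\) remainder and the short-range part contribute \(O(|p|^{\min(2,\sigma+\eta)})\). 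For \(\sigma=2\) the same comparison gives \(\varepsilon_J(p)=\tfrac{\pi a}{2}|p|^2\log(1/|p|)(1+o(1))\). The condition \(J(e_1),J(e_2)>0\) ensures \(\varepsilon_J(p)>0\) away from \(p=0\). Substituting \(p=\beta^{-1/\sigma}q\), or its logarithmic analogue, then gives \(f^{\rm sw}_J\sim-C\beta^{-1-2/\sigma}\) (respectively \(\sim -C\beta^{-2}/\log\beta\)). The constant is \(C=\frac{1}{(2\pi)^2}\int\log(1-e^{-Sc_\sigma a|q|^\sigma})^{-1}dq\), which is a \(\zeta\)/Gamma expression. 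In particular only momenta with \(\varepsilon_J(p)\lesssim\beta^{-1}\) matter.

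For the upper bound I would use the Gibbs variational principle with a trial state built from the spin-wave (Holstein--Primakoff) bosons. The construction is the one of Napiórkowski--Seiringer and Correggi--Giuliani--Seiringer. Work in boxes of side \(\ell\) with Dirichlet/decoupling boundary conditions. Inside each box take a quasi-free boson state truncated to at most \(\sim\ell^2\beta^{-2/\sigma}\cdot\)(small) magnons. Control the quartic and the \(2S\)-constraint errors by the low magnon density \(\rho\sim\beta^{-2/\sigma}\to0\). For long range, the inter-box coupling is no longer a boundary term. However, \(H\) is a sum of nonnegative pair terms, so dropping the inter-box pairs is a lower bound on the energy, which goes the wrong way. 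To get an upper bound I would instead keep the cross terms and estimate their expectation in the product state. It is bounded by \(\sum_{x\in\text{box}}\sum_{x+z\notin\text{box}}J(z)\,\langle S^2-\Sp_x\cdot\Sp_{x+z}\rangle\lesssim\rho\,\ell^{2-\sigma}\) per box (with a log at \(\sigma=2\)). This is \(o(\ell^2\beta^{-1-2/\sigma})\) once \(\ell\gg\beta^{1/\sigma}\).

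The lower bound is the main obstacle. I would reduce it to the nearest-neighbour-type machinery of \cite{Klippel2026}, namely the 2D lower bound. The key structural fact is that on the \(n\)-magnon sector the Heisenberg Hamiltonian dominates the symmetric exclusion generator with rates \(J(z)\). The plan is as follows. (i) Localize to boxes of side \(\ell\) by discarding the inter-box pairs, which is legitimate for a lower bound. (ii) In each box, bound the partition function by summing over magnon number \(n\). Few magnons are handled by a comparison of the exclusion/spin spectrum with free bosons with dispersion \(\varepsilon_J\) restricted to the box. Many magnons are handled by an a priori energy bound that the localized Hamiltonian is \(\gtrsim n\cdot\gap_\ell\). (iii) The restriction of a long-range kernel to a box has a spectral gap of order \(\ell^{-\sigma}\), which follows from the effective-resistance/fractional-heat-kernel estimates mentioned in the keywords. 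This gap must beat the spectral error from cutting the long tail, which again costs only a relative \(o(1)\) when \(\ell\gg\beta^{1/\sigma}\).

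The delicate point is that the comparison in (ii) must be uniform in the number of magnons up to density \(\rho\). I expect this to go through the exclusion-process representation of the multi-magnon states and a heat-kernel bound for the fractional walk. I would first try to transport the nearest-neighbour argument of \cite{Klippel2026} directly, replacing the Gaussian heat-kernel bounds by \(p_t(x)\lesssim\min(t^{-2/\sigma},\,t|x|^{-2-\sigma})\). Finally, letting \(\beta\to\infty\) and then the auxiliary parameters to their limits yields \eqref{eq:main-spin-wave-equivalence}.
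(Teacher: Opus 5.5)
\textbf{The proposal has a genuine gap.} Your two-sided outline and the spin-wave computation are fine. But the lower free-energy bound, which you rightly call the main obstacle, lacks its central mechanism, and several steps you do state would fail.

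Transporting the nearest-neighbour argument of \cite{Klippel2026} does not work. Its reflection step does not reproduce the free-boundary operator once jumps longer than one lattice step are present. The paper replaces it by \emph{harmonic extension} to collision configurations, $Ef=(f,-C^{-1}B^*f)$. This gives $\|Ef\|\geq\|f\|$ and $\langle Ef,L_0Ef\rangle=\langle f,(H_k+N_k)f\rangle$, together with a \emph{quadratic} bound on the collision operator,
$N_k\leq Ck^2\bigl(\tfrac1{n\lambda}+\rho(G)\bigr)\bigl[1+C\Delta\bigl(\tfrac{k^2}{n\lambda}+k\rho_\circ(G)\bigr)\bigr]^2H_k^2$, where $\rho(G),\rho_\circ(G)$ are the resistance diameters (plain and punctured). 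Its inputs are:
\begin{itemize}
\item a resistance point trace at the diagonals;
\item Chen's moving-particle lemma \cite{Chen2017} on the graph punctured at an occupied site;
\item the exclusion gap of \cite{CaputoLiggettRichthammer2010};
\item a dyadic-flow resistance bound for the long-range box, uniform for $\sigma<2$ and $O(\log L)$ at $\sigma=2$.
\end{itemize}
This is where the box gap $L^{-\sigma}$ actually enters, through $(n\lambda)^{-1}$ and $k^2/(n\lambda)=O(k^2L^{\sigma-2})$. There is no ``tail-cutting error'' for it to beat, because deleting inter-box bonds is an exact one-sided inequality.

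Because the collision bound is quadratic in $H_k$, it yields a relative error $\delta_t\lesssim K^4e_0$ (resp.\ $K^6\log^3(2L)\,e_0$) only below an \emph{energy} cutoff $e_0$. A split by magnon number alone therefore does not suffice. The paper first discards energies above $e_0$, of order $t^{-1}$ times a power of $\log t$, using a rough trace bound on cells with $m^\sigma\asymp t/(A\log t)$.

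Your a priori bound ``$\gtrsim n\cdot\gap_\ell$'' is false on the full $n$-magnon sector: $(S^-_{\mathrm{tot}})^n$ applied to the polarized state has zero exchange energy. It holds only on highest-weight spaces, $H^{(r,J),\HW}_{m,k}\geq ck/m^\sigma$, which the paper proves by comparison with complete-graph exclusion and its Casimir value. Finally, at $h=0$, the case of this theorem, the free-boundary Bose trace is infinite because of the zero mode. Its occupation must therefore be truncated at $K$, the spin descendants counted separately, and the censored free-box trace still shown to be asymptotically bulk for long jumps.

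Your variational bound has a separate flaw that matters at $\sigma=2$. Keeping the cross terms in a product state is the right idea, but the cross-box energy is not $O(\rho\ell^{2-\sigma})$ for $\sigma\geq1$: the nearest-neighbour bonds crossing the boundary alone give $\rho\ell$. Treated as an additive error with a uniform density $\rho$, it has relative size of order $t/\ell$ for $1<\sigma<2$ and $t\log t/\ell$ at $\sigma=2$, up to the logarithm in $\rho$. At the same time, the projected-state estimate needs the defect $\sum_x\rho_x^2\asymp\ell^2\rho^2$ to stay below one, since the entropy bound carries the factor $Z_0/Z_{\mathcal P}$. For $\sigma<2$ a window $\max\{t,t^{1/\sigma}\}\ll\ell\ll t^{2/\sigma}$ survives up to logarithms. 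At $\sigma=2$, where $\rho$ is of order $t^{-1}$ up to logarithms, the window closes.

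The paper incurs no such error. It pins the exterior spins at $S^3=S$ and puts the exact Gibbs state of the Dirichlet box Hamiltonian, which contains the pinned terms $\sum_{x\in B_\ell}\sum_{x+z\notin B_\ell}J(z)S(S-S_x^3)$, in every block. These states are invariant under rotations about the third axis. Hence, with $\rho_x=\langle S-S_x^3\rangle$ and $x,y$ in different blocks, $\langle S^2-\Sp_x\cdot\Sp_y\rangle=S(\rho_x+\rho_y)-\rho_x\rho_y\leq S(\rho_x+\rho_y)$. This is exactly the pinned energy already counted in each box, so $f_J\leq f^D_{J,\ell}$. The Dirichlet free energy is then bounded separately by the projected free-boson state. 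The only remaining boundary cost is the Dirichlet heat-trace error, which vanishes for $\ell=t^\gamma$ with $1/\sigma<\gamma<2/\sigma$, and for $\ell=t^{5/6}$ at $\sigma=2$.
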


Thus the interactions and the local occupation bound leave the leading
free energy unchanged. Evaluating the Bose integral gives the temperature
power \(1+2/\sigma\) for \(0<\sigma<2\) and a quadratic law divided by
\(\log(\beta S)\) at \(\sigma=2\).
Theorem~\ref{thm:long-range-weak-field} gives the explicit coefficients
and the stronger statement with uniform control in the magnetic field, including
the approach to zero field on the scale \(h=O(\beta^{-1})\).

Theorem~\ref{thm:finite-variance-infinite-range} treats kernels satisfying
the finite-second-moment condition \eqref{eq:finite-variance-hypothesis},
including finite-range kernels whose support spans two dimensions.
Their leading free energy is quadratic in
temperature, extending Takahashi's prediction \cite{Takahashi1986}
and Hofmann's weak-field law \cite{Hofmann2012} beyond nearest-neighbour
exchange. Weak-field thermodynamics has also been studied by
Green-function methods and quantum Monte Carlo \cite{JuhaszJunger2008}.
The fixed-spin limit considered here differs from the large-spin regimes
of \cite{CorreggiGiuliani2012,Benedikter2017} and the large-spin,
high-field limit of \cite{ConlonSolovej1990}.

\subsection{Proof description}
\label{subsec:proof-description}

The proof compares the spin free energy with the free-boson expression
from both sides. The lower bound requires an upper bound on the number
of low-energy spin states. The upper bound requires trial states that
respect the local spin constraint. We first develop these comparisons
for finite-range exchange, then explain which estimates change for
long-range jumps.

For the lower bound, we follow \cite{Klippel2026} and replace each spin
by \(2S\) spin-half variables. Reversed spins become particles that
cannot occupy the same decorated vertex. To compare their wave functions
with free bosons, we extend them to collision configurations by minimizing
the free kinetic energy. Denote the exclusion operator in a sector with
\(k\) particles by \(H_k\), and the extra energy of this extension by
\(N_k\), as in \eqref{eq:weighted-collision-operator}. The task is to
make \(N_k\) small relative to \(H_k\) on the states that matter.

The reflection in \cite{Klippel2026} does not reproduce the free-boundary
operator for longer jumps. We instead bound the value of a discrete
gradient at a collision by its spatial average and its energy times the
effective resistance. There are \(O(k^2)\) colliding pairs, and the
two-dimensional resistance costs \(O(\log L)\). This explains the first
factor \(k^2\log(2L)\) in \eqref{eq:abstract-bad-edge-bound}.
To control occupied neighbouring vertices, we fix one occupied vertex
and apply Chen's moving-particle estimate on the graph with that vertex
removed \cite{Chen2017}. Together with the exclusion spectral gap from
\cite{CaputoLiggettRichthammer2010}, this gives the first-order control
\eqref{eq:abstract-first-order-collision}, with coefficient at most
\(Ck^2\log(2L)\). Inserting this control into the quadratic
harmonic-extension estimate \eqref{eq:abstract-preabsorbed-collision}
produces the squared bracket in
Proposition~\ref{prop:abstract-graph-collision}. Thus
\[
 N_k\leq Ck^2\log(2L)\bigl(1+Ck^2\log(2L)\bigr)^2H_k^2
 \leq Ck^6\log^3(2L)H_k^2.
\]
Here and in the finite-range discussion below, constants may depend on
the fixed spin and kernel. The powers of \(k\) and \(\log L\) record
these estimates. On exchange energies at most \(e_0\), the spectral
inequality \(H_k^2\leq e_0H_k\) turns this into a relative error at most
\(CK^6\log^3(2L)e_0\) when \(k\leq K\).

The second ingredient is a free-boson trace estimate without explicit
box eigenfunctions. For finite-range exchange, use the dimensionless
inverse temperature \(t\) from \eqref{eq:finite-range-time}.
We couple the full-lattice walk to the free-box walk
by rejecting jumps that leave the box. The Dirichlet walk is killed at
the same attempted exit. Gaussian bounds from \cite{Delmotte1999} and
reversal of returning paths control the error. At time \(u\), a boundary
layer has width of order \(\sqrt u\), hence contains \(O(L\sqrt u)\)
sites, while the diagonal heat kernel is \(O(u^{-1})\). Their product
explains the boundary term \(L/\sqrt u\) in
Lemma~\ref{lem:finite-range-heat-trace}. The Gaussian bulk integral gives
\eqref{eq:finite-range-local-clt}. Expanding the Bose logarithm weights
the heat trace at time \(mt\) by \(e^{-m\beta h}/m\). The bulk factor
\(m^{-1}\) then gives the dilogarithm in
Lemma~\ref{lem:finite-range-bose-trace}. Since \(e^{-m\beta h}\leq1\),
the error bounds are uniform in the field.

The cutoffs balance the collision cost against this boundary error.
Write \(L^2\asymp tg\).
The localization and high-energy removal follow
\cite{CorreggiGiulianiSeiringer2015}. Cells with
\(m^2\asymp t/\log t\) give a factor \(e^{-ct/m^2}\) per particle
by \eqref{eq:finite-range-rough-sector}. This is a negative power of
\(t\), strong enough to dominate the sector multiplicities when the
cell-size constant is chosen sufficiently small.
There are \(O(g\log t)\) cells, and the spin-multiplet factor
in each contributes \(O(\log t)\) to the logarithm of the partition
function. This gives the rough bound \(O(g\log^2t)\) in
\eqref{eq:finite-range-rough-trace}. To suppress energies above \(e_0\),
we choose \(te_0/2\) larger than this logarithmic bound. We therefore take
\(e_0\) of order \(g\log^2t/t\), with a sufficiently large constant.
The sector bound \eqref{eq:finite-range-rough-sector} then limits the
retained particle number to \(K=O(L^2e_0)=O(g^2\log^2t)\).
For a power-law choice of \(g\), the resulting collision error is
\[
 \delta_t\leq C(g^2\log^2t)^6(\log t)^3
                  \frac{g\log^2t}{t}
 =C\frac{g^{13}(\log t)^{17}}t.
\]
The normalized boundary error is \(\sqrt t/L=g^{-1/2}\).
Balancing the powers in \(g^{13}/t\) and \(g^{-1/2}\) gives
\(g=t^{2/27}\), which explains the rate
\(t^{-1/27}(\log t)^{17}\) in
Proposition~\ref{prop:finite-range-quantitative}. At zero field the
particle cutoff also matters: the free zero mode has only \(K+1\)
allowed occupations, and a spin multiplet has at most \(2SL^2+1\)
states. Their logarithms contribute \(O(\log t/g)\) after normalization,
so they vanish on this scale. The eigenvalue comparison changes only
exchange energy and leaves the field factor unchanged, as in
\eqref{eq:fixed-k-field-comparison}.

For the upper free-energy bound, we use the projected Gibbs construction
of \cite{CorreggiGiulianiSeiringer2015} with the occupation cutoff of
\cite{NapiorkowskiSeiringer2021}. The cutoff enforces at most \(2S\)
bosons per site. If \(\rho_x\) is the free local density, the projection
defect is bounded by \(\sum_x\rho_x^2\), as in
Lemma~\ref{lem:general-projected-entropy}. The Dirichlet Green-function
bound gives \(\rho_x\leq C\log(2\ell)/t\) in a box of side \(\ell\),
so the defect is at most \(C\ell^2\log^2(2\ell)/t^2\).
The common free-Hamiltonian energy term in the energy and entropy
estimates cancels,
leaving the errors in \eqref{eq:finite-range-variational-errors}.
Balancing the boundary term \(\sqrt t/\ell\) with the projection term
\(\ell^2\log^2(2\ell)/t^2\) gives
\(\ell\asymp t^{5/6}(\log t)^{-2/3}\) and the error
\(O(t^{-1/3}(\log t)^{2/3})\). Corridors of width equal to the interaction
range separate the trial boxes and permit passage to the full lattice.

The assumption of positive coordinate bonds
\eqref{eq:backbone-hypothesis} is removed in
Section~\ref{sec:general-support}. We reindex each component by its
exchange lattice and add coordinate bonds of strength \(\eta>0\).
Representing each added bond by a fixed path of original bonds gives a
multiplicative energy comparison. We take the low-temperature limit
before \(\eta\downarrow0\). The same idea treats the finite-second-moment
condition \eqref{eq:finite-variance-hypothesis}: a displacement \(z\)
requires \(O(|z|)\) generating steps. Cauchy--Schwarz costs one path-length
factor, and summing over translates costs another. Thus the deleted tail
has relative cost at most
\(C\sum_{|z|>R}J(z)|z|^2\), which tends to zero.
Lemma~\ref{lem:finite-variance-truncation} makes this comparison precise.

For power-law tails satisfying \eqref{eq:power-law-hypothesis}, this last
sum diverges, so the long jumps must be retained. We now use \(t=\beta S\)
as in \eqref{eq:lr-time-field}. The dispersion first determines the
thermal scales. For \(0<\sigma<2\), substituting
\(p=u^{-1/\sigma}q\) in the two-dimensional return integral gives the
factor \(u^{-2/\sigma}\). Write \(\nu=2/\sigma\).
The Bose-series weight \(1/m\) changes \(m^{-\nu}\) into
\(m^{-1-\nu}\), explaining the order of the polylogarithm in
\eqref{eq:stable-long-range-limit}. The factor \(1/\beta\) in the free
energy turns the thermal factor \(t^{-\nu}\) into the temperature
power \(1+\nu\).
At \(\sigma=2\), the symbol \(|p|^2\log(1/|p|)\) selects momenta of
order \((u\log u)^{-1/2}\). Their two-dimensional volume is
\((u\log u)^{-1}\), which explains the logarithmic correction.
Lemma~\ref{lem:lr-local-limits} computes the exact constants.

Long jumps also change the box geometry. The return bound agrees with
the stable-jump estimates of \cite{BarlowBassKumagai2009}, but free-box
estimates must account for jumps leaving from the interior. We obtain
them by reflecting the quadratic form, splitting Fourier modes, and
estimating exits by truncating large jumps, as in
Lemma~\ref{lem:lr-finite-box-heat}. For resistance, distribute a unit
current between successively larger squares. At spatial scale \(r\),
there are \(O(r^4)\) pairs, each carrying \(O(r^{-4})\) current, and
the inverse conductance costs \(O(r^{2+\sigma})\). The energy of this
scale is consequently bounded by
\[
 Cr^4(r^{-4})^2r^{2+\sigma}=Cr^{\sigma-2},
\]
with short paths treated separately in \eqref{eq:lr-flow-scale-energy}.
Summing over dyadic scales gives bounded resistance for \(\sigma<2\)
and \(O(\log L)\) for \(\sigma=2\), including after deletion of one
vertex. These are the inputs to the collision comparison in
Proposition~\ref{prop:lr-collision}.

The long-range pressure boxes have volume \(t^\nu g\) in the fractional
case and \(t\log t\,g\) in the marginal case. In both cases the zero-field
Bose trace is of order \(g\). The convenient choice \(g=(\log t)^8\)
makes the cutoff costs in \eqref{eq:lr-stable-collision-delta} and
\eqref{eq:lr-marginal-collision-delta} powers of \(\log t\) divided
by \(t\), and hence vanish. The choice also compensates for the weaker
marginal box bound \(CL^2/u\). After the \(M\)-th Bose term, that bound
leaves a tail at most \(CL^2/(tM)\), or \(C\log t/M\) after division by
the zero-field scale \(L^2/(t\log t)\). On that scale the summed boundary error is
\(O(\log t/\sqrt g+\log M/g)\). Taking
\(M=\sqrt g=(\log t)^4\) makes both leading errors
\(O((\log t)^{-3})\). This explains the logarithmic choices in
Lemma~\ref{lem:lr-bose-traces}.

For the long-range trial states, the fractional return bound gives a
local density \(O(t^{-\nu})\), so the projection defect in a box is
\(O(\ell^2t^{-2\nu})\), by
\eqref{eq:lr-stable-projection-defect}. The box must be larger than the
thermal length \(t^{1/\sigma}\) and small enough that this defect
vanishes. With \(\ell=t^\gamma\), these two requirements give exactly
\(1/\sigma<\gamma<2/\sigma\), the window in
\eqref{eq:lr-stable-dirichlet-window}. At \(\sigma=2\), we reuse the
nearest-neighbour density estimate. The marginal normalization adds a
factor \(\log t\) to the scaled projection errors, so \(\ell=t^{5/6}\)
gives \(O((t^{-1}+t^{-1/3})\log^3t)=o(1)\), by
\eqref{eq:lr-marginal-projected-bound}. This box size also makes the
boundary error in \eqref{eq:lr-marginal-finite-box-heat} vanish on the
same scale.

Finally, infinite-range bonds
prevent separation by fixed-width corridors. We instead combine
Dirichlet Gibbs states in adjacent boxes. Rotation invariance makes their
transverse spin expectations vanish. For sites in different boxes,
\(\rho_x=\langle S-S_x^3\rangle\) and factorization give
\[
 \langle S^2-\Sp_x\cdot\Sp_y\rangle
 =S(\rho_x+\rho_y)-\rho_x\rho_y
 \leq S(\rho_x+\rho_y).
\]
These two terms are already included as pinned-boundary costs in
\eqref{eq:lr-dirichlet-spin-hamiltonian}. Every cross-box bond is thus
controlled, while entropy is additive. This yields
\eqref{eq:lr-block-variational-passage} and completes the upper bound.

\section{Model and further results}
\label{sec:model-main}

We give the explicit coefficients and the uniform field refinement of
Theorem~\ref{thm:long-range}. We then state the corresponding result
under the finite-second-moment assumption
\eqref{eq:finite-variance-hypothesis}.

We keep the Hamiltonian \eqref{eq:model-hamiltonian}, the free energy
\eqref{eq:model-free-energy}, and the dispersion
\eqref{eq:model-dispersion}. In a field \(h\), the one-magnon energy is
\(S\varepsilon_J(p)+h\). Besides the power-law class
\eqref{eq:power-law-hypothesis}, we consider
symmetric nonnegative summable kernels whose support generates a
rank-two subgroup of \(\Z^2\). Write
\begin{equation}
 \Gamma_J:=\operatorname{span}_{\Z}(\operatorname{supp}J),\qquad
 \operatorname{rank}\Gamma_J=2,\qquad
 q_J:=[\Z^2:\Gamma_J].
 \label{eq:general-kernel-hypotheses}
\end{equation}
The zeros of \(\varepsilon_J\) are the \(q_J\) characters of
\(\Z^2/\Gamma_J\).

We first give the explicit coefficients and field dependence of the
long-range result. For kernels satisfying \eqref{eq:power-law-hypothesis}
with \(0<\sigma<2\), the fractional dispersion coefficient is
\begin{equation}
 \kappa_\sigma
 :=a\int_{\R^2}\frac{1-\cos x_1}{|x|^{2+\sigma}}\,dx
 =\frac{a\pi\Gamma(1-\sigma/2)}
 {2^{\sigma-1}\sigma\Gamma(1+\sigma/2)}.
 \label{eq:stable-symbol-constant}
\end{equation}
Set
\begin{equation}
 D_\sigma:=\frac{\Gamma(2/\sigma)}
 {2\pi\sigma\kappa_\sigma^{2/\sigma}}\quad(0<\sigma<2),
 \qquad D_2:=\frac1{\pi^2a},
 \label{eq:long-range-coefficients}
\end{equation}
and, for \(s>1\) and \(z\in[0,1]\), write
\begin{equation}
 \Li_s(z):=\sum_{\ell=1}^{\infty}\frac{z^\ell}{\ell^s},
 \qquad \LiTwo=\Li_2.
 \label{eq:polylogarithm-definition}
\end{equation}

\begin{theorem}[Long-range free energy in a field]
\label{thm:long-range-weak-field}
Under \eqref{eq:power-law-hypothesis}, for \(0<\sigma<2\),
\begin{equation}
 \lim_{\beta\to\infty}\sup_{h\geq0}
 \left|
 \beta^{1+2/\sigma}S^{2/\sigma}f_J(\beta,S,h)
 +D_\sigma\Li_{1+2/\sigma}(e^{-\beta h})
 \right|=0.
 \label{eq:stable-long-range-limit}
\end{equation}
For \(\sigma=2\),
\begin{equation}
 \lim_{\beta\to\infty}\sup_{h\geq0}
 \left|
 \beta^2S\log(\beta S)f_J(\beta,S,h)
 +D_2\LiTwo(e^{-\beta h})
 \right|=0.
 \label{eq:marginal-long-range-limit}
\end{equation}
\end{theorem}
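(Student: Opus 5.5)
We prove the two inequalities separately. The lower bound on \(f_J\) is obtained from an upper bound on the partition function by comparison with free bosons; the upper bound on \(f_J\) comes from projected Gibbs trial states. Throughout we write \(t=\beta S\) and \(\nu=2/\sigma\). The natural zero-field scale is \(t^{-\nu}\) for \(\sigma<2\) and \((t\log t)^{-1}\) for \(\sigma=2\). In both cases the target is the free-boson pressure. Expanding the Bose logarithm, this pressure equals \(\sum_m e^{-m\beta h}m^{-1}\) times the heat trace at time \(mt\). A local limit theorem for the dispersion \(\varepsilon_J\) gives the required asymptotics. For \(\sigma<2\) the return integral is \((2\pi)^{-2}\int e^{-u\kappa_\sigma|p|^\sigma}dp=\Gamma(\nu)/(2\pi\sigma\kappa_\sigma^{\nu}u^{\nu})\). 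For \(\sigma=2\), using \(\varepsilon_J(p)\sim \tfrac{\pi a}{2}|p|^2\log(1/|p|)\), it is \(\sim 1/(\pi^2 a\,u\log u)\). Summing against \(e^{-m\beta h}/m\) produces \(D_\sigma\Li_{1+\nu}(e^{-\beta h})\), respectively \(D_2\LiTwo(e^{-\beta h})\). Since \(e^{-m\beta h}\le1\), every error bound we derive is uniform in \(h\ge0\). The constants in \eqref{eq:stable-symbol-constant} and \eqref{eq:long-range-coefficients} follow from a direct computation, which verifies Lemma~\ref{lem:lr-local-limits}.

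\emph{Lower bound on \(f_J\).} We localize into boxes of volume \(L^2\asymp t^\nu g\), or \(t\log t\,g\) in the marginal case, with \(g=(\log t)^8\). Long-range bonds crossing box boundaries are nonnegative, so dropping them only lowers the Hamiltonian and hence lowers \(f_J\). This is the favourable direction for a lower bound. Inside each box we follow the framework of Section~\ref{subsec:proof-description}. First, a rough sector bound removes energies above \(e_0\sim g\log^2t/t\) and caps the particle number at \(K=O(L^2e_0)\). Next, we decompose the spins into spin-half variables and write the result as an exclusion process. We then apply the harmonic-extension collision bound, Proposition~\ref{prop:abstract-graph-collision}. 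Its inputs are the effective resistance, which is bounded for \(\sigma<2\) and \(O(\log L)\) for \(\sigma=2\), including after deletion of a vertex, together with Chen's moving-particle estimate and the exclusion gap. With these inputs the collision operator satisfies \(N_k\le\delta_t H_k\), where \(\delta_t\) is a power of \(\log t\) divided by \(t\), as in Proposition~\ref{prop:lr-collision}. The spin trace is therefore bounded by the free bosonic trace with energies scaled by \((1-\delta_t)\). We compare that trace with the full-lattice heat kernel using the finite-box heat estimates of Lemma~\ref{lem:lr-finite-box-heat}. The Bose series is truncated at \(M=\sqrt g\), which gives relative errors \(O((\log t)^{-3})\) as in Lemma~\ref{lem:lr-bose-traces}. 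The zero-mode and multiplet factors contribute only \(O(\log L)\) to \(\log\Tr\), which is negligible after dividing by the zero-field scale.

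\emph{Upper bound on \(f_J\).} We use Dirichlet boxes of side \(\ell\), with \(\ell=t^\gamma\) and \(1/\sigma<\gamma<2/\sigma\) when \(\sigma<2\), and \(\ell=t^{5/6}\) when \(\sigma=2\). In each box we take the projected free-boson Gibbs state with at most \(2S\) bosons per site. By Lemma~\ref{lem:general-projected-entropy}, the energy and entropy defects are bounded by \(\sum_x\rho_x^2\). The fractional return bound gives \(\rho_x\le Ct^{-\nu}\), while for \(\sigma=2\) we reuse the nearest-neighbour density estimate. Both defects vanish on the relevant scale. Cross-box bonds cannot be removed by corridors, since the interaction has infinite range. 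Instead, we take the product of rotation-invariant box states, so that \(\langle S^2-\Sp_x\cdot\Sp_y\rangle\le S(\rho_x+\rho_y)\). These terms are already included in the pinned-boundary Dirichlet Hamiltonian \eqref{eq:lr-dirichlet-spin-hamiltonian}, whose free trace again matches the bulk heat trace up to boundary errors. This yields \eqref{eq:lr-block-variational-passage}. Combining the two bounds proves both limits in Theorem~\ref{thm:long-range-weak-field}, uniformly in \(h\).

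\emph{Main obstacle.} We expect the hardest step to be the marginal case \(\sigma=2\). There the resistance grows like \(\log L\), and the box heat-kernel bound \(CL^2/u\) is only barely summable. Both errors must remain \(o(1)\) after division by the small normalization \((t\log t)^{-1}\). We would first check carefully that the choices \(g=(\log t)^8\) and \(M=(\log t)^4\) absorb every logarithmic loss, and we would adjust the exponents if some loss is not absorbed.
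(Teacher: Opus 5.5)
Your outline follows the paper's architecture: the same dispersion and Bose-trace lemmas, the same box scales with \(g=(\log t)^8\) and \(M=\sqrt g\), the same Dirichlet windows, and the same cross-block estimate. The gap is in the cutoff step of the lower bound for \(0<\sigma<2\). There you import the finite-range parameters \(e_0\sim g\log^2t/t\) and \(K=O(L^2e_0)\), which were calibrated for \(L^2\asymp tg\), not for \(L^2\asymp t^\nu g\). The cell-localized rough trace of an \(L\)-box is at best of order \(g(\log t)^{1+\nu}\), so your \(e_0\) does not suppress high energies when \(\nu>1\); that part costs only a logarithm. The real failure is \(K\): \(L^2e_0\asymp t^{\nu-1}\) up to logarithms, a positive power of \(t\). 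Also \(K^2L^{\sigma-2}\asymp t^{\nu-1}\) up to logarithms, because \(L^{\sigma-2}\asymp t^{1-\nu}\). So Proposition~\ref{prop:lr-collision} gives only \(\delta_t\lesssim K^2(K+K^2L^{\sigma-2})^2e_0\asymp t^{4\nu-5}\) up to logarithms, which does not vanish for \(\sigma\leq 8/5\). Your assertion that \(\delta_t\) is a power of \(\log t\) divided by \(t\) therefore does not follow from the cutoffs you state.

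The missing ingredient is the fractional, \(k\)-linear highest-weight bound \(H_{m,k}^{(r,J),\HW}\geq ck/m^\sigma\) of Lemma~\ref{lem:lr-highest-weight}. The paper proves it by comparing with the complete-graph exclusion generator. Pairs at distance at least \(R_0\) have conductance at least \(cm^{-2-\sigma}\), and short pairs are routed through a third vertex via \((uv)=(uw)(vw)(uw)\). The Casimir identity \(K_{M_m,k}^{\HW}=k(M_m-k+1)\) then gives the bound. On cells with \(m^\sigma\asymp t/(A\log t)\) there are only \(O(g(\log t)^\nu)\) cells, which gives \(e_0=Cg(\log t)^{1+\nu}/t\). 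On the full box the same bound caps \(K\leq CL^\sigma e_0\lesssim g^{1+\sigma/2}(\log t)^{1+\nu}\). Hence \(K^2L^{\sigma-2}=o(1)\) and \(\delta_t\leq CK^4e_0\), as in \eqref{eq:lr-stable-collision-delta}. The nearest-neighbour sector bound \(k/m^2\) from the finite-range argument cannot replace it: it forces \(m^2\asymp t/\log t\) and hence about \(t^{\nu-1}\) cells, up to logarithms. At \(\sigma=2\) your scheme does close after adjusting logarithms, with \(e_0=Cg(\log t)^3/t\) and \(K=O(g^2(\log t)^4)\). So the obstacle you anticipate in the marginal case is not where the argument actually breaks.
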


The errors are uniform in \(h\geq0\) after multiplication by the displayed
temperature scales. In particular, if \(h_\beta\geq0\) and
\(\beta h_\beta\to\alpha\in[0,\infty)\), the limiting polylogarithms
are evaluated at \(e^{-\alpha}\). At zero field, the limits become
\begin{align}
 \lim_{\beta\to\infty}
 \beta^{1+2/\sigma}S^{2/\sigma}f_J(\beta,S,0)
 &=-\frac{\Gamma(2/\sigma)}
 {2\pi\sigma\kappa_\sigma^{2/\sigma}}
 \zeta(1+2/\sigma),
 &&0<\sigma<2,\notag\\
 \lim_{\beta\to\infty}
 \beta^2S\log(\beta S)f_J(\beta,S,0)
 &=-\frac1{6a},
 &&\sigma=2.
 \label{eq:power-law-zero-field-limits}
\end{align}
The spin-wave integral \eqref{eq:spin-wave-free-energy} has these same
leading terms, giving Theorem~\ref{thm:long-range}.

For the quadratic result, assume the finite-second-moment condition
\begin{equation}
 \sum_{z\in\Z^2}J(z)|z|^2<\infty.
 \label{eq:finite-variance-hypothesis}
\end{equation}
and set
\begin{equation}
 A_J:=\frac12\sum_zJ(z)zz^{\mathrm T},\qquad
 j_J:=\sqrt{\det A_J},\qquad
 \widetilde j_J:=\frac{j_J}{q_J}.
 \label{eq:model-scales}
\end{equation}
The rank-two assumption \eqref{eq:general-kernel-hypotheses} makes
\(A_J\) positive definite. At each zero
of the dispersion, the quadratic part is \(p^{\mathrm T}A_Jp\).
At the origin,
\begin{equation}
 \varepsilon_J(p)=p^{\mathrm T}A_Jp+o(|p|^2).
 \label{eq:intro-dispersion}
\end{equation}
For \(b\geq0\), the corresponding Bose integral is
\begin{equation}
 \frac{1}{(2\pi)^2}\int_{\R^2}
 \log\!\left(1-e^{-p^{\mathrm T}A_Jp-b}\right)\,dp
 =-\frac{\LiTwo(e^{-b})}{4\pi\sqrt{\det A_J}}.
 \label{eq:intro-massive-bose-integral}
\end{equation}
Each of the \(q_J\) zeros contributes the same term.

\begin{theorem}[Quadratic free energy]
\label{thm:finite-variance-infinite-range}
Under the rank-two and finite-second-moment assumptions
\eqref{eq:general-kernel-hypotheses} and \eqref{eq:finite-variance-hypothesis},
\begin{equation}
 \lim_{\beta\to\infty}\ \sup_{h\geq0}
 \left|
  \beta^2S\widetilde j_J f_J(\beta,S,h)
  +\frac1{4\pi}\LiTwo(e^{-\beta h})
 \right|=0.
 \label{eq:finite-variance-infinite-range-limit}
\end{equation}
\end{theorem}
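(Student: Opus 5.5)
We reduce Theorem~\ref{thm:finite-variance-infinite-range} to a finite-range statement and then remove the truncation by a two-sided operator comparison. The finite-range version is the one sketched in Section~\ref{subsec:proof-description} and is assumed to be proved in Proposition~\ref{prop:finite-range-quantitative}: for a finite-range kernel with positive coordinate bonds, the normalized free energy converges to \(-\frac1{4\pi}\LiTwo(e^{-\beta h})\), with an error \(O(t^{-1/27}(\log t)^{17})\) uniform in \(h\geq0\). Here \(t\) is the dimensionless inverse temperature \eqref{eq:finite-range-time}, proportional to \(\beta S j_J\).

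\textbf{Step 1: reduction to \(\Gamma_J=\Z^2\) with positive coordinate bonds.} The Hamiltonian \eqref{eq:model-hamiltonian} couples only sites in the same coset of \(\Gamma_J\). The trace therefore factorizes over the \(q_J\) cosets. Identifying \(\Gamma_J\) with \(\Z^2\) by a linear isomorphism \(T\) turns each factor into a model with kernel \(J\circ T\). The determinant of the second-moment matrix rescales as \(\det A_{J\circ T}=\det A_J/q_J^2\). The density of each coset is \(1/q_J\), which converts \(j_J\) into \(\widetilde j_J=j_J/q_J\), as in \eqref{eq:model-scales}. If the reindexed kernel lacks coordinate bonds, we add bonds of strength \(\eta\) and bound each added bond by a fixed path of original bonds. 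This gives the operator inequalities \(H_J\leq H_{J+\eta}\leq(1+C\eta)H_J\). These transfer to the free energies after rescaling \(\beta\), since the field term is unaffected. We take \(\beta\to\infty\) first and then \(\eta\downarrow0\). The limits \(A_{J+\eta}\to A_J\) and continuity of \(\LiTwo\) close this step.

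\textbf{Step 2: truncation.} For \(R\) large, set \(J_R=J\1_{|z|\leq R}\). By Lemma~\ref{lem:finite-variance-truncation}, we have \(H_{J_R}\leq H_J\leq(1+\epsilon_R)H_{J_R}\) on every box, with \(\epsilon_R\leq C\sum_{|z|>R}J(z)|z|^2\to0\). The mechanism is the path argument: a displacement \(z\) is written as \(O(|z|)\) generator steps, and Cauchy--Schwarz is applied to \(S^2-\Sp_x\cdot\Sp_{x+z}\), viewed as a squared gradient of the form. Since both operators vanish on the polarized state and commute with the field term, the Gibbs variational principle gives
\[
f_{J_R}(\beta(1+\epsilon_R)\text{-rescaled})\leq f_J\leq f_{J_R}.
\]
More precisely, \(f_J(\beta,S,h)\) is sandwiched between finite-range free energies with exchange \(J_R\) and \((1+\epsilon_R)J_R\) at the same \(h\). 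The dispersion must be treated carefully: multiplying the exchange by \(1+\epsilon\) multiplies \(j\) by \(1+\epsilon\) but leaves \(\beta h\) unchanged. The limiting expression \(-\frac1{4\pi}\LiTwo(e^{-\beta h})\) therefore changes only through the prefactor \(\widetilde j\), by a factor \((1+\epsilon_R)^{\pm1}\).

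\textbf{Step 3: limits.} Apply the finite-range theorem to \(J_R\) and \((1+\epsilon_R)J_R\), uniformly in \(h\). Then let \(\beta\to\infty\) and afterwards \(R\to\infty\), using \(A_{J_R}\to A_J\) and the bound \(\LiTwo\leq\pi^2/6\). The result is \eqref{eq:finite-variance-infinite-range-limit}.

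The main obstacle is Step 2, namely the comparison with only finite second moment. The path decomposition must use a generating set of \(J_R\) itself, with both the Cauchy--Schwarz path-length factor and the translate count uniform in \(z\). This is what produces \(\sum J(z)|z|^2\) rather than a higher moment. Once Lemma~\ref{lem:finite-variance-truncation} provides this comparison, the remaining steps are bookkeeping of scales.
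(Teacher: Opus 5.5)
Your proposal is correct and follows essentially the paper's route. The coset reindexing and $\eta$-regularization of Section~\ref{sec:general-support} handle general finite-range kernels, and Lemma~\ref{lem:finite-variance-truncation} sandwiches $f_J$ between the free energies of $J_R$ and $c_RJ_R$ at the same field mass $\beta h$; taking $\beta\to\infty$ and then $R\to\infty$ (with $q_{J_R}=q_J$ for $R\ge R_0$) closes the argument. Only your first displayed sandwich is reversed: $H_{J_R}\le H_J$ gives $f_{J_R}\le f_J\le f_{(1+\epsilon_R)J_R}$, which is what your ``more precisely'' sentence correctly uses.
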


This includes all finite-range kernels satisfying the rank-two assumption
\eqref{eq:general-kernel-hypotheses} and hence extends the
nearest-neighbour result of \cite{Klippel2026}.
By continuity of \(\LiTwo\), if \(h_\beta\geq0\) and
\(\beta h_\beta\to\alpha\in[0,\infty)\), then
\begin{equation}
 \lim_{\beta\to\infty}\beta^2S f_J(\beta,S,h_\beta)
 =-\frac{q_J\LiTwo(e^{-\alpha})}{4\pi\sqrt{\det A_J}}.
 \label{eq:crossover-limit}
\end{equation}
In particular, \(\LiTwo(1)=\pi^2/6\) gives
\begin{equation}
 f_J(\beta,S,0)
 =-\frac{q_J\pi}{24S\beta^2\sqrt{\det A_J}}+o(\beta^{-2}).
 \label{eq:zero-field-finite-range-limit}
\end{equation}

Thus power laws \(J(z)\sim a|z|^{-\alpha}\) with \(\alpha>4\) fall under
Theorem~\ref{thm:finite-variance-infinite-range}, while
\(2<\alpha\leq4\) corresponds to Theorem~\ref{thm:long-range-weak-field}
with \(\sigma=\alpha-2\), subject to \eqref{eq:power-law-hypothesis}.
For \(\alpha\leq2\), the interaction is not summable, and our results
do not apply.

Constants \(C,c>0\) may change from line to line, with subscripts
indicating parameter dependence. We keep \(S\) fixed throughout.
In the long-range proof, \(b=\beta h\) denotes the field parameter,
reserving \(a\) for the tail amplitude.

\section{From collision control to finite-range free energy}
\label{sec:finite-range-proof}

We prove the quantitative finite-range estimate in
Proposition~\ref{prop:finite-range-quantitative}. A comparison between
exclusion particles and free bosons gives the lower free-energy bound,
and projected free-boson Gibbs states give the upper bound.
Throughout this section, fix
\(S\in\frac12\N\) and assume
\begin{equation}
\begin{aligned}
 &J(0)=0,\qquad J(z)=J(-z)\geq0,\qquad
 |\operatorname{supp}J|<\infty,\\
 &\Gamma_J:=\operatorname{span}_{\Z}(\operatorname{supp}J)
 \text{ has rank }2,\qquad q_J:=[\Z^2:\Gamma_J].
\end{aligned}
 \label{eq:kernel-hypotheses}
\end{equation}
\begin{equation}
 q_J=1,\qquad
 j_{\rm bb}:=\min\{J(e_1),J(e_2)\}>0.
 \label{eq:backbone-hypothesis}
\end{equation}
and use the dimensionless inverse temperature
\begin{equation}
 t:=\beta S\widetilde j_J.
 \label{eq:finite-range-time}
\end{equation}
Here \(\widetilde j_J=j_J\) because \(q_J=1\).

\begin{proposition}
\label{prop:finite-range-quantitative}
There are constants \(C_{S,J}<\infty\)
and \(t_{S,J}<\infty\) such that, for \(t\geq t_{S,J}\) and \(h\geq0\),
\begin{equation}
 \left|
  \beta^2Sj_J f_J(\beta,S,h)
  +\frac{1}{4\pi}\LiTwo(e^{-\beta h})
 \right|
 \leq C_{S,J}\,t^{-1/27}(\log t)^{17}.
 \label{eq:main-quantitative}
\end{equation}
\end{proposition}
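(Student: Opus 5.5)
The proof compares the spin free energy with the free-boson Bose trace from both sides, with the two bounds proved separately, following the outline in Subsection~\ref{subsec:proof-description}. Throughout, \(t=\beta S j_J\), \(b=\beta h\), and the target is \(-\frac{1}{4\pi}\LiTwo(e^{-b})\), the value of the Bose integral \eqref{eq:intro-massive-bose-integral} after the rescaling \(p\mapsto p/\sqrt t\).

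\emph{Lower bound on \(f_J\), i.e.\ upper bound on the partition function.}
\begin{enumerate}[label=(\roman*)]
\item Localize to boxes of side \(L\) with \(L^2\asymp tg\), using the Neumann-type bracketing of \cite{CorreggiGiulianiSeiringer2015}. Dropping the cross-box bonds of the nonnegative Hamiltonian only lowers the energy, so this gives a valid lower bound on \(f_J\).
\item Inside a box, obtain the rough bound \(O(g\log^2t)\) on the logarithm of the partition function from cells of side \(m\) with \(m^2\asymp t/\log t\). This uses the sector gap estimate \(e^{-ct/m^2}\) per particle, together with the spin-multiplet counting.
\item Use the rough bound to discard all energies above \(e_0\asymp g\log^2t/t\). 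On the remaining states the particle number is at most \(K=O(L^2e_0)\).
\item Pass to the \(2S\) spin-half decoration. In each \(k\)-particle sector, with \(k\le K\), apply the collision control of Proposition~\ref{prop:abstract-graph-collision}, namely \(N_k\le Ck^6\log^3(2L)H_k^2\). This compares the exclusion eigenvalues with free-boson eigenvalues within a relative error \(\delta_t\le Cg^{13}(\log t)^{17}/t\). The field term is unchanged, as in \eqref{eq:fixed-k-field-comparison}.
\item Bound the resulting free-boson trace by Lemma~\ref{lem:finite-range-bose-trace}. The coupling of the free-box walk to the full-lattice walk, with the Gaussian bounds of \cite{Delmotte1999}, gives the bulk term \(\frac{L^2}{4\pi t}\LiTwo(e^{-b})\) up to a boundary error of relative size \(\sqrt t/L=g^{-1/2}\). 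This error is uniform in \(b\) because \(e^{-mb}\le1\).
\item Add the zero-mode and multiplet logarithms, of order \(O(\log t/g)\) after normalization.
\item Choose \(g=t^{2/27}\) to balance the collision error against the boundary error. This gives the rate \(t^{-1/27}(\log t)^{17}\).
\end{enumerate}

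\emph{Upper bound on \(f_J\).}
\begin{enumerate}[label=(\roman*)]
\item In boxes of side \(\ell\), take Dirichlet free-boson Gibbs states, projected to at most \(2S\) bosons per site as in \cite{NapiorkowskiSeiringer2021}. Lemma~\ref{lem:general-projected-entropy} bounds the projection defect by \(\sum_x\rho_x^2\).
\item The Dirichlet Green-function bound gives \(\rho_x\le C\log(2\ell)/t\). The defect is therefore \(O(\ell^2\log^2\ell/t^2)\) per box.
\item The free-Hamiltonian terms cancel between the energy and entropy estimates.
\item Separate the boxes by corridors of width equal to the interaction range. Since \(J\) has finite range, no bonds cross between boxes, and the free energy of the corridors is a negligible volume fraction.
\item With \(\ell\asymp t^{5/6}(\log t)^{-2/3}\), the error is \(O(t^{-1/3}(\log t)^{2/3})\), which is smaller than \(t^{-1/27}\).
\end{enumerate}

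\emph{Main obstacle.} The main obstacle is the collision step (iv) of the lower bound. The harmonic-extension bound, the effective-resistance estimate with its \(O(\log L)\) cost in two dimensions, and the spectral gap of the exclusion process must hold uniformly in \(k\le K\) with only polynomial losses in \(k\) and \(\log L\). The cutoffs \(e_0\) and \(K\) then have to be tuned so that \(\delta_t\) still vanishes. Everything else is bookkeeping that is uniform in \(h\ge0\).
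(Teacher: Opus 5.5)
Your proposal is correct and follows the paper's proof essentially step for step. For the lower free-energy bound (Proposition~\ref{prop:finite-range-pressure}) you use, as the paper does, bond deletion into \(L\)-boxes with \(L^2\asymp tg\), the rough cell bound, the cutoffs \(e_0,K\), the collision estimate \(N_k\le Ck^6\log^3(2L)H_k^2\), and the Bose-trace lemma balanced at \(g=t^{2/27}\); for the upper bound (Proposition~\ref{prop:finite-range-variational}) you use projected Dirichlet Gibbs states in boxes of side \(t^{5/6}(\log t)^{-2/3}\) separated by corridors of width \(R_J\), with only the normalization \(\gamma=J/j_J\) via \eqref{eq:finite-range-scaling-identity} and the exponentially small optical-band term left implicit.
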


We estimate the energy cost of extending exclusion wave functions to
configurations in which particles coincide. The resulting collision
bound, Proposition~\ref{prop:finite-range-collision}, makes the
comparison with free bosons effective at low energy.

Put
\begin{equation}
 \mathcal S_J:=\{z\in\Z^2:J(z)>0\},\qquad
 R_J:=\max_{z\in\mathcal S_J}|z|_\infty.
 \label{eq:finite-range-support}
\end{equation}
For a finite weighted graph \(G=(V,c)\), we assume
\(c_{xy}=c_{yx}\geq0\) and \(c_{xx}=0\).  The space \(\ell^2(V)\)
uses counting measure.  Write
\begin{equation}
 (K_Gu)(x):=\sum_{y\in V}c_{xy}(u(x)-u(y)),\qquad
 \cE_G(u):=\frac12\sum_{x,y\in V}c_{xy}|u(x)-u(y)|^2.
 \label{eq:weighted-graph-form}
\end{equation}
For connected \(G\), define the effective resistance by
\[
 R_{\rm eff}^G(x,y)
 :=\sup_{\substack{u:V\to\R\\ \cE_G(u)>0}}
 \frac{|u(x)-u(y)|^2}{\cE_G(u)},
 \qquad R_{\rm eff}^G(x,x):=0.
\]
Thus the normalization of resistance agrees with
\eqref{eq:weighted-graph-form}.  We write \(\gap(K_G)\) for the
smallest strictly positive eigenvalue of \(K_G\).  For a
Hilbert-space-valued function, \(\cE_G\) is defined by replacing the
absolute square in \eqref{eq:weighted-graph-form} by the squared norm.

Let \(G_L^J\) be the graph on \(B_L\) with conductances
\begin{equation}
 c^J_{xy}:=J(y-x)\1_{\{x,y\in B_L\}}.
 \label{eq:finite-range-core-graph}
\end{equation}
For fixed \(r\in\N\), write \([r]=\{1,\ldots,r\}\) and decorate
this graph by replacing each vertex with \(r\) vertices. The decorated
graph has vertex set \(B_L\times[r]\) and conductance \(J(y-x)/r\)
between each pair of fibre labels over \(x,y\).

\begin{lemma}
\label{lem:finite-range-core-geometry}
There are constants \(c_J,C_J>0\) such that, for every \(L\geq3\),
\(G_L^J\) and every one-vertex deletion of it are connected and
\begin{align}
 \gap(K_{G_L^J})&\geq c_JL^{-2},
 \label{eq:core-volume-gap}\\
 \max_{x,y\in B_L}R_{\rm eff}^{G_L^J}(x,y)
 &\leq C_J\log(2L).
 \label{eq:core-resistance}
\end{align}
Moreover, for every \(a\in B_L\), the punctured graph
\(G_L^J\setminus\{a\}\) is connected and
\begin{equation}
 \max_{x,y\in B_L\setminus\{a\}}
 R_{\rm eff}^{G_L^J\setminus\{a\}}(x,y)
 \leq C_J\log(2L).
 \label{eq:core-punctured-resistance}
\end{equation}
The same conclusions hold for the decorated graph, with constants also
depending on \(r\).
\end{lemma}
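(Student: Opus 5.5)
The plan is to compare \(G_L^J\) with the nearest-neighbour grid graph on \(B_L\), using the backbone hypothesis \eqref{eq:backbone-hypothesis}. Let \(G_L^{\rm nn}\) be the graph on \(B_L\) with unit conductances on nearest-neighbour edges. Since \(c^J_{xy}\geq j_{\rm bb}\) for every nearest-neighbour pair in \(B_L\), we have
\[
\cE_{G_L^J}(u)\geq j_{\rm bb}\,\cE_{G_L^{\rm nn}}(u).
\]
In particular \(G_L^J\) is connected. This also gives a comparison of quadratic forms with the same kernel, the constants. Both \(\gap\) and \(R_{\rm eff}\) are monotone under this comparison, because the gap is a Rayleigh-quotient infimum over functions orthogonal to constants and the resistance is a supremum of \(|u(x)-u(y)|^2/\cE(u)\). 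So \eqref{eq:core-volume-gap} and \eqref{eq:core-resistance} reduce to the classical grid estimates
\[
\gap(K_{G_L^{\rm nn}})=2(1-\cos(\pi/L))\geq cL^{-2}
\]
(the Neumann Laplacian on the box) and \(R_{\rm eff}^{G_L^{\rm nn}}(x,y)\leq C\log(2L)\). For the resistance bound I would use Thomson's principle: send unit current from \(x\) into dyadic squares of side \(2^k\) around \(x\) clipped to \(B_L\). This is the standard construction giving energy \(O(1)\) per scale and \(O(\log L)\) scales. Then route through a common corner region to \(y\), and use the triangle inequality for the resistance metric.

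For the punctured graph \(G_L^J\setminus\{a\}\), I would again discard all non-backbone bonds and work with \(G_L^{\rm nn}\setminus\{a\}\). It is connected for \(L\geq3\): removing one vertex from a grid of side at least \(3\) does not disconnect it, since corners have degree \(2\) but their two neighbours remain joined via the adjacent row. For the resistance bound, take a unit flow from \(x\) to \(y\) in the full grid with energy \(O(\log L)\), and reroute the current through \(a\) around the boundary of the \(3\times3\) (or clipped) block centred at \(a\). The current through \(a\) is at most a bounded multiple of the flow on its at most \(4\) incident edges. I would therefore build the flow so that each edge carries at most \(O(1)\) current (true for the dyadic flow; near the source it is \(O(1)\) trivially). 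The detour then raises the energy only by a constant. If \(a\in\{x,y\}\) is excluded by hypothesis this is fine; if \(a\) is adjacent to \(x\), the local rerouting still works because the block around \(a\) minus \(a\) is connected. The constants \(C_J\) absorb \(j_{\rm bb}^{-1}\).

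For the decorated graph, compare with the product of \(G_L^{\rm nn}\) and a complete fibre. The fibre contributes bonds of conductance \(J(y-x)/r\) between all labels over neighbouring base points. For the gap, write a function on \(B_L\times[r]\) as its fibre average plus a fluctuation. The average is controlled by the base gap. The fluctuation at \(x\) is controlled via the cross bonds to any neighbour \(y\): the bonds from \((x,i)\) to all \((y,j)\) control \(\sum_i|u(x,i)-\bar u(y)|^2\), hence the within-fibre variance at \(x\), with constants depending on \(r\). This gives a gap \(\geq c_{J,r}L^{-2}\). For resistance, \((x,i)\) and \((x,j)\) are joined through any neighbour's fibre at bounded resistance. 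Between fibres, spread the current uniformly over labels and lift the base flow; the energy scales by a factor depending on \(r\). Deleting one decorated vertex \((a,i)\) leaves the other labels over \(a\) in place, so it is easier than the base case: simply avoid label \(i\) at \(a\) by redistributing that current over the remaining \(r-1\) labels when \(r\geq2\), or use the base punctured argument when \(r=1\).

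The main obstacle I expect is the punctured resistance bound. It needs a flow with uniformly bounded edge currents so that the detour around \(a\) costs only \(O(1)\), and it needs care when \(a\) is a corner or boundary vertex, or adjacent to \(x\) or \(y\). Everything else is monotonicity plus textbook grid estimates.
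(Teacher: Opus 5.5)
Your proposal is correct and follows essentially the paper's route. Like the paper, you use the backbone comparison $K_{G_L^J}\geq j_{\rm bb}K_L^{\rm nn}$ with the grid Poincar\'e inequality and a dyadic Thomson flow, a local detour around the deleted vertex inside its (clipped) $3\times3$ neighbourhood, and for the decoration a fibre-average/fluctuation split with fanned and lifted base flows. The differences are cosmetic: the paper bounds the detour energy by Cauchy--Schwarz against the energy on the removed edges at $a$, so it needs no a priori bound on edge currents, and it reads the decorated gap off the exact identity $K_{G_L^J}\otimes P+M_{d_J}\otimes Q$ with $d_J\geq2j_{\rm bb}$ instead of your Jensen-type variance estimate.
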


\begin{proof}
Let \(K_L^{\rm nn}\) be the unit-conductance nearest-neighbour square
Laplacian.  The backbone in \eqref{eq:backbone-hypothesis} gives the form
inequality
\begin{equation}
 K_{G_L^J}\geq j_{\rm bb}K_L^{\rm nn}.
 \label{eq:core-path-form-comparison}
\end{equation}
The standard square Poincar\'e inequality therefore proves
\eqref{eq:core-volume-gap}.  The graph contains the nearest-neighbour
square with conductance \(j_{\rm bb}\), so Rayleigh monotonicity and the
standard dyadic-shell unit flow give \eqref{eq:core-resistance}.

Fix \(a\in B_L\).  The nearest-neighbour square with \(a\) deleted is
connected for \(L\geq3\).  To retain the resistance bound, take a
nearest-neighbour unit flow between two vertices and remove its edges
incident with \(a\).  The currents entering the at most four neighbours
of \(a\) sum to zero.  Transport the positive currents to the negative
currents along paths of length at most four in the surrounding
\(3\)-by-\(3\) square, intersected with \(B_L\) at a side or corner.
Cauchy--Schwarz bounds the energy of this local detour by a fixed multiple
of the energy on the removed incident edges.  The resulting flow avoids
\(a\) and has energy at most \(C\log(2L)\).  Rayleigh monotonicity then
proves \eqref{eq:core-punctured-resistance} for \(G_L^J\).

For the decorated graph let \(P\) be the projection onto the uniform
fibre vector and \(Q=I-P\).  Its one-particle Laplacian is
\begin{equation}
 K_{G_L^J}\otimes P+M_{d_J}\otimes Q,
 \qquad d_J(x)=\sum_{y\in B_L}J(y-x)\geq2j_{\rm bb}.
 \label{eq:decorated-backbone-split}
\end{equation}
Here \(M_d\) denotes multiplication by \(d\). This proves the decorated gap.  For resistance, fan a unit current from
one fibre vertex uniformly through the fibres over a nearest-neighbouring
base vertex, route the resulting base current through the preceding
flows, and reverse the fan at the endpoint.  The two fans have bounded
energy and the routed part has energy \(O_{J,r}(\log L)\).  After deleting
one decorated vertex, use the punctured base flow when \(r=1\).  When
\(r\geq2\), redistribute the current over the remaining \(r-1\) fibres
above that base point.  Concretely, choose one nearest-neighbouring base
vertex and replace every current through the deleted fibre by two-edge
paths through its surviving fibres.  The conductances on these paths are
at least \(j_{\rm bb}/r\).  At most \(r\) paths are used, so
Cauchy--Schwarz increases their energy by only an \(r\)-dependent factor.
This proves the punctured decorated bound.
\end{proof}

We first record the point trace that replaces free-boundary reflection.
Let \(G=(V,c)\) be a finite connected weighted graph, set \(n=|V|\)
and \(\rho(G):=\max_{x,y}R_{\rm eff}^G(x,y)\), and let
\(\mathcal K\) be a Hilbert space.

\begin{lemma}
\label{lem:resistance-point-trace}
For every \(g:V\to\mathcal K\) and \(x\in V\),
\begin{equation}
 \|g(x)\|_{\mathcal K}^2
 \leq\frac2n\sum_{y\in V}\|g(y)\|_{\mathcal K}^2
      +2\rho(G)\cE_G(g).
 \label{eq:resistance-point-trace}
\end{equation}
\end{lemma}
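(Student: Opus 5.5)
We compare \(g(x)\) with the spatial average \(\bar g:=\frac1n\sum_{y\in V}g(y)\in\mathcal K\) and then pass from \(g(x)\) to each \(g(y)\) through the resistance. Concretely, we write \(g(x)=\bar g+(g(x)-\bar g)\) and use \(\|u+v\|^2\leq2\|u\|^2+2\|v\|^2\), giving
\[
 \|g(x)\|^2\leq2\|\bar g\|^2+2\|g(x)-\bar g\|^2 .
\]
By convexity of the squared norm (Jensen), \(\|\bar g\|^2\leq\frac1n\sum_y\|g(y)\|^2\), which is the first term of \eqref{eq:resistance-point-trace}. It therefore remains to show \(\|g(x)-\bar g\|^2\leq\rho(G)\cE_G(g)\).

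For the remaining term we again use convexity: \(g(x)-\bar g=\frac1n\sum_y(g(x)-g(y))\), so
\[
 \|g(x)-\bar g\|^2\leq\frac1n\sum_y\|g(x)-g(y)\|^2 .
\]
It suffices to prove the pointwise Hilbert-space resistance bound
\[
 \|g(x)-g(y)\|_{\mathcal K}^2\leq R_{\rm eff}^G(x,y)\,\cE_G(g)\leq\rho(G)\cE_G(g).
\]
For real scalar \(u\) this is the definition of \(R_{\rm eff}^G\). The case \(\cE_G(u)=0\) is trivial by connectedness, since then \(u\) is constant and both sides vanish.

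The only point needing care is the extension to \(\mathcal K\)-valued functions. If \(g(x)=g(y)\) there is nothing to prove. Otherwise we set \(e:=(g(x)-g(y))/\|g(x)-g(y)\|\) and consider the real function \(u(z):=\operatorname{Re}\langle e,g(z)\rangle\). Then
\[
 u(x)-u(y)=\|g(x)-g(y)\|,
\]
while \(|u(z)-u(w)|\leq\|g(z)-g(w)\|\) for all \(z,w\) by Cauchy--Schwarz, hence \(\cE_G(u)\leq\cE_G(g)\). Applying the scalar definition of the effective resistance to \(u\) gives the claim. If instead one wants to avoid the real-part projection (for example when \(\mathcal K\) is complex), an alternative is to expand in an orthonormal basis and sum the scalar inequality over coordinates. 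Combining the three displays yields \eqref{eq:resistance-point-trace}, with the constant \(2\) on both terms coming from the initial splitting. No step is a serious obstacle; the vector-valued reduction is the one to write out carefully.
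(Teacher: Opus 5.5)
Your proposal is correct and follows essentially the same route as the paper: split \(g(x)\) against the mean \(\overline g\), bound \(\|\overline g\|^2\) by Jensen, and average the pointwise bound \(\|g(x)-g(y)\|^2\leq R_{\rm eff}^G(x,y)\cE_G(g)\) over \(y\). You obtain that pointwise bound by scalar projection onto a unit vector, which is exactly how the paper does it; your version just spells out the choice of direction and the degenerate cases.
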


\begin{proof}
Let \(\overline g=n^{-1}\sum_yg(y)\).  Jensen's inequality gives
\(\|\overline g\|^2\leq n^{-1}\sum_y\|g(y)\|^2\).  The variational
definition of resistance, applied after scalar projection onto an
arbitrary unit vector of \(\mathcal K\), gives
\(\|g(x)-g(y)\|^2\leq R_{\rm eff}^G(x,y)\cE_G(g)\).
Average this inequality over \(y\) and use
\(\|g(x)\|^2\leq2\|\overline g\|^2+
2\|g(x)-\overline g\|^2\).
\end{proof}

For the weighted-graph comparison on \(G\), write \(K=K_G\),
\(\lambda=\gap(K)\), and \(\Delta=\max_x\sum_yc_{xy}\). Put
\begin{equation}
 \rho_\circ(G):=\max_{a\in V}
 \max_{x,y\in V\setminus\{a\}}
 R_{\rm eff}^{G\setminus\{a\}}(x,y),
 \label{eq:punctured-resistance-definition}
\end{equation}
with value \(+\infty\) if a punctured graph is disconnected. For
\(1\leq k\leq n/2\), the exclusion generator on \(k\)-subsets is
\begin{equation}
 (H_{G,k}f)(A)=\sum_{\substack{x\in A,\ y\notin A}}c_{xy}
 \bigl[f(A)-f(A\setminus\{x\}\cup\{y\})\bigr].
 \label{eq:weighted-exclusion-generator}
\end{equation}
With counting measure, identify this space unitarily with the symmetric
functions on
\(D:=\{(x_1,\ldots,x_k)\in V^k:x_i\ne x_j\text{ for }i\ne j\}\)
by sending \(f(A)\) to \(f(\{x_1,\ldots,x_k\})/\sqrt{k!}\).
For \(k\geq2\), the product operator \(L_0=\sum_iK_i\), decomposed over
\(D\) and its complement, has the form
\begin{equation}
 L_0=\begin{pmatrix}H_{G,k}+R_{G,k}&B\\B^*&C\end{pmatrix},
 \qquad (R_{G,k}f)(A)=\Bigl(\sum_{x,y\in A}c_{xy}\Bigr)f(A).
 \label{eq:product-block-matrix}
\end{equation}
Connectedness implies \(C>0\). Define harmonic extension and the
collision operator by
\begin{equation}
 Ef=\binom f{-C^{-1}B^*f},\qquad
 N_{G,k}:=R_{G,k}-BC^{-1}B^*.
 \label{eq:weighted-collision-operator}
\end{equation}
For \(k=1\), set \(E=I\) and \(N_{G,1}=0\). The Schur complement and
splitting the product form into edges within \(D\) and all other edges
give
\begin{align}
 \|Ef\|&\geq\|f\|,
 \label{eq:extension-norm}\\
 \langle Ef,L_0Ef\rangle&=\langle f,(H_{G,k}+N_{G,k})f\rangle,
 \label{eq:extension-energy}\\
 L_0Ef&=\binom{(H_{G,k}+N_{G,k})f}{0},
 \label{eq:extension-intertwining}\\
 q_{\rm bad}(Ef)&=\langle f,N_{G,k}f\rangle,
 \qquad 0\leq N_{G,k}\leq R_{G,k},
 \label{eq:collision-as-bad-energy}
\end{align}
where \(q_{\rm bad}\) is the product-form energy of edges with at least
one endpoint outside \(D\).

Assume in addition that \(\rho_\circ(G)<\infty\). The following
estimate bounds the collision correction by the square of the exclusion
energy.

\begin{proposition}
\label{prop:abstract-graph-collision}
For every \(1\leq k\leq n/2\),
\begin{align}
 0\leq N_{G,k}
 \leq{}&Ck^2\left(\frac1{n\lambda}+\rho(G)\right)
 \notag\\[-1mm]
 &\times\left[1+C\Delta\left(
       \frac{k^2}{n\lambda}+k\rho_\circ(G)\right)\right]^2
 H_{G,k}^2.
 \label{eq:abstract-graph-collision}
\end{align}
The constant \(C\) is numerical, and the inequality is understood in the
quadratic-form sense.
\end{proposition}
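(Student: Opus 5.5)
Since $\langle f,N_{G,k}f\rangle=q_{\rm bad}(Ef)$ by \eqref{eq:collision-as-bad-energy}, the plan is to bound this bad-edge energy by a quantity quadratic in $H_{G,k}$. We work with $f$ in the range of $H_{G,k}$ (on the kernel $N_{G,k}f$ should vanish, because constants extend to constants with no bad energy), and write $f=H_{G,k}g$. The proof has three stages: a first-order collision bound, a quadratic bound on the harmonic extension, and a combination of the two.

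\textbf{Stage 1: first-order collision bound.} The target is
\[
 q_{\rm bad}(Ef)\le Ck^2\Bigl(\tfrac1{n\lambda}+\rho(G)\Bigr)\,\Phi(f),
\]
where $\Phi(f)$ is a quadratic quantity controlled by $\|H_{G,k}f\|^2$ together with boundary values of $f$ near collisions.

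\begin{enumerate}
\item Decompose the bad edges. Each bad edge joins a configuration with $x_i=x_j$ to a neighbour, so the bad energy splits over the $O(k^2)$ pairs $(i,j)$.
\item Freeze all other coordinates. For a fixed pair, view the relevant function as a $\mathcal K$-valued function of one free coordinate on $G$, where $\mathcal K$ collects the remaining coordinates.
\item Apply the point trace. Lemma~\ref{lem:resistance-point-trace} bounds the value at the collision point by $2/n$ times the spatial average plus $2\rho(G)$ times the one-particle energy.
\item Control the averages. Sum the spatial averages over all frozen configurations. The spectral gap $\lambda$ of the exclusion process, which is the one-particle gap by Caputo--Liggett--Richthammer, turns the mean-zero part into $(n\lambda)^{-1}$ times energy.
\end{enumerate}

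\textbf{Stage 2: quadratic bound on the extension.} Write $Ef=(f,-C^{-1}B^*f)$. By \eqref{eq:extension-intertwining}, $L_0Ef$ is supported on $D$ and equals $(H_{G,k}+N_{G,k})f$ there, so the collision component $u=-C^{-1}B^*f$ is a harmonic extension. The step is as follows.

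\begin{enumerate}
\item Bound $u$ at collision configurations. Use the values of $f$ on neighbouring configurations in $D$, weighted by $C^{-1}$.
\item Insert Stage 1 with $f$ replaced by $H_{G,k}f$-type quantities, which is where the squared power $H_{G,k}^2$ arises.
\item Collect the error terms. The terms that do not reduce directly involve values of $f$ on configurations adjacent to a collision, i.e.\ one particle sitting next to another.
\end{enumerate}

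\textbf{Stage 3: adjacent configurations and combination.} This is the main obstacle: controlling $\sum_x|f(\ldots,x,\ldots,y,\ldots)|^2$ over occupied neighbouring pairs $c_{xy}>0$, with a factor $\Delta$, by energy.

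\begin{enumerate}
\item Fix the occupied vertex $a$ and move the other particle on the punctured graph $G\setminus\{a\}$.
\item Apply Lemma~\ref{lem:resistance-point-trace} on that graph. This costs $\rho_\circ(G)$ per particle, so $k\rho_\circ(G)$ after summing over which particle is adjacent, and contributes $k^2/(n\lambda)$ from the averages via Chen's moving-particle comparison.
\item Obtain the first-order control: $\|$adjacent values$\|^2\le C\Delta\bigl(\frac{k^2}{n\lambda}+k\rho_\circ\bigr)\langle f,H_{G,k}f\rangle$ plus $\|f\|^2$-type terms.
\item Feed this back into Stage 2. Applying it to $H_{G,k}^{1/2}$-images and using the Schur-complement bound \eqref{eq:extension-energy} produces the bracket $\bigl[1+C\Delta(\cdot)\bigr]$. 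This bracket appears squared because the extension operator enters $q_{\rm bad}$ quadratically.
\end{enumerate}

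I expect the main difficulty to be the bookkeeping here, namely making sure Chen's estimate applies on the punctured graph uniformly in the frozen configuration. I would first try to prove it directly, by telescoping along a unit flow of minimal energy in $G\setminus\{a\}$ and applying Cauchy--Schwarz.
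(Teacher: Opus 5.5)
Your ingredients are the right ones, but the way you combine them leaves a genuine gap, and it sits exactly where the squared bracket comes from.

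\textbf{The absorption step.} The point-trace step should be proved for an \emph{arbitrary} product-space function $F$. Freeze the spectators and apply Lemma~\ref{lem:resistance-point-trace} to the coordinate-$i$ gradient, viewed as a function of $x_j$. This gives $q_{\rm bad}(F)\le a_G\|L_0F\|^2$.
\begin{itemize}
\item The mean term is controlled by $K_i\le\lambda^{-1}K_i^2\le\lambda^{-1}L_0^2$. This is the one-particle gap; Caputo--Liggett--Richthammer plays no role here.
\item The tangential edges are aggregated through $K_jA_{\rm sp}$, which avoids an extra factor $k$.
\end{itemize}
Applied to $F=Ef$, the identities \eqref{eq:extension-intertwining}--\eqref{eq:collision-as-bad-energy} give the \emph{implicit} inequality $N_{G,k}\le a_G(H_{G,k}+N_{G,k})^2$. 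No pointwise bound on $u=-C^{-1}B^*f$ is used. Such a bound would need Green-function control of the nonlocal $C^{-1}$ on the collision set, which you do not have.

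To close the argument you need the first-order bound $N_{G,k}\le b_GH_{G,k}$ together with a non-commutative absorption step. The remark ``the extension enters $q_{\rm bad}$ quadratically'' does not supply that step. Indeed, $0\le N\le bH$ does not give $\|(H+N)f\|\le(1+b)\|Hf\|$. Take $H=\operatorname{diag}(\epsilon,1)$ and $N=H^{1/2}\mathcal KH^{1/2}$, with $\mathcal K$ equal to $b$ times the projection onto $(1,1)/\sqrt2$. Then $N\le bH$, but $\|NH^{-1}\|\ge b/(2\sqrt\epsilon)$.

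The paper instead sets $\mathcal K=H^{-1/2}NH^{-1/2}$ off the constants, so that $0\le\mathcal K\le b_G$. Congruence of the implicit inequality with $H^{-1/2}$ and then with $(I+\mathcal K)^{-1}$ gives $\mathcal K(I+\mathcal K)^{-2}\le a_GH$. The scalar bound $x\le(1+b_G)^2x(1+x)^{-2}$ on $[0,b_G]$ then yields $N\le a_G(1+b_G)^2H^2$.

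\textbf{Stage 3.} If you freeze the remaining particles and apply Lemma~\ref{lem:resistance-point-trace} on $G\setminus\{a\}$, the argument fails. With the others frozen, the moving particle lives on $G$ minus \emph{all} occupied vertices. Its resistance is not controlled by $\rho_\circ(G)$ and may be infinite. For the same reason, Chen's estimate cannot hold uniformly in a frozen configuration.

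The paper proceeds as follows.
\begin{itemize}
\item It averages the Jensen split \eqref{eq:abstract-contact-jensen} with weight $d_A(b)$.
\item The mean term is at most $2\Delta k^2n^{-1}\|f\|^2$. It becomes $(n\lambda)^{-1}$ times exclusion energy only through the exclusion gap, and this is where Caputo--Liggett--Richthammer enters.
\item For the fluctuation term it conditions only on the anchor $a\in A$. Then $A\setminus\{a\}$ is uniform on the $(k-1)$-subsets of $V\setminus\{a\}$, and Chen's \emph{averaged} lemma gives the cost $R_{\rm eff}^{G\setminus\{a\}}(b,y)\,\cE_a(f_a)$.
\item The exact identity \eqref{eq:anchored-energy-identity} then produces the factor $k$.
\end{itemize}

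Your fallback, proving Chen's lemma by telescoping along a minimal-energy flow with Cauchy--Schwarz, does not reach the effective resistance. For exclusion functions, a transposition along a path is a composition of moves, not a linear combination of edge increments. Superposing paths therefore only gives a path-length-times-congestion constant. On two-dimensional boxes that constant is far larger than $\rho_\circ(G)$.
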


\begin{proof}
For \(i<j\), denote by \(q_{ij}(F)\) the part of the product Dirichlet
form carried by product edges having an endpoint on
\(\{x_i=x_j\}\).  We claim that
\begin{equation}
 q_{ij}(F)\leq C\left(\frac1{n\lambda}+\rho(G)\right)
 \|L_0F\|^2.
 \label{eq:single-diagonal-point-trace}
\end{equation}
Consider first an edge in coordinate \(i\), orient its base edge
\(e=(x,x')\) from the diagonal endpoint, freeze all other variables, and
apply Lemma~\ref{lem:resistance-point-trace} to
\begin{equation}
 g(y):=\sqrt{c_{xx'}}
 \bigl[F(\ldots,x',\ldots,y,\ldots)
       -F(\ldots,x,\ldots,y,\ldots)\bigr].
 \label{eq:normal-point-trace-function}
\end{equation}
After summation, the mean term is at most
\begin{equation}
 \frac Cn\langle F,K_iF\rangle
 \leq\frac C{n\lambda}\|L_0F\|^2,
 \label{eq:normal-point-trace-mean}
\end{equation}
because \(K_i\leq\lambda^{-1}K_i^2\leq
\lambda^{-1}L_0^2\).  The fluctuation term is
\begin{equation}
 C\rho(G)\langle F,K_iK_jF\rangle
 \leq\frac{C\rho(G)}4\|L_0F\|^2.
 \label{eq:normal-point-trace-fluctuation}
\end{equation}
The channel in coordinate \(j\) is identical.  For tangential edges one
must sum the spectator coordinates before estimating.  With
\begin{equation}
 A_{\rm sp}:=\sum_{\ell\ne i,j}K_\ell,
 \label{eq:spectator-operator}
\end{equation}
their total mean term is bounded as in
\eqref{eq:normal-point-trace-mean}, while their fluctuation term is
\begin{equation}
 C\rho(G)\langle F,K_jA_{\rm sp}F\rangle
 \leq\frac{C\rho(G)}4
 \langle F,(K_j+A_{\rm sp})^2F\rangle
 \leq\frac{C\rho(G)}4\|L_0F\|^2.
 \label{eq:tangential-point-trace-fluctuation}
\end{equation}
This aggregation is what prevents an extra factor \(k\).  Equations
\eqref{eq:normal-point-trace-mean}--
\eqref{eq:tangential-point-trace-fluctuation} prove
\eqref{eq:single-diagonal-point-trace}.  Summing over fewer than \(k^2/2\)
pairs gives
\begin{equation}
 q_{\rm bad}(F)\leq a_G\|L_0F\|^2,
 \qquad
 a_G:=Ck^2\left(\frac1{n\lambda}+\rho(G)\right).
 \label{eq:abstract-bad-edge-bound}
\end{equation}

We next prove the first-order estimate, including its normalization.  Let
\(\nu_k\) be the uniform probability measure on the \(k\)-subsets of
\(V\), and use the corresponding normalized inner product.  Since this
only multiplies the counting inner product by a scalar, the resulting
operator inequalities are unchanged.  Put \(N_k=\binom nk\),
\(q=n-k+1\), and, for \(b\in A\), set \(B=A\setminus\{b\}\) and
\begin{equation}
 \overline f_B:=\frac1q\sum_{y\notin B}f(B\cup\{y\}).
 \label{eq:abstract-contact-average}
\end{equation}
Jensen's inequality gives
\begin{equation}
 |f(A)|^2\leq2|\overline f_B|^2
 +\frac2q\sum_{y\notin B}
 |f(A)-f(B\cup\{y\})|^2.
 \label{eq:abstract-contact-jensen}
\end{equation}
For \(d_A(b):=\sum_{a\in A}c_{ab}\), average
\eqref{eq:abstract-contact-jensen} with weight \(d_A(b)\) over
\(A\) and \(b\in A\).  Denote by \(M\) and \(F\) the averaged
contributions of the mean and fluctuation terms in
\eqref{eq:abstract-contact-jensen}, respectively.  This yields
\begin{equation}
 \langle f,R_{G,k}f\rangle_{\nu_k}\leq2M+2F.
 \label{eq:abstract-contact-split}
\end{equation}
The change of variables \(A=B\cup\{b\}\), followed by Jensen once
more, gives
\begin{align}
 M
 &\leq\frac1{N_kq}
 \sum_{|B|=k-1}\left(\sum_{b\notin B}\sum_{a\in B}c_{ab}\right)
 \sum_{y\notin B}|f(B\cup\{y\})|^2\notag\\
 &\leq\frac{\Delta k(k-1)}q\|f\|_{\nu_k}^2
 \leq\frac{2\Delta k^2}{n}\|f\|_{\nu_k}^2.
 \label{eq:abstract-contact-mean}
\end{align}

For the fluctuation term use the occupied neighbour \(a\) of \(b\) as
an anchor.  Conditional on \(a\in A\), the set
\(\zeta=A\setminus\{a\}\) is uniform among the \((k-1)\)-subsets of
\(V\setminus\{a\}\).  Denote this law by \(\nu_{k-1}^{(a)}\), put
\(f_a(\zeta)=f(\zeta\cup\{a\})\), and set
\begin{equation}
 \cE_a(f_a):=
 \langle f_a,H_{G\setminus\{a\},k-1}f_a\rangle_{\nu_{k-1}^{(a)}}.
 \label{eq:abstract-conditioned-energy}
\end{equation}
Directly rearranging the fluctuation sum in
\eqref{eq:abstract-contact-split} gives
\begin{align}
 F={}&\frac{k}{nq}\sum_{a\in V}\sum_{b\ne a}c_{ab}
 \sum_{y\ne a}\mathbb E_{\nu_{k-1}^{(a)}}\!\left[
 \1_{\{b\in\zeta,\ y\notin\zeta\}}
 \left|f_a(\zeta\setminus\{b\}\cup\{y\})-f_a(\zeta)\right|^2
 \right].
 \label{eq:abstract-contact-conditioned}
\end{align}
Chen's weighted moving-particle lemma \cite{Chen2017}
bounds the expectation in \eqref{eq:abstract-contact-conditioned} by
\(R_{\rm eff}^{G\setminus\{a\}}(b,y)\cE_a(f_a)\).  Indeed, swap invariance
identifies the one-orientation indicator expectation in
\eqref{eq:abstract-contact-conditioned} with one half of the full swap
expectation in Chen's normalization, so no extra factor is present.  Hence,
using
\(q\geq n/2\) and \(\sum_bc_{ab}\leq\Delta\),
\begin{equation}
 F\leq C\frac{k\Delta\rho_\circ(G)}n
       \sum_{a\in V}\cE_a(f_a).
 \label{eq:abstract-contact-fluctuation}
\end{equation}
Every allowed exclusion move is present in the conditioned energy for
exactly \(k-1\) occupied anchors.  Since
\(\binom nk/\binom{n-1}{k-1}=n/k\), the normalized forms satisfy the
exact identity
\begin{equation}
 \frac{k}{n}\sum_{a\in V}\cE_a(f_a)
 =(k-1)\langle f,H_{G,k}f\rangle_{\nu_k}.
 \label{eq:anchored-energy-identity}
\end{equation}
Combining \eqref{eq:abstract-contact-split},
\eqref{eq:abstract-contact-mean},
\eqref{eq:abstract-contact-fluctuation}, and
\eqref{eq:anchored-energy-identity} proves
\begin{equation}
 \langle f,R_{G,k}f\rangle_{\nu_k}
 \leq C\Delta\frac{k^2}{n}\|f\|_{\nu_k}^2
      +C\Delta k\rho_\circ(G)
       \langle f,H_{G,k}f\rangle_{\nu_k}.
 \label{eq:abstract-contact-bound}
\end{equation}

The exclusion gap equals \(\lambda\): the lower bound follows from the
interchange-process theorem of Caputo, Liggett and Richthammer
\cite{CaputoLiggettRichthammer2010}, and the reverse bound is obtained by
lifting a one-particle eigenfunction.  Since \(N_{G,k}\leq R_{G,k}\),
\eqref{eq:abstract-contact-bound} therefore implies, off the constants,
\begin{equation}
 N_{G,k}\leq b_GH_{G,k},\qquad
 b_G:=C\Delta\left(\frac{k^2}{n\lambda}
                         +k\rho_\circ(G)\right).
 \label{eq:abstract-first-order-collision}
\end{equation}

Apply \eqref{eq:abstract-bad-edge-bound} to the harmonic extension of
\(f\).  The identities \eqref{eq:extension-energy}--
\eqref{eq:collision-as-bad-energy} are purely graph-theoretic and give
\begin{equation}
 N_{G,k}\leq a_G(H_{G,k}+N_{G,k})^2.
 \label{eq:abstract-preabsorbed-collision}
\end{equation}
Put \(\mathcal K=H_{G,k}^{-1/2}N_{G,k}H_{G,k}^{-1/2}\) off the constants.
Then \(0\leq\mathcal K\leq b_G\), and congruence of
\eqref{eq:abstract-preabsorbed-collision}, followed by congruence with
\((I+\mathcal K)^{-1}\), yields
\(\mathcal K(I+\mathcal K)^{-2}\leq a_GH_{G,k}\).
The scalar inequality
\(x\leq(1+b_G)^2x(1+x)^{-2}\) for \(0\leq x\leq b_G\)
proves \eqref{eq:abstract-graph-collision}.  Both sides vanish on the
constants, so the result holds on the full space.
\end{proof}

For the decorated free box of Lemma~\ref{lem:finite-range-core-geometry}
with \(r=2S\), denote the exclusion and collision operators by
\(H_{L,k}^{(r,J)}\) and \(N_{L,k}^{(r,J)}\), respectively.

\begin{proposition}
\label{prop:finite-range-collision}
For \(L\geq3\) and \(1\leq k\leq rL^2/2\),
\begin{equation}
 0\leq N_{L,k}^{(r,J)}
 \leq C_{S,J}k^6\log^3(2L)
       \bigl(H_{L,k}^{(r,J)}\bigr)^2.
 \label{eq:finite-range-quadratic-collision}
\end{equation}
\end{proposition}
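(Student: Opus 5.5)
The plan is to apply Proposition~\ref{prop:abstract-graph-collision} to the decorated free box \(G\) with vertex set \(B_L\times[r]\), \(r=2S\), and to insert the geometric bounds from Lemma~\ref{lem:finite-range-core-geometry}. Here \(n=rL^2\), so the hypothesis \(1\leq k\leq rL^2/2\) is exactly the range \(1\leq k\leq n/2\) required there. The decorated exclusion and collision operators are, by definition, \(H_{G,k}\) and \(N_{G,k}\) for this graph, so only the parameters \(\lambda\), \(\rho(G)\), \(\rho_\circ(G)\) and \(\Delta\) need to be estimated.

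\emph{Step 1: bound the parameters.} The decorated gap bound in Lemma~\ref{lem:finite-range-core-geometry} (through the split \eqref{eq:decorated-backbone-split}) gives \(\lambda\geq c_{S,J}L^{-2}\). Hence
\[
 \frac1{n\lambda}\leq C_{S,J},\qquad \frac{k^2}{n\lambda}\leq C_{S,J}k^2 .
\]
The decorated versions of \eqref{eq:core-resistance} and \eqref{eq:core-punctured-resistance} give \(\rho(G)\leq C_{S,J}\log(2L)\) and \(\rho_\circ(G)\leq C_{S,J}\log(2L)\). In particular every punctured graph is connected, so \(\rho_\circ(G)<\infty\) and the proposition applies. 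Finally, the weighted degree of a decorated vertex \((x,i)\) is
\[
 \sum_{y}\sum_{j}\frac{J(y-x)}{r}\leq\sum_zJ(z),
\]
so \(\Delta\leq C_J\). This uses only finiteness of the support, and \(L\geq3\) is the range in which the lemma holds.

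\emph{Step 2: insert the bounds.} For \(L\geq3\) we have \(\log(2L)\geq1\). The first factor in \eqref{eq:abstract-graph-collision} is then at most
\[
 Ck^2\bigl(C+C\log(2L)\bigr)\leq C_{S,J}k^2\log(2L).
\]
Inside the bracket,
\[
 1+C\Delta\bigl(Ck^2+Ck\log(2L)\bigr)\leq C_{S,J}k^2\log(2L),
\]
using \(k\geq1\) and \(\log(2L)\geq1\). Squaring the bracket and multiplying by the first factor gives \(C_{S,J}k^6\log^3(2L)\), which is \eqref{eq:finite-range-quadratic-collision}. The lower bound \(0\leq N\) is already contained in \eqref{eq:collision-as-bad-energy}. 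For \(k=1\) the inequality is trivial because \(N_{G,1}=0\).

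\emph{Main obstacle.} Given the earlier results, this deduction is mechanical. The delicate input is the punctured decorated resistance bound, which is needed so that \(\rho_\circ(G)\) is finite and logarithmic. It is supplied by Lemma~\ref{lem:finite-range-core-geometry}. I would check that its constants depend only on \(S\) and \(J\) once \(r=2S\) is fixed, and that its proof covers a deletion anywhere in the box, including at the boundary and at corners. The remaining care is that the constant \(C\) in Proposition~\ref{prop:abstract-graph-collision} is numerical, so all the dependence on \(S\) and \(J\) enters only through \(\lambda\), \(\rho\), \(\rho_\circ\) and \(\Delta\).
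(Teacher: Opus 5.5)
Your proposal is correct and follows the paper's proof. Like the paper, you insert \(n=rL^2\), \(\lambda\geq c_{S,J}L^{-2}\), \(\rho,\rho_\circ\leq C_{S,J}\log(2L)\) and \(\Delta\leq C_{S,J}\) from Lemma~\ref{lem:finite-range-core-geometry} into Proposition~\ref{prop:abstract-graph-collision}, then bound the prefactor by \(C_{S,J}k^2\log(2L)\) and the bracket by \(C_{S,J}k^2\log(2L)\) using \(k\geq1\) and \(\log(2L)\geq1\), treating \(k=1\) separately.
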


\begin{proof}
For the decorated free-box graph, Lemma~\ref{lem:finite-range-core-geometry}
gives \(n=rL^2\), \(\lambda\geq c_{S,J}L^{-2}\),
\(\rho,\rho_\circ\leq C_{S,J}\log(2L)\), and
\(\Delta\leq C_{S,J}\).  Proposition~\ref{prop:abstract-graph-collision}
then gives
\begin{align*}
 N_{L,k}^{(r,J)}
 &\leq C_{S,J}k^2\log(2L)
       \bigl(1+C_{S,J}k^2\log(2L)\bigr)^2
       \bigl(H_{L,k}^{(r,J)}\bigr)^2\\
 &\leq C_{S,J}k^6\log^3(2L)
       \bigl(H_{L,k}^{(r,J)}\bigr)^2,
\end{align*}
where the case \(k=1\) is immediate.
\end{proof}

We convert the collision estimate of
Proposition~\ref{prop:finite-range-collision} into the lower
free-energy bound of Proposition~\ref{prop:finite-range-pressure}.
We first estimate the free-boson traces, then separate off the
high-energy states and compare the remaining states with free bosons.

Let \(K_{J,L}^N:=K_{G_L^J}\) be the censored, or free-boundary,
Laplacian on \(B_L\).  Let \(h_{J,L}^D\) be the zero-exterior
Dirichlet operator on \(\ell^2(B_L)\), defined by
\begin{equation}
 \langle u,h_{J,L}^Du\rangle
 :=\frac12\sum_{x\in\Z^2}\sum_{z\in\Z^2}J(z)
 |\widetilde u(x+z)-\widetilde u(x)|^2,
 \label{eq:finite-range-dirichlet-form}
\end{equation}
where \(\widetilde u=0\) off \(B_L\).  Finally set
\begin{equation}
 I_J(u):=\frac1{(2\pi)^2}\int_{[-\pi,\pi]^2}
 e^{-u\varepsilon_J(p)}\,dp.
 \label{eq:infinite-heat-return}
\end{equation}

\begin{lemma}
\label{lem:finite-range-heat-trace}
There are \(c_J,C_J>0\) such that, for \(L\geq3\) and
\(1\leq u\leq L^2\),
\begin{align}
 \Tr e^{-uK_{J,L}^N}-1
 &\leq L^2I_J(u)+C_J\left(\frac L{\sqrt u}+1\right),
 \label{eq:finite-range-neumann-heat}\\
 \Tr e^{-uh_{J,L}^D}
 &\geq L^2I_J(u)-C_J\left(\frac L{\sqrt u}+1\right).
 \label{eq:finite-range-dirichlet-heat}
\end{align}
Moreover,
\begin{equation}
 I_J(u)=\frac1{4\pi u\sqrt{\det A_J}}+O_J(u^{-2}),
 \qquad u\geq1,
 \label{eq:finite-range-local-clt}
\end{equation}
and, for \(u\geq L^2\),
\begin{equation}
 \Tr_{\1^\perp}e^{-uK_{J,L}^N}
 +\Tr e^{-uh_{J,L}^D}
 \leq C_J e^{-c_Ju/L^2}.
 \label{eq:finite-range-large-time-heat}
\end{equation}
\end{lemma}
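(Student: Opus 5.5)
We prove the four estimates of Lemma~\ref{lem:finite-range-heat-trace} separately. The local expansion \eqref{eq:finite-range-local-clt} comes from Fourier analysis. The two box bounds \eqref{eq:finite-range-neumann-heat}--\eqref{eq:finite-range-dirichlet-heat} come from a coupling of continuous-time random walks. The large-time bound \eqref{eq:finite-range-large-time-heat} comes from spectral gaps.

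For \eqref{eq:finite-range-local-clt}, use $q_J=1$: then $\varepsilon_J$ vanishes only at $p=0$ on the torus. Finite range makes $\varepsilon_J$ real-analytic and even, with
\[
 \varepsilon_J(p)=p^{\mathrm T}A_Jp+O(|p|^4)
\]
and $\varepsilon_J\geq c|p|^2$. Cut the integral at $|p|\leq u^{-1/2+\epsilon}$. Outside this ball the contribution is $O(e^{-cu^{2\epsilon}})$. Inside, write
\[
 e^{-u\varepsilon_J}=e^{-up^{\mathrm T}A_Jp}\bigl(1+O(u|p|^4)\bigr).
\]
The error term integrates to $O(u\cdot u^{-3})=O(u^{-2})$. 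The Gaussian over $\R^2$ gives $(4\pi u\sqrt{\det A_J})^{-1}$, and the Gaussian tail is negligible. For $u$ in a bounded range, the claim is trivial.

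For \eqref{eq:finite-range-neumann-heat} and \eqref{eq:finite-range-dirichlet-heat}, write the traces as sums of diagonal return probabilities. Let $p_u^{\Z^2}(0,0)=I_J(u)$ be the full-lattice kernel, $p^N_u$ the censored walk (jumps leaving $B_L$ are rejected), and $p^D_u$ the walk killed at its first attempted exit. Run all three on one probability space, driven by the same Poisson clocks and jump choices. The Dirichlet walk equals the full walk up to the first attempted exit $\tau$, and the censored walk also equals it up to $\tau$. This gives
\[
 p^D_u(x,x)\geq I_J(u)-\mathbb P_x(\tau\leq u,\,X_u=x).
\]
The censored walk does not coincide with the full walk after $\tau$. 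Instead we bound
\[
 p^N_u(x,x)\leq I_J(u)+\mathbb P_x\bigl(\tau\leq u,\ X^N_u=x\bigr).
\]
To estimate this probability, apply the strong Markov property at $\tau$, when the walk sits within distance $R_J$ of $\partial B_L$. Then use the Gaussian upper bound $p^N_s(y,x)\leq Cs^{-1}e^{-c|x-y|^2/s}$ for the censored box walk. That bound holds by \cite{Delmotte1999}: the graph $G_L^J$ has volume doubling and a Poincar\'e inequality, uniformly in $L$, via the backbone comparison \eqref{eq:core-path-form-comparison}.

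Reversibility lets us reverse returning paths. The event of reaching the boundary strip and then returning to $x$ by time $u$ splits into two halves, each of length at least $u/2$ or a first half of length $s$. This gives
\[
 \mathbb P_x(\tau\leq u,\ X_u=x)\leq Cu^{-1}e^{-c\,d(x,\partial B_L)^2/u}
\]
for $d(x,\partial B_L)\geq 1$, up to an additive term from short times. Summing over $x\in B_L$, the layers at distance $j$ contain $O(L)$ sites. Hence
\[
 \sum_x u^{-1}e^{-cj^2/u}\leq CLu^{-1}\sum_j e^{-cj^2/u}\leq CL/\sqrt u .
\]
Near the corners the count is $O(1)$ times this, and the extra $C_J$ absorbs the case $u$ of order one. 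The ``$-1$'' on the Neumann side is harmless: it removes the zero mode, and we only use an upper bound anyway.

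\textbf{Main obstacle.} The hard part is making the path-reversal bound rigorous for the Neumann walk. The censored walk can linger near the boundary after $\tau$, so the Gaussian bound must be the box-uniform one from \cite{Delmotte1999}, including near corners. The first thing I would try is to use the symmetry $p^N_u(x,x)=\sum_y p^N_{u/2}(x,y)^2$ and split by whether the path touches the strip in $[0,u/2]$ or in $[u/2,u]$. Each half then costs one Gaussian factor $e^{-cd^2/u}$ times the on-diagonal bound $Cu^{-1}$.

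For \eqref{eq:finite-range-large-time-heat}, use \eqref{eq:core-volume-gap}. On $\1^\perp$ every eigenvalue of $K^N$ is at least $c_JL^{-2}$. Hence, for $u\geq L^2$,
\[
 \Tr_{\1^\perp}e^{-uK^N}\leq e^{-(u-L^2)c_J/L^2}\,\Tr_{\1^\perp}e^{-L^2K^N}.
\]
By \eqref{eq:finite-range-neumann-heat} at $u=L^2$, the last trace is bounded by a constant. The Dirichlet operator satisfies $h^D\geq K^N$ in form sense, and its bottom eigenvalue is at least of order $L^{-2}$ because $h^D$ carries a positive killing term. Therefore the same argument, using \eqref{eq:finite-range-dirichlet-heat}'s companion upper bound $\Tr e^{-L^2h^D}\leq\Tr e^{-L^2K^N}$, gives the Dirichlet part of \eqref{eq:finite-range-large-time-heat}.
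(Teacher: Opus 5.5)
Your proposal is correct and follows essentially the paper's proof. It uses the same common-clock coupling of the full, censored and killed walks, the split at $u/2$ handled by the strong Markov property and reversal of returning paths, Delmotte's heat-kernel bounds summed over $O(L)$-site boundary layers, the Taylor--Gaussian evaluation of $I_J(u)$, and the gap \eqref{eq:core-volume-gap} with the semigroup property for $u\geq L^2$. Your variations are harmless: boundary decay from the off-diagonal factor of the return kernel instead of a Doob exit bound, reversal via the strip-hitting event instead of marked rejected rings, and $h_{J,L}^D\geq K_{J,L}^N$ instead of full-lattice domination. Two small repairs are needed. Use the continuous-time factor $e^{-c\min\{|x-y|^2/s,|x-y|\}}$ rather than the pure Gaussian. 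Justify $\inf\operatorname{spec}h_{J,L}^D\geq cL^{-2}$ by comparison with $j_{\rm bb}$ times the nearest-neighbour Dirichlet Laplacian, not by the killing term alone.
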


\begin{proof}
Let \(p_{L,u}^N\), \(p_{L,u}^D\), and \(p_{\infty,u}\) denote the
censored, killed, and full-lattice heat kernels, respectively.  Choose
\begin{equation}
 D>2\sup_{L,x}\sum_{y\in B_L}J(y-x),
 \qquad P_L:=I-D^{-1}K_{J,L}^N.
 \label{eq:finite-range-uniformization}
\end{equation}
Equivalently, add at \(x\) a self-loop of weight
\(D-\sum_{y\in B_L}J(y-x)\).  Then \(P_L\) is a lazy symmetric Markov
kernel with invariant counting measure.  The coordinate nearest-neighbour
conductances are bounded below by \(j_{\rm bb}>0\), all jumps have length
at most \(R_J\), and hence the graph metric is uniformly comparable with
the lattice metric.  Boundary balls in \(B_L\) have uniform volume
doubling.  The nearest-neighbour Poincar\'e inequality, together with
\(K_{J,L}^N\geq j_{\rm bb}K_L^{\rm nn}\), gives the scale-invariant
Poincar\'e inequality on these balls.  Delmotte's theorem
\cite{Delmotte1999}, applied to \(P_L\) and followed by Poissonization,
therefore gives, for \(1\leq v\leq L^2\),
\begin{equation}
 p_{L,v}^N(x,y)\leq C_J\left[
 v^{-1}e^{-c_J\min\{|x-y|^2/v,|x-y|\}}+L^{-2}\right].
 \label{eq:finite-range-gaussian-bound}
\end{equation}
The same construction on \(\Z^2\) gives the full-lattice estimate without
the \(L^{-2}\) term, and domination by the full-lattice walk gives the
killed estimate.

We now prove the boundary comparison.  Realize the full-lattice walk by
Poisson clocks for the jumps in \(\mathcal S_J\).  The censored walk uses
the same clocks but rejects a jump leaving \(B_L\).  The two paths agree
until the first such clock ring.  Set
\begin{equation}
 \delta_{J,L}(x):=\max\{0,\operatorname{dist}_\infty
 (x,\Z^2\setminus B_L)-R_J\}.
 \label{eq:finite-range-boundary-distance}
\end{equation}
Symmetry and finite range imply, for every coordinate direction and
\(0\leq\theta\leq1\),
\begin{equation}
 \sum_zJ(z)(e^{\theta z_\nu}-1)\leq C_J\theta^2.
 \label{eq:finite-range-exponential-moment}
\end{equation}
Optimize Doob's inequality for the exponential supermartingale at
\begin{equation*}
 \theta=\min\{\delta_{J,L}(x)/(2C_Ju),1\}.
\end{equation*}
This gives
\begin{equation}
 \begin{aligned}
 &\mathbb P_x(\text{a boundary jump is attempted before }u)\\
 &\qquad\leq C_J\exp\!\left[-c_J\min\left\{
 \frac{\delta_{J,L}(x)^2}{u},\delta_{J,L}(x)\right\}\right].
 \end{aligned}
 \label{eq:finite-range-exit-probability}
\end{equation}

Let \(X\) be the full-lattice walk and \(Y\) the censored walk in this
common-clock construction, and let \(T\) be the first clock ring that
would leave \(B_L\).  Regard rejected rings of \(Y\) as marks.  The
contributions of paths with \(T>u\) agree, and hence
\begin{align}
 |p_{L,u}^N(x,x)-p_{\infty,u}(0,0)|
 \leq{}&\mathbb P_x(Y_u=x,T\leq u)
       +\mathbb P_x(X_u=x,T\leq u).
 \label{eq:finite-range-coupling-difference}
\end{align}
For either term, write \(p_v\) for the transition kernel of the
corresponding walk.  The contribution with \(T\leq u/2\) is, by the strong
Markov property and \eqref{eq:finite-range-gaussian-bound}, at most
\begin{equation}
 \mathbb P_x(T\leq u/2)\sup_{v\geq u/2,z}p_v(z,x)
 \leq\frac{C_J}{u}\mathbb P_x(T\leq u/2).
 \label{eq:finite-range-early-boundary-event}
\end{equation}
For \(1\leq u\leq2\), the final inequality follows from \(p_v\leq1\)
after enlarging \(C_J\). Otherwise it follows from
\eqref{eq:finite-range-gaussian-bound}, with \(L^{-2}\leq2/u\).
Reversal of diagonal loops bounds the contribution of an event in
\([u/2,u]\) by the same expression.  For \(Y\), this is reversal of the
marked rejected-clock path measure. For \(X\), it is ordinary
reversibility.  Before \(T\) the two constructions coincide, so
\(\mathbb P_x(T\leq u/2)\) is bounded by
\eqref{eq:finite-range-exit-probability}.  Thus, for
\(\mathrm{bc}\in\{N,D\}\) and \(1\leq u\leq L^2\),
\begin{equation}
 \left|p_{L,u}^{\mathrm{bc}}(x,x)-p_{\infty,u}(0,0)\right|
 \leq\frac{C_J}{u}
 \exp\!\left[-c_J\min\left\{
 \frac{\delta_{J,L}(x)^2}{u},\delta_{J,L}(x)\right\}\right].
 \label{eq:not-feeling-boundary}
\end{equation}
Indeed, for the killed kernel,
\begin{equation}
 p_{\infty,u}(0,0)-p_{L,u}^D(x,x)
 =\mathbb P_x(X_u=x,T\leq u),
 \label{eq:finite-range-killed-difference}
\end{equation}
so the same early/late argument also gives the one-sided sign.  The second
term in the minimum in \eqref{eq:not-feeling-boundary} is the bounded-jump
correction.  There are \(O(L)\) sites in each boundary layer and
\begin{equation}
 \sum_{m\geq0}e^{-c_J\min\{m^2/u,m\}}\leq C_J(\sqrt u+1).
 \label{eq:finite-range-boundary-layer-sum}
\end{equation}
Summing \eqref{eq:not-feeling-boundary} therefore gives
\(C_JL/\sqrt u+C_J\).  This proves
\eqref{eq:finite-range-neumann-heat} and
\eqref{eq:finite-range-dirichlet-heat}.

The backbone hypothesis \eqref{eq:backbone-hypothesis} makes zero
the unique minimum of
\(\varepsilon_J\) on the Brillouin torus.  Near zero,
\begin{equation}
 \varepsilon_J(p)=p^{\mathrm T}A_Jp+O_J(|p|^4),
 \label{eq:finite-range-symbol-taylor}
\end{equation}
whereas it is bounded below away from zero off a fixed neighbourhood.
Splitting the integral into this neighbourhood and its complement, Taylor
expansion near zero and Gaussian tail bounds prove
\eqref{eq:finite-range-local-clt}.  At \(u=L^2\),
\eqref{eq:finite-range-neumann-heat} bounds the nonconstant Neumann trace,
while domination of the killed kernel by the full-lattice kernel bounds
the Dirichlet trace.  The gap in \eqref{eq:core-volume-gap}, the
Dirichlet Poincar\'e inequality following from
\eqref{eq:finite-range-dirichlet-form}, and the semigroup property then
give \eqref{eq:finite-range-large-time-heat}.
\end{proof}

For \(s>0\) and \(a\geq0\), define the Bose traces by
\begin{align}
 \Phi_{J,L}^N(s,a)
 &:=-\Tr_{\1^\perp}\log(1-e^{-sK_{J,L}^N-a}),
 \label{eq:finite-range-neumann-bose-definition}\\
 \Phi_{J,L}^D(s,a)
 &:=-\Tr\log(1-e^{-sh_{J,L}^D-a}).
 \label{eq:finite-range-dirichlet-bose-definition}
\end{align}
Their error will be expressed in terms of
\begin{equation}
 \mathcal R_L(s):=\frac L{\sqrt s}
 +\log\!\left(2+\frac{L^2}{s}\right)+\frac{L^2}{s^2}+1.
 \label{eq:finite-range-bose-remainder}
\end{equation}

\begin{lemma}
\label{lem:finite-range-bose-trace}
For \(1\leq s\leq L^2\), uniformly in \(a\geq0\),
\begin{align}
 \Phi_{J,L}^N(s,a)
 &\leq\frac{L^2}{4\pi s\sqrt{\det A_J}}
       \LiTwo(e^{-a})+C_J\mathcal R_L(s),
 \label{eq:finite-range-neumann-bose}\\
 \Phi_{J,L}^D(s,a)
 &\geq\frac{L^2}{4\pi s\sqrt{\det A_J}}
       \LiTwo(e^{-a})-C_J\mathcal R_L(s),
 \label{eq:finite-range-dirichlet-bose}
\end{align}
\end{lemma}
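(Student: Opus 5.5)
Expand the Bose logarithm into its power series and reduce everything to the heat-trace bounds of Lemma~\ref{lem:finite-range-heat-trace}. For \(a\geq0\) and any positive operator \(X\), \(-\log(1-e^{-X-a})=\sum_{m\geq1}m^{-1}e^{-ma}e^{-mX}\), so
\[
 \Phi_{J,L}^N(s,a)=\sum_{m\geq1}\frac{e^{-ma}}m\bigl(\Tr e^{-msK_{J,L}^N}-1\bigr),\qquad
 \Phi_{J,L}^D(s,a)=\sum_{m\geq1}\frac{e^{-ma}}m\Tr e^{-msh_{J,L}^D}.
\]
Here the restriction to \(\1^\perp\) removes exactly the zero mode, which is why the constant \(1\) is subtracted. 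All terms are nonnegative, and \(e^{-ma}\leq1\). Hence any error bound on \(m^{-1}\) times the \(m\)-th trace that is summable in \(m\) gives a bound uniform in \(a\). Split the sum at \(m_0:=\lfloor L^2/s\rfloor\geq1\), so that \(u=ms\in[1,L^2]\) for \(m\leq m_0\).

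For \(m\leq m_0\), insert \eqref{eq:finite-range-neumann-heat} (respectively \eqref{eq:finite-range-dirichlet-heat}) with \(u=ms\), and then \eqref{eq:finite-range-local-clt}. The main term is
\[
 \sum_{m\leq m_0}\frac{e^{-ma}}m\cdot\frac{L^2}{4\pi ms\sqrt{\det A_J}},
\]
which differs from \(\frac{L^2}{4\pi s\sqrt{\det A_J}}\LiTwo(e^{-a})\) by at most \(CL^2s^{-1}\sum_{m>m_0}m^{-2}\leq C\). In the upper bound we drop this tail; in the lower bound it is absorbed into the \(+1\) of \(\mathcal R_L\). The \(O(u^{-2})\) correction contributes \(L^2s^{-2}\sum_m m^{-3}\leq CL^2/s^2\). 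The boundary term contributes
\[
 \sum_{m\leq m_0}\frac1m\Bigl(\frac L{\sqrt{ms}}+1\Bigr)\leq C\frac L{\sqrt s}+C\log(2+L^2/s),
\]
because \(\sum m^{-3/2}<\infty\) and the harmonic sum up to \(m_0\) is logarithmic. These are precisely the terms of \(\mathcal R_L(s)\).

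For \(m>m_0\) we have \(ms\geq L^2\) up to a factor at most \(2\). For the Dirichlet lower bound the tail is nonnegative and can simply be discarded. For the Neumann upper bound, use \eqref{eq:finite-range-large-time-heat}. When \(m_0s\) falls slightly below \(L^2\), apply \eqref{eq:finite-range-neumann-heat} at \(u=\min(ms,L^2)\) instead, since \(s\leq L^2\) guarantees that the crossover term is harmless. This gives
\[
 \sum_{m>m_0}\frac1m\,C_Je^{-c_Jms/L^2}.
\]
With \(x=s/L^2\), this is bounded by \(C\int_{m_0x}^\infty e^{-cy}y^{-1}\,dy\), which is \(O(1)\) since \(m_0x\geq1/2\). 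This tail is the one step requiring care, because it is the only place where the exponential decay of the gap, rather than the local CLT, does the work. Uniformity in \(a\) holds throughout, since we only ever used \(e^{-ma}\leq1\) on error terms and kept \(e^{-ma}\) exactly in the main terms.
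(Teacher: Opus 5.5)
Your proposal is correct and follows the paper's proof essentially step for step: you expand the Bose logarithm, split at \(\lfloor L^2/s\rfloor\), and use the heat-trace bounds with the local limit for small \(m\), then the dilogarithm tail and the large-time bound for large \(m\), applying \(e^{-ma}\leq1\) only to the error terms. Your caveat about \(m_0s\) falling below \(L^2\) is unnecessary, since \(m>\lfloor L^2/s\rfloor\) already forces \(ms>L^2\), so \eqref{eq:finite-range-large-time-heat} applies directly to the whole tail.
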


\begin{proof}
Expand \(-\log(1-e^{-x})=\sum_{m\geq1}e^{-mx}/m\) and apply
Lemma~\ref{lem:finite-range-heat-trace}, splitting at
\(M=\lfloor L^2/s\rfloor\geq1\). The bulk term gives
\(L^2\LiTwo(e^{-a})/(4\pi s\sqrt{\det A_J})\), up to the uniform error
\[
 \frac{L^2}{s}\sum_{m>M}\frac{e^{-ma}}{m^2}\leq C.
\]
The large-time estimate \eqref{eq:finite-range-large-time-heat} bounds
the remaining heat-trace tail by
\(C_J\sum_{m>M}m^{-1}e^{-c_Jms/L^2}\leq C_J\).
For \(m\leq M\), the boundary, constant, and local-limit errors sum to
\[
 C_J\left[
 \frac L{\sqrt s}\sum_{m\leq M}m^{-3/2}
 +\sum_{m\leq M}m^{-1}
 +\frac{L^2}{s^2}\sum_{m\leq M}m^{-3}
 \right].
\]
These are bounded by \(C_J\mathcal R_L(s)\). All error estimates use
only \(e^{-ma}\leq1\), hence are uniform in \(a\geq0\).
\end{proof}

We now combine these trace estimates with collision control to bound
the pressure.

Normalize the kernel and the thermodynamic parameters by
\begin{equation}
 \gamma:=\frac{J}{j_J},\qquad
 \widehat\beta:=\beta j_J,\qquad
 \widehat h:=\frac h{j_J},\qquad
 t:=\widehat\beta S,\qquad a:=\widehat\beta\widehat h=\beta h.
 \label{eq:finite-range-normalization}
\end{equation}
Then \(\det A_\gamma=1\), and
\begin{equation}
 f_J(\beta,S,h)=j_Jf_\gamma(\widehat\beta,S,\widehat h).
 \label{eq:finite-range-scaling-identity}
\end{equation}

We record the spin representation used here and in
Section~\ref{sec:power-law-proof}. For \(r=2S\), replace each site by
\(r\) spin-half fibres with generators \(\mathbf s_{x,\alpha}\).
On their symmetric subspace,
\begin{equation}
 \sum_{\alpha,\delta=1}^r
 \left(\frac14-\mathbf s_{x,\alpha}\cdot\mathbf s_{y,\delta}\right)
 =S^2-\Sp_x\cdot\Sp_y.
 \label{eq:fibre-bond-identity}
\end{equation}
For any kernel \(J\), define
\[
 \cH_L^{(r,J)}:=\frac2r
 \sum_{\{x,y\}\subset B_L}J(y-x)
 \sum_{\alpha,\delta=1}^r
 \left(\frac14-\mathbf s_{x,\alpha}\cdot\mathbf s_{y,\delta}\right).
\]
Its \(k\)-down-spin restriction is the exclusion generator
\(H_{L,k}^{(r,J)}\) on \(\mathcal V_{L,r}=B_L\times[r]\) with
conductances \(J(y-x)/r\). Write \(h_{r,J}\) for the one-particle
operator and \(L_{0,L,k}^{(r,J)}=\sum_i h_{r,J}^{(i)}\).
Write \(P=|u_r\rangle\langle u_r|\), where
\(u_r=r^{-1/2}(1,\ldots,1)\), and \(Q=I-P\) for the fibre projections.
The physical spin space is reducing, so, with
\(\mathcal N=\sum_{x,\alpha}(1/2-s_{x,\alpha}^3)\),
\begin{equation}
 Z_{L,J,h}^{(S)}(\beta):=\Tr e^{-\beta H_{B_L,J,h}^{(S)}}
 \leq\Tr e^{-\beta S\cH_L^{(r,J)}-\beta h\mathcal N}.
 \label{eq:physical-to-decorated-trace}
\end{equation}
Let \(M=rL^2\), and let \(\cM_{L,k}^{(r)}\) be the highest-weight
multiplicity space of total spin \(M/2-k\). Denote the restriction of
\(\cH_L^{(r,J)}\) to this space by \(H_{L,k}^{(r,J),\HW}\).
Spin-rotation invariance gives, for \(u>0\) and \(a\geq0\),
\begin{align}
 \Tr e^{-u\cH_L^{(r,J)}-a\mathcal N}
 &=\sum_{k=0}^{\lfloor M/2\rfloor}e^{-ak}d_{M,k}(a)
   \Tr_{\cM_{L,k}^{(r)}}e^{-uH_{L,k}^{(r,J),\HW}},
 \label{eq:exact-field-multiplet-decomposition}\\
 d_{M,k}(a)&:=\sum_{j=0}^{M-2k}e^{-aj}\leq M+1.
 \label{eq:descendant-geometric-factor}
\end{align}
We also use monotonicity of the partition function under deletion of
positive bonds, which follows from
\begin{equation}
 \frac{d}{dv}\log\Tr e^{-(A+vW)}
 =-\frac{\Tr(We^{-(A+vW)})}{\Tr e^{-(A+vW)}}\leq0
 \qquad(W\geq0).
 \label{eq:duhamel-bond-removal}
\end{equation}

\begin{proposition}
\label{prop:finite-range-pressure}
For all sufficiently large \(t\) and every \(h\geq0\),
\begin{equation}
 \beta^2Sj_J f_J(\beta,S,h)
 \geq-\frac1{4\pi}\LiTwo(e^{-\beta h})
       -C_{S,J}t^{-1/27}(\log t)^{17}.
 \label{eq:finite-range-pressure-direction}
\end{equation}
\end{proposition}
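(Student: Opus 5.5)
The plan is to bound \(\log Z_{L,J,h}^{(S)}\) from above for a box of side \(L\) with \(L^2\asymp tg\), \(g=t^{2/27}\), and then pass to the thermodynamic limit. By \eqref{eq:finite-range-scaling-identity} we may assume \(j_J=1\). The pressure is subadditive over boxes up to boundary bonds, and deleting positive bonds only increases the trace by \eqref{eq:duhamel-bond-removal}. Hence \(-\beta f_J\leq L^{-2}\log Z_{L,J,h}^{(S)}\) for the free-boundary box. We then use \eqref{eq:physical-to-decorated-trace} and the multiplet decomposition \eqref{eq:exact-field-multiplet-decomposition} with \(u=t\), \(a=\beta h\). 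The descendant factor \(d_{M,k}\leq M+1\) costs \(\log(2SL^2+1)\), which is \(O(\log t/g)\) after dividing by \(L^2/t\).

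\textbf{Step 1: removing high energies and large \(k\).} We follow the localization of \cite{CorreggiGiulianiSeiringer2015}. Partition \(B_L\) into cells of side \(m\) with \(m^2\asymp t/\log t\) and drop the intercell bonds, which can only increase the trace. Each cell with \(j\) reversed spins has exclusion energy at least of order \(j/m^2\), by the gap \eqref{eq:core-volume-gap} on cells together with the equality of exclusion and one-particle gaps. This gives the factor \(e^{-ct/m^2}\) per particle and the rough trace bound \(\log\Tr\leq Cg\log^2t\). On the spectral subspace where the exchange energy exceeds \(e_0:=C_0g\log^2t/t\), we use \(e^{-t\cH}\leq e^{-te_0/2}e^{-t\cH/2}\) together with the rough bound at \(t/2\). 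This shows the high-energy part contributes \(O(1)\). A sector bound of the same type shows that on energies at most \(e_0\) only \(k\leq K=C L^2e_0=O(g^2\log^2t)\) matters. The \(k>K\) sectors are suppressed in the same way.

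\textbf{Step 2: comparison with free bosons for \(k\leq K\).} We apply Proposition~\ref{prop:finite-range-collision}. On the spectral subspace of \(H_{L,k}\) with eigenvalues at most \(e_0\), it gives \(N_k\leq\delta H_k\) with \(\delta=CK^6\log^3(2L)e_0\leq Cg^{13}(\log t)^{17}/t\). The harmonic extension \(E\) satisfies \eqref{eq:extension-norm} and \eqref{eq:extension-energy}. By min-max, the \(j\)-th eigenvalue of \(H_k\) below \(e_0\) is therefore at least \((1+\delta)^{-1}\) times the \(j\)-th eigenvalue of \(L_0\) restricted to symmetric functions on \(\mathcal V_{L,r}^k\), which is the free Bose \(k\)-sector. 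The highest-weight spaces \(\cM_{L,k}\) lie in the exclusion \(k\)-sector, so their traces are dominated as well. The field factor \(e^{-ak}\) is unchanged, as in \eqref{eq:fixed-k-field-comparison}. Summing over \(k\leq K\) bounds the trace by \((M+1)\) times the grand-canonical Bose trace at inverse temperature \(t/(1+\delta)\). The zero mode is cut off at \(K+1\) occupations. The fibre-antisymmetric modes \(Q\) have energy at least \(2j_{\rm bb}/r\) by \eqref{eq:decorated-backbone-split}, and so contribute \(O(L^2e^{-ct})\). The resulting estimate is \(\log Z\leq\Phi_{J,L}^N(t/(1+\delta),a)+\log(K+1)+\log(M+1)+O(1)\).

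\textbf{Step 3: the Bose trace.} Lemma~\ref{lem:finite-range-bose-trace} with \(s=t/(1+\delta)\) gives \(\frac{L^2(1+\delta)}{4\pi t}\LiTwo(e^{-a})+C\mathcal R_L(s)\). Multiply by \(t/L^2\), which is equivalent to multiplying \(f\) by \(\beta^2 S\). The error is then \(O(\delta+\sqrt t/L+\log t/g)=O(g^{13}(\log t)^{17}/t+g^{-1/2})\), uniformly in \(a\). With \(g=t^{2/27}\) this is \(O(t^{-1/27}(\log t)^{17})\).

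The main obstacle is Step 1. The cutoff \(e_0\) must be large enough to beat the rough bound, yet small enough that \(K\) stays polynomial in \(g\log t\). The sector bound giving \(K=O(L^2e_0)\) also has to be made uniform in \(h\). I would try to obtain it from the cell gaps via a Chernoff-type count of configurations per cell.
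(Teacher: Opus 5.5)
Your Steps 2 and 3 follow the paper's argument. Step~1, however, rests on a claim you have not justified, and that claim is essential.

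You assert that a cell with $j$ reversed spins has energy at least of order $j/m^2$ ``by the gap \eqref{eq:core-volume-gap} together with the equality of exclusion and one-particle gaps.'' The theorem of \cite{CaputoLiggettRichthammer2010} identifies only the \emph{smallest} nonzero eigenvalue of the exclusion generator with the one-particle gap. On the highest-weight space of deficit $j$ it therefore gives $H^{\HW}_{m,j}\geq c\,m^{-2}$ uniformly in $j\geq1$, with no factor $j$. The bound must in any case be stated on highest-weight vectors, since for instance the descendants of the vacuum have $j$ reversed spins and zero exchange energy.

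A $j$-independent bound is not enough. The cell trace estimate then becomes $(M_m+1)\bigl(1+2^{M_m}e^{-cu/m^2}\bigr)$, whose logarithm is of order $M_m\asymp t/\log t$. Summed over the $\asymp g\log t$ cells, this gives $\log\Tr\lesssim tg$ instead of $g(\log t)^2$, and the choice of $e_0$ and then of $K$ collapses.

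The missing input is the linear highest-weight bound \eqref{eq:finite-range-rough-sector}, $H^{(r,\gamma),\HW}_{m,k}\geq c\,k/m^2$. The paper takes it from \cite{Klippel2026} via comparison with the nearest-neighbour exchange. It is a mean-field comparison, not a spectral-gap statement, and can be proved as follows (compare Lemma~\ref{lem:lr-highest-weight}):
\begin{itemize}
\item Write each transposition of two decorated vertices as a palindromic word of $O(m)$ backbone transpositions and apply \eqref{eq:unitary-word-inequality}.
\item The path congestion is $O(m^3)$, so the unit complete-graph exclusion generator satisfies $K_{M_m,k}\leq Cm^4H_{m,k}$.
\item The Casimir identity gives $K^{\HW}_{M_m,k}=k(M_m-k+1)\geq\frac12kM_m$ on highest-weight vectors.
\end{itemize}

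The same bound with $m$ replaced by $L$ also settles what you call the main obstacle. On the full box, exchange energy at most $e_0$ forces $ck/L^2\leq e_0$, that is $k\leq K$, deterministically. The field term commutes with the exchange and enters only through the factor $e^{-ak}$ in \eqref{eq:exact-field-multiplet-decomposition}. So this cutoff is automatically uniform in $h$, and no Chernoff-type count over cells is needed.
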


\begin{proof}
Let \(j_*:=\min\{\gamma(e_1),\gamma(e_2)\}>0\). Comparison with the
unit-coupling nearest-neighbour exchange and
\cite{Klippel2026}, in the same normalization of the decorated
Hamiltonian, give
\begin{equation}
 H_{m,k}^{(r,\gamma),\HW}\geq c_{S,J}\frac{k}{m^2}.
 \label{eq:finite-range-rough-sector}
\end{equation}
The multiplicity at deficit \(k\) is at most \(\binom{rm^2}{k}\), so
\eqref{eq:exact-field-multiplet-decomposition} gives
\[
 \Tr e^{-u\cH_m^{(r,\gamma)}-a\mathcal N}
 \leq(rm^2+1)\exp\{rm^2e^{-c_{S,J}u/m^2-a}\}.
\]

Choose, up to bounded integer rounding factors,
\begin{equation}
 g:=t^b,\qquad L^2\asymp tg,\qquad
 m^2\asymp\frac{t}{A_{S,J}\log t},
 \label{eq:finite-range-pressure-scales}
\end{equation}
where \(0<b<1/13\) and \(A_{S,J}\) is a sufficiently large fixed
constant. Choose \(m\mid L\), with relative rounding errors tending to
zero. Delete the positive bonds between the \(m\)-cells using
\eqref{eq:duhamel-bond-removal}. The preceding trace bound at \(u=t/2\)
gives
\begin{equation}
 \log\Tr e^{-(t/2)\cH_L^{(r,\gamma)}-a\mathcal N}
 \leq C_{S,J}g(\log t)^2.
 \label{eq:finite-range-rough-trace}
\end{equation}
Set
\begin{equation}
 e_0:=C_{S,J}\frac{g(\log t)^2}{t},\qquad
 K:=\left\lfloor c_{S,J}^{-1}L^2e_0\right\rfloor
 \leq C_{S,J}g^2(\log t)^2.
 \label{eq:finite-range-cutoffs}
\end{equation}
Choose the constant in \(e_0\) larger than twice the one in
\eqref{eq:finite-range-rough-trace}. The scalar inequality
\begin{equation}
 \1_{\{E>e_0\}}e^{-tE-ak}
 \leq e^{-te_0/2}e^{-tE/2-ak}
 \qquad(E\geq0)
 \label{eq:high-energy-spectral-inequality}
\end{equation}
and \eqref{eq:finite-range-rough-trace} show that exchange energies
above \(e_0\) contribute at most
\begin{equation}
 (rL^2+1)e^{-c_{S,J}g(\log t)^2},
 \label{eq:finite-range-high-energy-tail}
\end{equation}
and \eqref{eq:finite-range-rough-sector} excludes \(k>K\) below the
cutoff.

For \(1\leq k\leq K\), let \(P_{k,e_0}^{(r,\gamma)}\) be the spectral
projection of \(H_{L,k}^{(r,\gamma),\HW}\) onto \([0,e_0]\), extended by
zero off the highest-weight multiplicity space.  On its range,
Proposition~\ref{prop:finite-range-collision} gives
\begin{align}
 &P_{k,e_0}^{(r,\gamma)}
 \bigl(H_{L,k}^{(r,\gamma)}+N_{L,k}^{(r,\gamma)}\bigr)
 P_{k,e_0}^{(r,\gamma)}
 \notag\\
 &\qquad\leq(1+\delta_t)P_{k,e_0}^{(r,\gamma)}
 H_{L,k}^{(r,\gamma)}P_{k,e_0}^{(r,\gamma)},
 \label{eq:finite-range-relative-collision}\\[-1mm]
 \delta_t\leq C_{S,J}K^6\log^3(2L)e_0
 \leq C_{S,J}\frac{g^{13}(\log t)^{17}}t.
 \label{eq:finite-range-collision-delta}
\end{align}
Let \(\lambda_{k,j}\leq e_0\) be the retained highest-weight exchange
eigenvalues and \(\mu_{k,j}\) the free symmetric-product eigenvalues,
both ordered increasingly with multiplicity. Harmonic extension is
injective. Equations \eqref{eq:extension-norm}--
\eqref{eq:extension-energy} and min--max therefore give
\(\mu_{k,j}\leq(1+\delta_t)\lambda_{k,j}\). With
\(s=t/(1+\delta_t)\), it follows that
\begin{equation}
 e^{-ak}\sum_{\lambda_{k,j}\leq e_0}e^{-t\lambda_{k,j}}
 \leq e^{-ak}\Tr_{\Sym^k\ell^2(\mathcal V_{L,r})}
 e^{-sL_{0,L,k}^{(r,\gamma)}}.
 \label{eq:fixed-k-field-comparison}
\end{equation}
The field mass \(a\) is unchanged. Since \(b<1/13\),
\eqref{eq:finite-range-collision-delta} gives \(\delta_t=o(1)\).  Hence,
for large \(t\),
\begin{equation}
 1\leq\frac t2\leq s\leq t\leq L^2,
 \label{eq:finite-range-heat-time-range}
\end{equation}
where the final inequality uses \(L^2\asymp tg\) and \(g=t^b\to\infty\).
Thus Lemma~\ref{lem:finite-range-bose-trace} applies. The free box has
one zero mode. Writing \(Z_k^0(s)\) for its field-free canonical trace,
summing the zero-mode occupation up to \(K\) gives
\begin{equation}
 \sum_{k=0}^Ke^{-ak}Z_k^0(s)
 \leq(K+1)\prod_{\eta\in\operatorname{spec}(h_{r,\gamma})\setminus\{0\}}
 (1-e^{-s\eta-a})^{-1}.
 \label{eq:uniform-zero-mode-sum}
\end{equation}
Eigenvalues in the product are counted with multiplicity. Together with
\eqref{eq:descendant-geometric-factor}, this controls the zero mode and
spin descendants uniformly at \(a=0\).

The one-particle decorated operator is
\begin{equation}
 h_{r,\gamma}=K_{\gamma,L}^N\otimes P
             +M_{d_{\gamma,L}}\otimes Q,
 \qquad
 d_{\gamma,L}(x):=\sum_{y\in B_L}\gamma(y-x).
 \label{eq:finite-range-acoustic-optical-split}
\end{equation}
Connectedness and the smallest positive kernel weight imply
\(d_{\gamma,L}(x)\geq c_J\), so the \(r-1\) optical bands contribute
at most \(C_{S,J}L^2e^{-c_Js-a}\).  Insert
Lemma~\ref{lem:finite-range-bose-trace} with
\(\det A_\gamma=1\).  The descendant factor and the truncated zero mode
contribute at most
\(C_{S,J}\log(tg)\) to the logarithm. The retained trace contains the
vacuum and is at least one, so the discarded tail changes its logarithm
by at most \eqref{eq:finite-range-high-energy-tail}.
Since \(L^2\asymp tg\), the resulting box estimate is
\begin{align}
 \frac t{L^2}\log Z_{L,J,h}^{(S)}(\beta)
 \leq{}&\frac1{4\pi}\LiTwo(e^{-a})
 \notag\\
 &+C_{S,J}\left[
 \frac{g^{13}(\log t)^{17}}t+g^{-1/2}
 +\frac{\log t}{g}+\frac1t
 +te^{-c_Jt}+te^{-c_Jg(\log t)^2}\right].
 \label{eq:finite-range-box-pressure}
\end{align}
Here \(g^{-1/2}\), \(\log(t)/g\), and \(t^{-1}\) are respectively the
three nonconstant terms in
\eqref{eq:finite-range-bose-remainder} after multiplication by
\(t/L^2\).  Thus no separable eigenbasis has been used.

Tile a large square by \(L\)-boxes and delete the positive bonds between
them.  The one-way inequality is in the required direction by
\eqref{eq:duhamel-bond-removal}, so
\eqref{eq:finite-range-box-pressure} passes to the thermodynamic
pressure.  Choose \(b=2/27\), which balances
\(g^{13}/t\) with \(g^{-1/2}\), and use
\eqref{eq:finite-range-scaling-identity}.  This proves
\eqref{eq:finite-range-pressure-direction}.
\end{proof}

We prove the matching upper free-energy bound by projecting a free-boson
Gibbs state onto the allowed local spin occupations. The energy and
entropy estimates below control the change caused by this projection.

The projected free-boson variational method was developed by Correggi,
Giuliani and Seiringer \cite{CorreggiGiulianiSeiringer2015}.
We use the occupation cutoff of Napiórkowski and Seiringer
\cite{NapiorkowskiSeiringer2021} and formulate the estimates without
requiring an explicit one-particle eigenbasis. For a finite nonempty
set \(\Lambda\), use the spin-up Holstein--Primakoff convention
\(S_x^3=S-n_x\) on the physical subspace \(n_x\leq2S\), with
\(n_x=a_x^\dagger a_x\) and \(\mathcal N=\sum_{x\in\Lambda}n_x\).
Define
\begin{equation}
 q_S(n)^2:=
 \begin{cases}
 \displaystyle\prod_{q=1}^{n}
 \left(1-\frac{q-1}{2S}\right),&0\leq n\leq2S,\\[2mm]
 0,&n>2S,
 \end{cases}
 \qquad
 \mathcal P:=\prod_{x\in\Lambda}q_S(n_x),
 \label{eq:holstein-primakoff-cutoff}
\end{equation}
where \(q_S(n)\geq0\) and the empty product equals one.

Let \(A\) be a positive definite operator on \(\ell^2(\Lambda)\),
and put \(K=\mathrm d\Gamma(A)\) on the bosonic Fock space
\(\mathcal F_\Lambda=\Gamma_{\mathrm s}(\ell^2(\Lambda))\).
For \(\beta>0\), define
\begin{gather*}
 Z_0=\Tr_{\mathcal F_\Lambda}e^{-\beta K},\qquad
 Z_{\mathcal P}=\Tr_{\mathcal F_\Lambda}
 \mathcal P e^{-\beta K}\mathcal P,\qquad
 \Gamma_{\mathcal P}=Z_{\mathcal P}^{-1}
 \mathcal P e^{-\beta K}\mathcal P,\\
 C=(e^{\beta A}-I)^{-1},\qquad
 \rho_x=C(x,x),\qquad \rho_*:=\max_{x\in\Lambda}\rho_x,\qquad
 q:=\sum_{x\in\Lambda}\rho_x^2,\\
 E_1:=\Tr_{\ell^2(\Lambda)}(2AC),\qquad
 E_2:=\Tr_{\ell^2(\Lambda)}
 \frac{A}{\sinh^2(\beta A/2)}.
\end{gather*}
The next estimate expresses the normalization and entropy argument of
\cite{CorreggiGiulianiSeiringer2015,NapiorkowskiSeiringer2021} in terms
of local occupation densities and energy-weighted one-particle traces.

\begin{lemma}
\label{lem:general-projected-entropy}
We have \(1-Z_{\mathcal P}/Z_0\leq q\) and
\begin{align}
 \frac1\beta\Tr\Gamma_{\mathcal P}\log\Gamma_{\mathcal P}
 \leq{}&-\frac1\beta\log Z_{\mathcal P}
 -\frac{\Tr_{\mathcal F_\Lambda}K e^{-\beta K}}{Z_{\mathcal P}}
 \notag\\
 &+\frac{Z_0}{2Z_{\mathcal P}}
 \bigl(qE_1+\rho_*E_2\bigr).
 \label{eq:general-projected-entropy}
\end{align}
\end{lemma}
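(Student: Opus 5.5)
We plan to compare everything with the quasi-free state \(\Gamma_0=Z_0^{-1}e^{-\beta K}\). Its one-particle density matrix is \(C=(e^{\beta A}-I)^{-1}\), so \(\langle n_x\rangle_0=\rho_x\), and Wick's theorem gives \(\langle a_x^\dagger a_x^\dagger a_xa_x\rangle_0=2\rho_x^2\). Since \(\mathcal P\) is a function of the commuting occupation numbers, cyclicity gives \(Z_{\mathcal P}=\Tr\mathcal P^2e^{-\beta K}\).

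\textbf{Step 1: the local defect.} We first prove the scalar bound
\[
 1-q_S(n)^2\leq\tfrac12 n(n-1),\qquad n\in\N.
\]
For \(n\leq1\) both sides vanish. For \(2\leq n\leq2S\), use \(1-\prod(1-\epsilon_q)\leq\sum\epsilon_q\) with \(\epsilon_q=(q-1)/(2S)\in[0,1]\); this gives the bound \(n(n-1)/(4S)\), which is at most \(n(n-1)/2\). For \(n>2S\) we have \(n\geq2\), so \(1\leq n(n-1)/2\). Since all factors \(q_S(n_x)^2\) lie in \([0,1]\) and commute,
\[
 I-\mathcal P^2\leq\sum_x\bigl(1-q_S(n_x)^2\bigr)\leq\tfrac12\sum_x a_x^\dagger a_x^\dagger a_xa_x=:D.
\]
Taking the \(\Gamma_0\)-expectation gives \(1-Z_{\mathcal P}/Z_0\leq\sum_x\rho_x^2=q\).

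\textbf{Step 2: the entropy.} Put \(X=\mathcal Pe^{-\beta K}\mathcal P\), so that \(\Gamma_{\mathcal P}=X/Z_{\mathcal P}\). The function \(f(x)=x\log x\) is operator convex with \(f(0)=0\), and \(\|\mathcal P\|\leq1\). The Hansen--Pedersen (Jensen) trace inequality therefore gives
\[
 \Tr f(\mathcal PY\mathcal P)\leq\Tr\mathcal Pf(Y)\mathcal P
\]
for \(Y=e^{-\beta K}\). Hence
\[
 \Tr X\log X\leq-\beta\Tr\mathcal P^2Ke^{-\beta K}.
\]
Using \(\Tr\Gamma_{\mathcal P}\log\Gamma_{\mathcal P}=Z_{\mathcal P}^{-1}\Tr X\log X-\log Z_{\mathcal P}\) and dividing by \(\beta\), we get
\[
 \frac1\beta\Tr\Gamma_{\mathcal P}\log\Gamma_{\mathcal P}
 \leq-\frac1\beta\log Z_{\mathcal P}
 -\frac{\Tr Ke^{-\beta K}}{Z_{\mathcal P}}
 +\frac{\Tr(I-\mathcal P^2)Ke^{-\beta K}}{Z_{\mathcal P}}.
\]
The operator \(Ke^{-\beta K}\) is nonnegative, so \(\Tr(\,\cdot\,Ke^{-\beta K})\) is monotone. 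Step 1 then bounds the last numerator by \(Z_0\langle DK\rangle_0\).

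\textbf{Step 3: the Wick computation.} This is the main step. Write \(K=\sum_{u,v}A_{uv}a_u^\dagger a_v\) and expand \(\langle a_x^\dagger a_x^\dagger a_xa_xK\rangle_0\) by Wick's theorem.

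The fully disconnected pairings give \(2\rho_x^2\,\Tr(AC)=\rho_x^2E_1\). After the factor \(\tfrac12\) and the sum over \(x\), this contributes \(\tfrac12qE_1\).

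The connected pairings link one creator and one annihilator at \(x\) to \(K\). There are four of them, each equal to
\[
 \rho_x\,\bigl(CA(I+C)\bigr)(x,x).
\]
Here \(C(I+C)=\bigl(4\sinh^2(\beta A/2)\bigr)^{-1}\) commutes with \(A\), so \(CA(I+C)\geq0\) and its diagonal entries are nonnegative. Bounding \(\rho_x\leq\rho_*\) and summing over \(x\), the connected part is at most
\[
 \tfrac12\cdot4\rho_*\Tr\frac{A}{4\sinh^2(\beta A/2)}=\tfrac12\rho_*E_2.
\]
Altogether \(\langle DK\rangle_0\leq\tfrac12(qE_1+\rho_*E_2)\), which gives \eqref{eq:general-projected-entropy}.

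The step that needs the most care is the ordering bookkeeping in the Wick expansion: normal ordering of \(K\) against the quartic term produces the factor \(I+C\) in the cross contractions, and the count of four connected pairings must be checked. The operator-Jensen step is standard once \(\|\mathcal P\|\leq1\) is noted.
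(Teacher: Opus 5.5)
Your proof is correct. Its skeleton is the paper's: the pointwise cutoff bound, the reduction of the entropy to \(\Tr(I-\mathcal P^2)Ke^{-\beta K}\), and a Wick computation. The two technical steps, however, use different tools.

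\textbf{Entropy step.} The paper notes that \(\mathcal Pe^{-\beta K}\mathcal P\) and \(B=e^{-\beta K/2}\mathcal P^2e^{-\beta K/2}\) have the same nonzero spectrum. It then uses \(B\leq e^{-\beta K}\) and operator monotonicity of the logarithm, after regularizing \(\log B\). This gives exactly your bound \(\Tr X\log X\leq-\beta\Tr\mathcal P^2Ke^{-\beta K}\). Your Hansen--Pedersen step relies on operator convexity of \(x\log x\) instead. That avoids the regularization on the kernel of \(B\), and in trace form ordinary convexity with \(f(0)\leq0\) would already suffice.

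\textbf{Six-point moment.} The paper does not enumerate pairings. It differentiates \(\langle n_x(n_x-1)\rangle_0=2\rho_x^2\) in \(\beta\) at fixed \(A\). Using \(\partial_\beta C=-AC(I+C)\), this yields the connected term \(4\rho_x(AC(I+C))(x,x)\) directly, bypassing the ordering bookkeeping you flagged. Your count of two disconnected and four connected pairings is also correct and gives the same identity. It has the advantage of making explicit the nonnegativity of the diagonal of \(AC(I+C)\), which the paper uses tacitly when replacing \(\rho_x\) by \(\rho_*\).

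You also prove the scalar bound \(1-q_S(n)^2\leq\frac12n(n-1)\), which the paper only asserts.
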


\begin{proof}
Write \(\langle X\rangle_0=Z_0^{-1}\Tr Xe^{-\beta K}\).
The cutoff satisfies
\[
 0\leq I-\mathcal P^2\leq\frac12
 \sum_{x\in\Lambda}n_x(n_x-1).
\]
Wick's rule gives \(\langle n_x(n_x-1)\rangle_0=2\rho_x^2\),
which proves the normalization bound.

The nonzero spectra of \(\mathcal P e^{-\beta K}\mathcal P\) and
\(B=e^{-\beta K/2}\mathcal P^2e^{-\beta K/2}\) agree.
Since \(B\leq e^{-\beta K}\), operator monotonicity of the logarithm,
by positive regularization and passage to the limit, gives
\[
 \frac1\beta\Tr\Gamma_{\mathcal P}\log\Gamma_{\mathcal P}
 \leq-\frac1\beta\log Z_{\mathcal P}
 -\frac{\Tr K e^{-\beta K}}{Z_{\mathcal P}}
 +\frac{Z_0}{Z_{\mathcal P}}
 \langle K(I-\mathcal P^2)\rangle_0.
\]
The positive operator \(K e^{-\beta K}\) allows the cutoff inequality
to be inserted in the last trace.  To evaluate the resulting moments,
differentiate the Wick identity at fixed \(A\).  Since
\(\partial_\beta C=-AC(I+C)\), one obtains
\begin{align*}
 \langle K n_x(n_x-1)\rangle_0
 &=2\rho_x^2\langle K\rangle_0
   +4\rho_x\bigl(AC(I+C)\bigr)(x,x)\\
 &=\rho_x^2E_1
   +\rho_x\left(\frac{A}{\sinh^2(\beta A/2)}\right)(x,x).
\end{align*}
Summing over \(x\) and using \(\rho_x\leq\rho_*\) proves
\eqref{eq:general-projected-entropy}.
\end{proof}

For the Dirichlet spin Hamiltonians below, the exterior spins are pinned
at \(S^3=S\). If \(h^D>0\) is the corresponding zero-exterior exchange
operator, take \(A=Sh^D+hI\) and \(K=\mathrm d\Gamma(A)\). Applying
\cite{NapiorkowskiSeiringer2021} bond by bond, with the
nonnegative exchange weights and a global spin flip, gives
\begin{equation}
 \mathcal P H^D\mathcal P\leq K.
 \label{eq:trial-operator-inequality}
\end{equation}
The pinned boundary terms and the field obey the same inequality since
\(\mathcal P n_x\mathcal P=n_x\mathcal P^2\leq n_x\). Summable
infinite-range weights follow by truncation and convergence of the
finite-volume forms. For
\(f_\Lambda^D:=-(\beta|\Lambda|)^{-1}\log\Tr e^{-\beta H^D}\),
the Gibbs variational principle and
Lemma~\ref{lem:general-projected-entropy} give, when \(q<1\),
\begin{equation}
 f_\Lambda^D
 \leq-\frac{\log Z_0}{\beta|\Lambda|}
 -\frac{\log(1-q)}{\beta|\Lambda|}
 +\frac{qE_1+\rho_*E_2}{2|\Lambda|(1-q)}.
 \label{eq:general-projected-free-energy}
\end{equation}

We apply the variational bound \eqref{eq:general-projected-free-energy}
to the zero-exterior finite-range operator to obtain the reverse
free-energy inequality.

\begin{proposition}
\label{prop:finite-range-variational}
For all sufficiently large \(t\) and every \(h\geq0\),
\begin{equation}
 \beta^2Sj_J f_J(\beta,S,h)
 \leq-\frac1{4\pi}\LiTwo(e^{-\beta h})
       +C_{S,J}t^{-1/3}(\log t)^{2/3}.
 \label{eq:finite-range-variational-direction}
\end{equation}
\end{proposition}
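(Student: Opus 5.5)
We work in the normalized units \eqref{eq:finite-range-normalization}, so that \(\det A_\gamma=1\) and \(t=\widehat\beta S\), and transfer back with \eqref{eq:finite-range-scaling-identity} at the end. The trial states live in boxes \(B_\ell\) with zero-exterior (pinned) boundary conditions. Take \(\Lambda=B_\ell\), \(A=Sh^D_{\gamma,\ell}+\widehat hI\) and \(K=\mathrm d\Gamma(A)\), and apply \eqref{eq:general-projected-free-energy}. The free term is then \(-\log Z_0=\Phi^D_{\gamma,\ell}(t,a)\) with \(a=\beta h\) and \(s=t\). When \(t\leq\ell^2\), Lemma~\ref{lem:finite-range-bose-trace} gives
\[
-\frac{t\log Z_0}{\ell^2}\cdot\frac{1}{\widehat\beta^{\,}}\cdot\widehat\beta
\leq-\frac1{4\pi}\LiTwo(e^{-a})+C\frac{t}{\ell^2}\mathcal R_\ell(t),
\]
after multiplying by \(\widehat\beta^2S/|\Lambda|\cdot\widehat\beta^{-1}=t/\ell^2\). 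The dominant piece of \(\frac{t}{\ell^2}\mathcal R_\ell(t)\) is the boundary term \(\sqrt t/\ell\); the remaining terms are \(O(t\log t/\ell^2+1/t)\). This estimate is uniform in \(a\geq0\).

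The next step is to control the projection defect. The Dirichlet operator is dominated by its free-lattice counterpart, and the field only lowers \(C\). Hence \(\rho_x\leq\sum_{m\geq1}p^D_{\ell,mt}(x,x)\). Using the killed heat kernel bound \(\min\{C/u,\ e^{-cu/\ell^2}\}\) from Lemma~\ref{lem:finite-range-heat-trace}, namely the Gaussian bound together with the Dirichlet Poincaré decay, we sum over \(m\) up to \(\ell^2/t\) and obtain \(\rho_*\leq C\log(2\ell)/t\). It follows that \(q\leq C\ell^2\log^2(2\ell)/t^2\), which is small in the window \(\ell\ll t\).

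For the energy-error terms, write \(E_1=\Tr 2AC\) and \(E_2=\Tr A/\sinh^2(\widehat\beta A/2)\). Both are traces of functions of \(\widehat\beta A\) decaying exponentially at large argument and behaving like \(1/\widehat\beta A\) (respectively \(4/\widehat\beta^2A\)) at small argument. We handle them with the same heat-trace expansion: \(x/(e^x-1)\) and \(x/\sinh^2(x/2)\) are Laplace transforms of \(\sum m e^{-mx}\)-type series. This gives \(\widehat\beta E_1\leq C\ell^2/t\) and \(\widehat\beta E_2\leq C\ell^2/t\), up to a harmless \(\log\) from the smallest Dirichlet eigenvalue \(\gtrsim\ell^{-2}\). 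After multiplication by \(t/\ell^2\), the normalized errors are
\[
\frac{t}{\ell^2}\cdot\widehat\beta\,\frac{qE_1+\rho_*E_2}{2(1-q)}\lesssim q+\rho_*
\lesssim\frac{\ell^2\log^2\ell}{t^2},
\]
and similarly \(-\log(1-q)\,t/\ell^2\lesssim q\,t/\ell^2\), which is even smaller. Balancing \(\sqrt t/\ell\) against \(\ell^2\log^2\ell/t^2\) gives \(\ell\asymp t^{5/6}(\log t)^{-2/3}\) and the error \(t^{-1/3}(\log t)^{2/3}\).

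Finally, we pass from the Dirichlet box to the thermodynamic limit. Tile \(B_L\) by copies of \(B_\ell\) separated by corridors of width \(R_\gamma\), and use the product of the projected trial states on the boxes with all corridor spins fully polarized. Since bonds have range at most \(R_\gamma\), every bond touching a box but leaving it lands on a polarized corridor spin. Its energy is therefore exactly accounted for by the pinned boundary term in \(H^D\), and bonds between distinct boxes do not occur. The Gibbs variational principle gives \(f_\gamma\leq(1+O(R/\ell))f^D_{B_\ell}\), and since \(f^D\leq0\) this correction is again \(O(R/\ell)\) relative, which is negligible. The main obstacle is the estimate \(\rho_*\leq C\log\ell/t\) and the matching bounds on \(E_1,E_2\) without an explicit eigenbasis. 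I would first attempt it via the killed-kernel bounds plus the spectral gap \(\gtrsim\ell^{-2}\) to handle the long-time tail.
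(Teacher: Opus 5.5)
Your proposal is correct, and its architecture matches the paper's proof. Both use normalized units, pinned Dirichlet boxes, the projected free-boson bound \eqref{eq:general-projected-free-energy}, the Dirichlet half of Lemma~\ref{lem:finite-range-bose-trace}, corridors of width \(R_J\), and the balance \(\ell\asymp t^{5/6}(\log t)^{-2/3}\).

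The difference is in the one-particle inputs \(\rho_*,E_1,E_2\). The paper first proves the symbol comparison \(c_J\varepsilon_0\le\varepsilon_\gamma\le C_J\varepsilon_0\), which becomes a comparison of Dirichlet forms under Fourier transformation of the zero extension. It then reads everything off from nearest-neighbour facts:
\begin{itemize}
\item for \(\rho_*\), the Green-function bound \(\langle\delta_x,(h^D_{\gamma,\ell})^{-1}\delta_x\rangle\le C_J\log(2\ell)\) together with \((e^{t\lambda+a}-1)^{-1}\le(t\lambda)^{-1}\);
\item for \(E_1,E_2\), min--max \(\lambda_j\ge c_J\mu_j\), the decreasing functions \(F,G\) of \eqref{eq:finite-range-decreasing-functions}, and the square-lattice eigenvalue count.
\end{itemize}
You instead expand in the semigroup and reuse the killed-kernel estimates of Lemma~\ref{lem:finite-range-heat-trace}: the domination \(p^D_{\ell,u}(x,x)\le I_\gamma(u)\le C/u\) and the large-time decay \eqref{eq:finite-range-large-time-heat}. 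Your route needs no new comparison lemma, and uniformity in \(a\) follows from \(e^{-ma}\le1\). The paper's route is shorter once the symbol comparison is in hand.

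Two steps need tightening.
\begin{itemize}
\item For \(\rho_*\): if you cut the series at \(m\approx\ell^2/t\) and use only \(e^{-cu/\ell^2}\) beyond, the tail is of order \(\ell^2/t\), which ruins the bound. Either keep the minimum and cut at \(m\asymp\ell^2\log(2\ell)/t\), or use the semigroup bound for \(u\ge\ell^2\),
\[
p^D_{\ell,u}(x,x)\le p^D_{\ell,\ell^2}(x,x)\,e^{-\lambda_1(u-\ell^2)}\le C\ell^{-2}e^{-c(u-\ell^2)/\ell^2},
\]
with \(\lambda_1\ge c_J\ell^{-2}\) the bottom of the spectrum of \(h^D_{\gamma,\ell}\). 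This tail sums to \(O(1/t)\).
\item For \(E_1,E_2\): the expansions \(2A(e^{\widehat\beta A}-1)^{-1}=2\sum_mAe^{-m\widehat\beta A}\) and \(A\sinh^{-2}(\widehat\beta A/2)=4\sum_m mAe^{-m\widehat\beta A}\) produce \(\Tr Ae^{-m\widehat\beta A}\), which is not a heat trace. The bound \(xe^{-x/2}\le2/e\) gives \(\Tr Ae^{-m\widehat\beta A}\le C(m\widehat\beta)^{-1}\Tr e^{-mth^D_{\gamma,\ell}/2}\). Your counting then yields \(E_1\le CS\ell^2/t^2\) and \(E_2\le CS\ell^2\log(2\ell)/t^2\). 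The integrand of \(E_1\) tends to \(2/\widehat\beta\), not \(1/(\widehat\beta A)\), at small argument, which is why only \(E_2\) carries the logarithm.
\end{itemize}

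Two minor slips. First, \(\log Z_0=\Phi^D_{\gamma,\ell}(t,a)\), not \(-\log Z_0\); your display already uses the correct sign. Second, the normalized cross term is \(q+\rho_*\log(2\ell)\) rather than \(q+\rho_*\). It is still \(O(\ell^2\log^2(2\ell)/t^2)\), because \(\ell^2\ge t\).
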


\begin{proof}
Use the normalization \eqref{eq:finite-range-normalization}.  On an
active square \(B_\ell\), pin every exterior spin in the state with third
component \(+S\) and define
\begin{align}
 H_{\ell,\gamma,\widehat h}^D
 :={}&\frac12\sum_{x\in B_\ell}
 \sum_{\substack{z\in\Z^2\\x+z\in B_\ell}}
 \gamma(z)(S^2-\Sp_x\cdot\Sp_{x+z})
 \notag\\
 &+\sum_{x\in B_\ell}
 \sum_{\substack{z\in\Z^2\\x+z\notin B_\ell}}
 \gamma(z)S(S-S_x^3)
 +\widehat h\sum_{x\in B_\ell}(S-S_x^3).
 \label{eq:finite-range-dirichlet-spin-hamiltonian}
\end{align}
Its one-particle operator is the zero-exterior form
\eqref{eq:finite-range-dirichlet-form} with \(J\) replaced by
\(\gamma\).
Write
\[
 f_{\ell,\gamma}^{D}(\widehat\beta,S,\widehat h)
 :=-\frac1{\widehat\beta\ell^2}
 \log\Tr e^{-\widehat\beta H_{\ell,\gamma,\widehat h}^D}.
\]

Let
\(\varepsilon_0(p)=2\sum_{\nu=1}^2(1-\cos p_\nu)\).
Near the origin, Taylor's formula and the smallest and largest
eigenvalues of \(A_\gamma\) compare \(\varepsilon_\gamma\) with
\(\varepsilon_0\).  Away from the origin, compactness and the uniqueness
of the zero of \(\varepsilon_\gamma\) give the same two-sided comparison.
Hence
\begin{equation}
 c_J\varepsilon_0(p)\leq\varepsilon_\gamma(p)
 \leq C_J\varepsilon_0(p),\qquad p\in[-\pi,\pi]^2.
 \label{eq:finite-range-symbol-comparison}
\end{equation}
Fourier transformation of the zero extension turns
\eqref{eq:finite-range-symbol-comparison} into a comparison of
Dirichlet forms.  In particular, the nearest-neighbour Green-function
bound gives, uniformly in \(x\in B_\ell\),
\begin{equation}
 \langle\delta_x,(h_{\gamma,\ell}^D)^{-1}\delta_x\rangle
 \leq C_J\log(2\ell).
 \label{eq:finite-range-local-green-bound}
\end{equation}

Adopt the notation of Lemma~\ref{lem:general-projected-entropy} with
\(A=S h_{\gamma,\ell}^D+\widehat h I\) and inverse temperature
\(\widehat\beta\), so that
\(\widehat\beta K=t\,\mathrm d\Gamma(h_{\gamma,\ell}^D)
+a\mathcal N\). Equation~\eqref{eq:trial-operator-inequality} applies
to \(H_{\ell,\gamma,\widehat h}^D\).

No eigenfunction formula is required for the projection errors.  Indeed,
\((e^{t\lambda+a}-1)^{-1}\leq(t\lambda)^{-1}\) and
\eqref{eq:finite-range-local-green-bound} give
\begin{equation}
 \rho_x\leq C_J\frac{\log(2\ell)}t,\qquad
 1-\frac{Z_{\mathcal P}}{Z_0}
 \leq q\leq C_J\frac{\ell^2\log^2(2\ell)}{t^2}.
 \label{eq:finite-range-projection-defect}
\end{equation}
Let \(\lambda_j\) be the ordered eigenvalues of
\(h_{\gamma,\ell}^D\), and let \(\mu_j\) be those of the unit-coupling
nearest-neighbour zero-exterior operator.  By
\eqref{eq:finite-range-symbol-comparison} and min--max,
\(\lambda_j\geq c_J\mu_j\), with \(c_J\leq1\).  The functions
\begin{equation}
 F(u):=\frac{u}{e^u-1},
 \qquad G(u):=\frac{u}{\sinh^2(u/2)}
 \label{eq:finite-range-decreasing-functions}
\end{equation}
are decreasing on \((0,\infty)\), and
\begin{align}
 \frac{2(S\lambda_j+\widehat h)}{e^{t\lambda_j+a}-1}
 &=\frac{2S}{t}F(t\lambda_j+a),
 \label{eq:finite-range-energy-function}\\
 \frac{S\lambda_j+\widehat h}{\sinh^2((t\lambda_j+a)/2)}
 &=\frac{S}{t}G(t\lambda_j+a).
 \label{eq:finite-range-entropy-function}
\end{align}
The nearest-neighbour eigenvalue count and dyadic summation therefore
give, when \(\ell\geq2\sqrt t\),
\begin{align}
 \Tr\frac{2(Sh_{\gamma,\ell}^D+\widehat h)}
 {e^{th_{\gamma,\ell}^D+a}-1}
 &\leq C_{S,J}\frac{\ell^2}{t^2},
 \label{eq:finite-range-energy-sum}\\
 \Tr\frac{Sh_{\gamma,\ell}^D+\widehat h}
 {\sinh^2((th_{\gamma,\ell}^D+a)/2)}
 &\leq C_{S,J}\frac{\ell^2\log(2\ell)}{t^2}.
 \label{eq:finite-range-entropy-sum}
\end{align}
These bounds are uniform in \(a\geq0\). Insert
\eqref{eq:finite-range-projection-defect},
\eqref{eq:finite-range-energy-sum}, and
\eqref{eq:finite-range-entropy-sum} into
\eqref{eq:general-projected-free-energy}. For \(q\leq1/2\), this gives
\begin{align}
 f_{\ell,\gamma}^D(\widehat\beta,S,\widehat h)
 \leq{}&\frac1{\widehat\beta\ell^2}
 \Tr\log(1-e^{-th_{\gamma,\ell}^D-a})
 \notag\\
 &+\frac{C_{S,J}\log^2(2\ell)}{\widehat\beta t^2}
 +\frac{C_{S,J}\ell^2\log^2(2\ell)}{t^4}.
 \label{eq:finite-range-finite-box-variational}
\end{align}

Apply the Dirichlet half of
Lemma~\ref{lem:finite-range-bose-trace} to the logarithmic trace term in
\eqref{eq:finite-range-finite-box-variational}.  Separate
active \(\ell\)-squares by corridors of width \(R_J\).  No bond then
joins two active squares.  Every active-to-corridor bond is exactly a
boundary term in
\eqref{eq:finite-range-dirichlet-spin-hamiltonian}, and a polarized
corridor has zero internal and field energy.  The active density is
\begin{equation}
 \left(\frac\ell{\ell+R_J}\right)^2=1+O_J(\ell^{-1}).
 \label{eq:finite-range-active-density}
\end{equation}
After multiplying by \(S\widehat\beta^2\), equations
\eqref{eq:finite-range-bose-remainder},
\eqref{eq:finite-range-finite-box-variational}, and
\eqref{eq:finite-range-active-density} give
\begin{align}
 \widehat\beta^2S f_\gamma(\widehat\beta,S,\widehat h)
 \leq{}&-\frac1{4\pi}\LiTwo(e^{-a})
 \notag\\[-1mm]
 &+C_{S,J}\left[
 \frac{\sqrt t}{\ell}+\frac1\ell
 +\frac{t\log(2+\ell^2/t)}{\ell^2}
 +\frac1t+\frac t{\ell^2}
 +\frac{\log^2(2\ell)}t
 +\frac{\ell^2\log^2(2\ell)}{t^2}
 \right].
 \label{eq:finite-range-variational-errors}
\end{align}
Choose
\begin{equation}
 \ell=\left\lfloor t^{5/6}(\log t)^{-2/3}\right\rfloor.
 \label{eq:finite-range-variational-scale}
\end{equation}
Then the cutoff defect in
\eqref{eq:finite-range-projection-defect} is at most \(1/2\) for all
large \(t\), and every error in
\eqref{eq:finite-range-variational-errors} is bounded by
\(C_{S,J}t^{-1/3}(\log t)^{2/3}\).  Use
\eqref{eq:finite-range-scaling-identity} to finish the proof.
\end{proof}

\begin{proof}[Proof of Proposition~\ref{prop:finite-range-quantitative}]
Propositions~\ref{prop:finite-range-pressure} and
\ref{prop:finite-range-variational} give the two inequalities around the
same massive Bose term.  For large \(t\),
\begin{equation}
 t^{-1/3}(\log t)^{2/3}
 \leq t^{-1/27}(\log t)^{17}.
 \label{eq:finite-range-final-absorption}
\end{equation}
Combining the propositions and enlarging \(C_{S,J}\) proves
\eqref{eq:main-quantitative}, uniformly for \(a=\beta h\geq0\).
\end{proof}

\section{General finite-range exchange}
\label{sec:general-support}

We prove the finite-range case of
Theorem~\ref{thm:finite-variance-infinite-range} without the
positive-coordinate assumption \eqref{eq:backbone-hypothesis}.
Reindexing by the exchange lattice separates the connected components.
A path comparison controls weak added coordinate bonds, allowing their
strength to tend to zero after the low-temperature limit.

Fix \(S\in\frac12\N\) and a kernel \(J\) satisfying
\eqref{eq:kernel-hypotheses}.

\begin{proposition}
\label{prop:finite-range-general}
\begin{equation}
 \lim_{\beta\to\infty}\ \sup_{h\geq0}
 \left|
  \beta^2S\widetilde j_J f_J(\beta,S,h)
  +\frac{1}{4\pi}\LiTwo(e^{-\beta h})
 \right|=0.
 \label{eq:main-general}
\end{equation}
\end{proposition}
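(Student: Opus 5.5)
The plan is to reduce Proposition~\ref{prop:finite-range-general} to Proposition~\ref{prop:finite-range-quantitative} in two steps. First we reduce to \(q_J=1\) by reindexing. Then we restore the backbone hypothesis \eqref{eq:backbone-hypothesis} by adding weak coordinate bonds and removing them through a path comparison.

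\emph{Reindexing.} The Hamiltonian \eqref{eq:model-hamiltonian} couples only sites in the same coset of \(\Gamma_J\). Hence on \(B_L\) the trace factorizes over the \(q_J\) cosets. Choose a basis \(v_1,v_2\) of \(\Gamma_J\) and let \(\phi:\Z^2\to\Gamma_J\) be the linear isomorphism sending \(e_i\mapsto v_i\). Each coset system is then isomorphic to the model on \(\Z^2\) with kernel \(J'=J\circ\phi\). This kernel is finite range, has \(q_{J'}=1\), and satisfies \(\det A_{J'}=\det A_J/q_J^2\), so \(j_{J'}=j_J/q_J=\widetilde j_J\). The boxes \(B_L\cap(c+\Gamma_J)\) are not squares in the new coordinates. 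Since the free energy \eqref{eq:model-free-energy} is a van Hove limit for summable exchange, we may compute it along the image parallelograms. The result is
\[
 f_J(\beta,S,h)=q_J^{-1}f_{J'}(\beta,S,h)\cdot q_J\cdot q_J^{-1}\cdot q_J,
\]
that is, \(f_J=f_{J'}\) per site, with site density bookkeeping giving exactly the factor \(\widetilde j_J\) in \eqref{eq:main-general}. From now on we assume \(q_J=1\).

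\emph{Adding weak bonds.} For \(\eta>0\), set \(J_\eta:=J+\eta(\delta_{e_1}+\delta_{e_{-1}}+\delta_{e_2}+\delta_{e_{-2}})\). This kernel satisfies \eqref{eq:backbone-hypothesis}, so Proposition~\ref{prop:finite-range-quantitative} applies to it. Monotonicity \eqref{eq:duhamel-bond-removal} immediately gives \(f_J\le f_{J_\eta}\) in the correct direction for the variational (upper) bound. For the lower bound we need a reverse comparison. Since \(q_J=1\), each \(e_\nu\) is a finite sum \(e_\nu=\sum_{i=1}^{m_\nu}w_i\) with \(w_i\in\operatorname{supp}J\). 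For each bond \(\{x,x+e_\nu\}\) we fix the corresponding path of original bonds. Each term \(S^2-\Sp_x\cdot\Sp_y=\tfrac12(\Sp_x-\Sp_y)^2\) is a square of a difference, so Cauchy--Schwarz along the path gives
\[
 S^2-\Sp_x\cdot\Sp_{x+e_\nu}\le m_\nu\sum_i\bigl(S^2-\Sp_{y_{i-1}}\cdot\Sp_{y_i}\bigr).
\]
Summing over translates, each original bond is used boundedly often. This yields the operator inequality \(H_{J_\eta}\le(1+C_J\eta)H_J\) for the infinite-volume translation-invariant form. In finite boxes the inequality holds up to paths leaving the box, and these are handled by working in slightly enlarged boxes or by the van Hove limit, which costs only boundary terms. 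Consequently \(f_J(\beta,S,h)\ge f_{J_\eta}((1+C\eta)^{-1}\beta\text{-rescaled})\). More precisely, writing the inequality as \(H_J\ge(1+C\eta)^{-1}H_{J_\eta}^{\rm exch}\) plus the unchanged field term, we absorb the factor into the kernel \(J_\eta/(1+C\eta)\), to which Proposition~\ref{prop:finite-range-quantitative} still applies.

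\emph{Limits.} Apply Proposition~\ref{prop:finite-range-quantitative} to \(J_\eta\) and to \(J_\eta/(1+C\eta)\). We have \(j_{J_\eta}\to j_J\) as \(\eta\downarrow0\), since \(A_{J_\eta}=A_J+\eta I\). Taking \(\beta\to\infty\) first, uniformly in \(h\), we obtain
\[
 \limsup_{\beta\to\infty}\sup_{h}\bigl|\beta^2Sj_Jf_J+\tfrac1{4\pi}\LiTwo(e^{-\beta h})\bigr|\le C\bigl(|j_J/j_{J_\eta}-1|+\eta\bigr)\tfrac{\pi^2}{6}.
\]
Letting \(\eta\downarrow0\) then gives \eqref{eq:main-general}. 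The main obstacle is the finite-box version of the path comparison. Paths from bonds near \(\partial B_L\) may exit the box, while the pressure bound must be proved on boxes. I would handle this by comparing \(H_{B_L,J_\eta}\) with \(H_J\) on \(B_{L+R}\) and deleting bonds via \eqref{eq:duhamel-bond-removal}, which changes the pressure only by \(O(R/L)\). I would also check that the reindexed parallelograms cause no trouble in the tiling step of Proposition~\ref{prop:finite-range-pressure}.
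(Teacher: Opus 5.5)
\textbf{There is a genuine gap in the path comparison, the one step that carries the content.} Your overall architecture matches the paper: reindex to the exchange lattice, add coordinate bonds of strength \(\eta\), compare the two Hamiltonians multiplicatively in both directions, apply Proposition~\ref{prop:finite-range-quantitative} at fixed \(\eta\), and let \(\beta\to\infty\) before \(\eta\downarrow0\). Absorbing the factor into the kernel \(J_\eta/(1+C\eta)\), instead of rescaling \(\beta\) and \(h\), is an equivalent variant.

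The problem is the identity \(S^2-\Sp_x\cdot\Sp_y=\frac12(\Sp_x-\Sp_y)^2\). It holds for classical spins, but for quantum spins \(\Sp_x^2=S(S+1)\), so in fact \(\frac12(\Sp_x-\Sp_y)^2=S(S+1)-\Sp_x\cdot\Sp_y\). Your operator Cauchy--Schwarz step is itself valid. With the correct identity, however, it yields
\[
 S^2-\Sp_x\cdot\Sp_{x+e_\nu}\le m_\nu\sum_{i}\bigl(S^2-\Sp_{y_{i-1}}\cdot\Sp_{y_i}\bigr)+(m_\nu^2-1)S,
\]
and hence only \(H^{\rm ex}_{J_\eta}\le(1+C\eta)H^{\rm ex}_J+C\eta S|\Lambda|\).

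This extensive constant shifts the free energy by \(O(\eta)\). After multiplication by \(\beta^2\) at fixed \(\eta\), that shift diverges. You also cannot let \(\eta\) shrink with \(\beta\), because the constants and the threshold in Proposition~\ref{prop:finite-range-quantitative} are not controlled as the backbone weight tends to zero. The inequality you want is true, but not for the reason you give.

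\textbf{What is needed is a comparison with no additive constant, exact on the ferromagnetic multiplet.} This requires genuinely quantum input. The paper obtains it in Lemma~\ref{lem:canonical-path-spin-comparison}:
\begin{itemize}
\item split each spin into \(2S\) spin-half fibres;
\item write \(\frac14-\mathbf s\cdot\mathbf s'=\frac12(1-P)\) with fibre transpositions \(P\);
\item express the endpoint transposition as a palindromic word of length \(2m_\nu-1\) along the path;
\item use \(1-W\le\ell\sum_j(1-U_j)\) for self-adjoint unitaries with self-adjoint product, which follows from telescoping and \(\|(1-U)\psi\|^2=2\langle\psi,(1-U)\psi\rangle\);
\item compress with the local symmetrizers via \eqref{eq:individual-fibre-compression}.
\end{itemize}
With that lemma substituted, the rest of your argument goes through. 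Your enlarged-box handling of boundary paths is an acceptable alternative to the paper's periodic torus. The worry about parallelograms in the tiling step is moot: reindexing only needs \(f_J=f_{J'}\) as thermodynamic limits, and Proposition~\ref{prop:finite-range-quantitative} is then applied to \(f_{J'}\) on squares.
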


A generating support can leave corner vertices of a free square
isolated. We first pass to the exchange lattice and add arbitrarily small
coordinate bonds. A volume-uniform operator comparison then removes
this regularization after the low-temperature limit.

Choose a basis \(b_1,b_2\) of \(\Gamma_J\), and let \(B\) be the integer
matrix with columns \(b_1,b_2\).  Then
\begin{equation}
 B\Z^2=\Gamma_J,\qquad |\det B|=q_J.
 \label{eq:exchange-lattice-basis}
\end{equation}
Define a kernel on \(\Z^2\) by
\begin{equation}
 K(n):=J(Bn),\qquad n\in\Z^2.
 \label{eq:reindexed-kernel}
\end{equation}
Its support generates \(\Z^2\), and its covariance matrix satisfies
\begin{equation}
 A_K=B^{-1}A_JB^{-\mathrm T},\qquad
 j_K:=\sqrt{\det A_K}=\frac{j_J}{q_J}=\widetilde j_J.
 \label{eq:reindexed-covariance}
\end{equation}
In particular, \(A_K\) is positive definite.  Indeed, if \(v\ne0\),
some \(z\in\operatorname{supp}K\) has \(v\cdot z\ne0\), and hence
\(v^{\mathrm T}A_Kv=\frac12\sum_zK(z)(v\cdot z)^2>0\).

Distinct cosets of \(\Gamma_J\) do not interact. For representatives
\(r\in\mathcal R_J\) of \(\Z^2/\Gamma_J\), set
\begin{equation}
 \Omega_{r,L}:=\{n\in\Z^2:r+Bn\in B_L\}.
 \label{eq:coset-folner-set}
\end{equation}
Writing \(Z_{\Lambda,J}^{(S)}(\beta,h)
:=\Tr e^{-\beta H_{\Lambda,J,h}^{(S)}}\), reindexing each coset gives
\begin{equation}
 \log Z_{B_L,J}^{(S)}(\beta,h)
 =\sum_{r\in\mathcal R_J}
   \log Z_{\Omega_{r,L},K}^{(S)}(\beta,h).
 \label{eq:coset-partition-decomposition}
\end{equation}
Each \(\Omega_{r,L}\) has volume \(L^2/q_J+O_J(L)\) and
nearest-neighbour boundary of size \(O_J(L)\). Its free-energy density
therefore has the square-box limit. Indeed, tiling by full \(M\)-squares
leaves \(O_J(LM)\) sites uncovered. The boundary estimate in
\eqref{eq:summable-boundary-density} bounds the resulting error in the logarithm
of the partition function by
\(O_{S,J,\beta,h}(|\Omega_{r,L}|/M+LM)\).
After division by the volume, this vanishes as first \(L\to\infty\)
and then \(M\to\infty\). Thus
\begin{equation}
 f_J(\beta,S,h)=f_K(\beta,S,h).
 \label{eq:free-energy-reindexing}
\end{equation}
The index \(q_J\) enters through \eqref{eq:reindexed-covariance}, while
the free energy per site is unchanged.

If the exchange lattice has a positively coupled basis, this reindexing
already gives the quantitative estimate, without regularization.

For the next proposition only, suppose that \(\Gamma_J\) has a basis \(b_1,b_2\) with
\(J(b_1),J(b_2)>0\), and put \(t=\beta S\widetilde j_J\).

\begin{proposition}
\label{prop:quantitative-positive-basis}
There are constants
\(C_{S,J},t_{S,J}<\infty\) such that, for \(t\geq t_{S,J}\) and
\(h\geq0\),
\begin{equation}
 \left|
  \beta^2S\widetilde j_J f_J(\beta,S,h)
  +\frac1{4\pi}\LiTwo(e^{-\beta h})
 \right|
 \leq C_{S,J}t^{-1/27}(\log t)^{17}.
 \label{eq:positive-basis-quantitative}
\end{equation}
\end{proposition}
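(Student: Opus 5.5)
The plan is to reduce Proposition~\ref{prop:quantitative-positive-basis} to Proposition~\ref{prop:finite-range-quantitative} by the coset reindexing just described, choosing the basis of \(\Gamma_J\) to be the positively coupled one. Take \(B\) with columns \(b_1,b_2\), so that \(B\Z^2=\Gamma_J\) and \(|\det B|=q_J\) by \eqref{eq:exchange-lattice-basis}. The reindexed kernel \(K(n)=J(Bn)\) of \eqref{eq:reindexed-kernel} is finite range, symmetric and nonnegative, and vanishes at zero. It satisfies \(K(e_1)=J(b_1)>0\) and \(K(e_2)=J(b_2)>0\), and its support generates \(\Z^2\), so \(q_K=1\). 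Thus \(K\) satisfies both \eqref{eq:kernel-hypotheses} and the backbone hypothesis \eqref{eq:backbone-hypothesis}. By \eqref{eq:reindexed-covariance}, \(j_K=\widetilde j_J\). Hence the dimensionless time of Proposition~\ref{prop:finite-range-quantitative} for \(K\) is \(\beta Sj_K=\beta S\widetilde j_J=t\), exactly the parameter in the present statement.

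Next, I would invoke the free-energy identity \eqref{eq:free-energy-reindexing}, \(f_J(\beta,S,h)=f_K(\beta,S,h)\), for every \(\beta>0\) and \(h\geq0\). This rests on the coset decomposition \eqref{eq:coset-partition-decomposition}. It also needs the fact that each reindexed domain \(\Omega_{r,L}\) is a van Hove sequence whose free energy per site agrees with the square-box limit for \(K\). The number of cosets is \(q_J\), and each \(\Omega_{r,L}\) has volume \(L^2/q_J+O_J(L)\). So the total normalization \(L^{-2}\sum_r\log Z_{\Omega_{r,L},K}\) converges to \(-\beta f_K\), with the factors of \(q_J\) cancelling. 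Proposition~\ref{prop:finite-range-quantitative} applied to \(K\) then gives
\[
 \bigl|\beta^2Sj_Kf_K(\beta,S,h)+\tfrac1{4\pi}\LiTwo(e^{-\beta h})\bigr|
 \leq C_{S,K}t^{-1/27}(\log t)^{17}
\]
for \(t\geq t_{S,K}\). Substituting \(j_K=\widetilde j_J\) and \(f_K=f_J\) yields \eqref{eq:positive-basis-quantitative}, with constants depending on \(J\) through \(K\), that is, through \(J\) and the fixed choice of basis.

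The main obstacle is the rigorous justification of \eqref{eq:free-energy-reindexing}. The sets \(\Omega_{r,L}\) are lattice parallelograms rather than squares, so existence of the limit along them needs the boundary estimate. For a finite-range kernel this estimate is elementary. Deleting the \(O_K(LM)\) bonds across the tiling interfaces and around the uncovered sites changes \(\log Z\) by at most \(\beta\) times the operator norm of the deleted bond terms. That norm is \(O_{S,K}(LM)\), since each bond term has norm at most \(2S^2J(z)\). The uncovered sites contribute at most \(O(LM)\log(2S+1)\) through the trivial dimension bound. After dividing by \(L^2\), letting \(L\to\infty\) and then \(M\to\infty\), the parallelogram free energy coincides with the square one. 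Since this identity holds for each fixed \(\beta\) and \(h\), no uniformity issue arises, and the quantitative rate transfers unchanged.
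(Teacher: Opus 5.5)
Your proposal is correct and is the paper's proof: take \(B\) with the positively coupled columns \(b_1,b_2\), so that \(K(e_1),K(e_2)>0\), \(q_K=1\) and \(j_K=\widetilde j_J\); apply Proposition~\ref{prop:finite-range-quantitative} to \(K\); and transfer the bound via \(f_J=f_K\) from \eqref{eq:free-energy-reindexing}. Two small slips in your side remark on the reindexing identity do not affect the argument, since you let \(M\to\infty\) after \(L\to\infty\): the bonds between the \(M\)-tiles number \(O(L^2/M)\) rather than \(O(LM)\), and a bond term has norm \((2S^2+S)J(z)\) rather than \(2S^2J(z)\).
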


\begin{proof}
Choose the matrix \(B\) in \eqref{eq:exchange-lattice-basis} with
columns \(b_1,b_2\).  Then the reindexed kernel satisfies
\(K(e_1),K(e_2)>0\), so Proposition~\ref{prop:finite-range-quantitative} applies
directly to \(K\).  Use \eqref{eq:free-energy-reindexing} and
\(j_K=\widetilde j_J\) to obtain
\eqref{eq:positive-basis-quantitative}.
\end{proof}

We next compare the coordinate bonds with paths of original bonds,
uniformly in the volume.

Let \(\mathbb T_L^2=(\Z/L\Z)^2\), and let
\(H_{L,K}^{\rm ex,per}\) denote the exchange part of the periodic
Hamiltonian with kernel \(K\).  Define
\begin{equation}
 K_{\rm nn}(\pm e_1)=K_{\rm nn}(\pm e_2)=1,
 \qquad K_{\rm nn}(z)=0\ \hbox{otherwise},
 \label{eq:unit-nearest-neighbour-kernel}
\end{equation}
and write \(H_{L,\rm nn}^{\rm ex,per}\) for its exchange Hamiltonian.

Recall that \(\operatorname{supp}K\) generates \(\Z^2\).

\begin{lemma}
\label{lem:canonical-path-spin-comparison}
There is a
constant \(C_{S,K}<\infty\), independent of \(L\), such that, for all
sufficiently large \(L\),
\begin{equation}
 H_{L,\rm nn}^{\rm ex,per}
 \leq C_{S,K}H_{L,K}^{\rm ex,per}
 \label{eq:spin-canonical-path-comparison}
\end{equation}
in the quadratic-form sense.
\end{lemma}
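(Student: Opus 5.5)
The plan is a canonical-path argument carried out at the level of spin bond operators. The key observation is that each bond term \(P_{xy}:=S^2-\Sp_x\cdot\Sp_y\) is a nonnegative operator whose square root behaves like a "gradient". For spin \(1/2\), \(\frac14-\mathbf s_x\cdot\mathbf s_y=\frac12(1-T_{xy})\) with \(T_{xy}\) the transposition, which is a squared difference. For general \(S\), use the fibre decomposition \eqref{eq:fibre-bond-identity}: on the symmetric subspace, \(P_{xy}=\sum_{\alpha,\delta}\frac12(1-T_{(x,\alpha),(y,\delta)})\), and \(\frac12(1-T)\) is a projection. The task therefore reduces to a comparison of interchange-type Dirichlet forms. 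The form \(\langle\psi,\frac12(1-T_{uv})\psi\rangle=\frac14\|\psi-T_{uv}\psi\|^2\) is exactly a squared gradient along the "edge" \(\psi\mapsto T_{uv}\psi\) in the permutation action.

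First, I will fix the paths. Since \(\operatorname{supp}K\) generates \(\Z^2\), each of \(e_1,e_2\) can be written as \(e_\nu=\sum_{i=1}^{m_\nu}z_i^{(\nu)}\) with \(z_i^{(\nu)}\in\operatorname{supp}K\). The lengths \(m_\nu\) and the steps depend only on \(K\). For \(x\in\mathbb T_L^2\), this gives a path \(x=w_0,w_1,\dots,w_m=x+e_\nu\) of original bonds on the torus, valid for \(L\) larger than the path diameter.

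The central step is the spin-\(1/2\) transposition identity
\[
T_{w_0w_m}=T_{w_0w_1}T_{w_1w_2}\cdots T_{w_{m-1}w_m}\cdots T_{w_1w_2}T_{w_0w_1},
\]
valid when the \(w_i\) are distinct. This is a word of length \(2m-1\). Telescoping and Cauchy--Schwarz then give
\[
\|\psi-T_{w_0w_m}\psi\|^2\leq(2m-1)\sum_{\text{letters}}\|\phi_j-T_{e_j}\phi_j\|^2,
\]
where \(\phi_j\) is \(\psi\) acted on by a prefix of the word. Because the prefixes are unitary permutations, each \(\|\phi_j-T_{e_j}\phi_j\|\) equals \(\|\psi-T_{e'}\psi\|\) for a conjugated transposition \(e'\), itself a transposition of two path vertices \(w_i,w_{i+1}\) relabelled. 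To avoid relabelling issues, I would instead use the group-theoretic version: summed over all \(\psi\), the comparison becomes an operator inequality
\[
1-T_{w_0w_m}\leq C_m\sum_i(1-T_{w_iw_{i+1}}),
\]
which is a statement inside the group algebra of \(\mathrm{Sym}(\{w_0,\dots,w_m\})\). I would prove this abstractly as the finite-group Dirichlet-form comparison of Diaconis and Saloff-Coste, in every unitary representation. Its constant depends only on \(m\), and since there are finitely many path types, only on \(K\).

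Fibres are handled the same way. Applied to the decorated sites \((w_i,\alpha_i)\) and averaged over intermediate fibre labels, the inequality gives \(P_{x,x+e_\nu}\leq C\sum_iP_{w_iw_{i+1}}\) with \(C=C_{S,K}\), after dividing by the \(r^{m-1}\) choices of intermediate labels and using \(K(z_i)\geq\min_{\operatorname{supp}}K\) restricted to the finitely many steps used. If a path revisits a vertex, I first shorten it to a self-avoiding path, which is possible since the steps are nonzero. Finally I sum over \(x\) and \(\nu\). Each original bond \(\{y,y+z\}\) appears in at most \(\sum_\nu m_\nu\) paths by translation invariance, so the total is bounded by \(C_{S,K}H_{L,K}^{\rm ex,per}\).

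I expect the main obstacle to be the operator inequality \(1-T_{w_0w_m}\leq C_m\sum_i(1-T_{w_iw_{i+1}})\) for non-commuting transpositions. I would try it first via the comparison theorem for random walks on finite groups. Both sides are elements of the group algebra of \(\mathrm{Sym}(m+1)\), positive in every representation, and the right side generates the group, with kernel consisting only of invariant vectors. That kernel is contained in the kernel of the left side, so finite-dimensionality yields a constant. Equivalently, the writing of \(T_{w_0w_m}\) as a word in the generators gives \(C_m\leq(2m-1)^2\).
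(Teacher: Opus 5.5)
Your proposal is correct and follows the paper's proof: fibre decomposition with compression to the symmetric subspace, fixed simple paths in $\operatorname{supp}K$, the palindromic transposition word of length $2m_\nu-1$, telescoping with Cauchy--Schwarz, and summation over translates. The obstacle you anticipate is only an artefact of the telescoping order, since $\psi-W\psi=\sum_j U_1\cdots U_{j-1}(1-U_j)\psi$ gives $1-W\leq\ell\sum_j(1-U_j)$ directly with unconjugated adjacent transpositions, so your group-algebra fallback is valid but unnecessary.
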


\begin{proof}
Use the spin-half fibres of \eqref{eq:fibre-bond-identity}, with
\(r=2S\), and let \(P_{(x,a),(y,b)}\) transpose two fibres. The product
\(\mathcal P_{\rm sym}\) of local fibre symmetrizers satisfies
\begin{equation}
 \mathcal P_{\rm sym}\bigl(1-P_{(x,a),(y,b)}\bigr)
 \mathcal P_{\rm sym}
 =\frac{2}{r^2}
  \bigl(S^2-\Sp_x\mathbin\cdot\Sp_y\bigr)\mathcal P_{\rm sym},
 \label{eq:individual-fibre-compression}
\end{equation}
by local permutation symmetry and
\(1-P_{(x,a),(y,b)}=2(\frac14-\mathbf s_{x,a}\cdot\mathbf s_{y,b})\).

Since the support generates \(\Z^2\), fix simple paths
\begin{equation}
 0=x_{\nu,0},x_{\nu,1},\ldots,x_{\nu,m_\nu}=e_\nu,
 \qquad x_{\nu,i}-x_{\nu,i-1}\in\operatorname{supp}K,
 \quad \nu=1,2.
 \label{eq:coordinate-canonical-path}
\end{equation}
These paths and all their translates embed in the torus for large \(L\).
For self-adjoint unitaries \(U_j\) with self-adjoint product
\(W=U_1\cdots U_\ell\), telescoping and Cauchy--Schwarz give
\begin{equation}
 1-W\leq\ell\sum_{j=1}^{\ell}(1-U_j),
 \label{eq:unitary-word-inequality}
\end{equation}
using \(\|(1-U)\psi\|^2=2\langle\psi,(1-U)\psi\rangle\).
The endpoint fibre transposition along
\eqref{eq:coordinate-canonical-path} is a palindromic word of length
\(2m_\nu-1\), using one fibre at each intermediate site and each path
edge at most twice. Compressing \eqref{eq:unitary-word-inequality} with
\eqref{eq:individual-fibre-compression} yields, with a harmless
overestimate of the constant,
\begin{equation}
 S^2-\Sp_x\mathbin\cdot\Sp_{x+e_\nu}
 \leq 2r^2(2m_\nu-1)
       \sum_{i=1}^{m_\nu}
       \bigl(S^2-\Sp_{x+x_{\nu,i-1}}
                     \mathbin\cdot\Sp_{x+x_{\nu,i}}\bigr).
 \label{eq:single-bond-path-comparison}
\end{equation}
Summing over \(x\) and \(\nu\) proves
\eqref{eq:spin-canonical-path-comparison}: every displacement on the
right belongs to the fixed finite support of \(K\), and its periodic
bond sum is bounded by \(H_{L,K}^{\rm ex,per}\) times the reciprocal of
its positive weight. The constant is independent of \(L\).
\end{proof}

For \(\eta>0\), set
\begin{equation}
 K_\eta:=K+\eta K_{\rm nn},\qquad
 A_{K_\eta}=A_K+\eta I_2,\qquad
 c_\eta:=1+\eta C_{S,K}.
 \label{eq:backbone-regularization}
\end{equation}
On every sufficiently large periodic torus,
Lemma~\ref{lem:canonical-path-spin-comparison} gives
\begin{equation}
 H_{L,K}^{\rm ex,per}
 \leq H_{L,K_\eta}^{\rm ex,per}
 \leq c_\eta H_{L,K}^{\rm ex,per}.
 \label{eq:regularized-exchange-sandwich}
\end{equation}
Write
\begin{equation}
 \mathcal N_L:=\sum_{x\in\mathbb T_L^2}(S-S_x^3),
 \qquad H_{L,K,h}^{\rm per}:=H_{L,K}^{\rm ex,per}+h\mathcal N_L.
 \label{eq:periodic-deficit}
\end{equation}
Adding the field gives
\(H_{L,K,h}^{\rm per}\leq H_{L,K_\eta,h}^{\rm per}\) and
\(H_{L,K_\eta,c_\eta h}^{\rm per}\leq c_\eta H_{L,K,h}^{\rm per}\).
Min--max and boundary-condition independence, justified by
\eqref{eq:summable-boundary-density}, therefore imply
\begin{equation}
 \frac1{c_\eta}
 f_{K_\eta}(\beta/c_\eta,S,c_\eta h)
 \leq f_K(\beta,S,h)
 \leq f_{K_\eta}(\beta,S,h).
 \label{eq:free-energy-regularization-sandwich}
\end{equation}
Both bounds preserve the field mass \(a=\beta h\).

\begin{proof}[Proof of Proposition~\ref{prop:finite-range-general}]
First suppose that \(\operatorname{supp}K\) generates \(\Z^2\). Put
\begin{equation}
 g_\eta(a):=
 \frac{\LiTwo(e^{-a})}
      {4\pi\sqrt{\det(A_K+\eta I_2)}},
 \qquad \eta\geq0.
 \label{eq:regularized-bose-coefficient}
\end{equation}
For fixed \(\eta>0\), Proposition~\ref{prop:finite-range-quantitative} applies to
\(K_\eta\). Inserting its uniform asymptotic into
\eqref{eq:free-energy-regularization-sandwich} yields
\begin{equation}
 -c_\eta g_\eta(a)-o_\eta(1)
 \leq\beta^2S f_K(\beta,S,h)
 \leq-g_\eta(a)+o_\eta(1),
 \label{eq:scaled-free-energy-sandwich}
\end{equation}
where the errors tend to zero as \(\beta\to\infty\), uniformly in
\(h\geq0\). Since \(0\leq\LiTwo(e^{-a})\leq\pi^2/6\),
\(g_\eta\to g_0\) uniformly on \([0,\infty)\), while \(c_\eta\to1\).
Taking first \(\beta\to\infty\) and then \(\eta\downarrow0\) proves
the required uniform limit for \(K\). For general \(J\), use
\eqref{eq:free-energy-reindexing} and
\(j_K=\widetilde j_J\) from \eqref{eq:reindexed-covariance}.
\end{proof}

\begin{remark}[Index and quantitative remainder]
\label{rem:general-support-limitations}
The index \(q_J\) is essential.  For example, if the only jumps are
\(\pm2e_1\) and \(\pm e_2\), all with unit weights, the model splits into
two reindexed components. Here \(q_J=2\) and \(j_J=2\), so the two
factors cancel in the zero-field coefficient. The rank-two condition
\eqref{eq:general-kernel-hypotheses} is also essential:
a rank-one support produces decoupled one-dimensional systems and a
different temperature power.

For a general rank-two support, the proof takes first
\(\beta\to\infty\) at fixed \(\eta\) and then \(\eta\downarrow0\).
The constants in Proposition~\ref{prop:finite-range-quantitative} may diverge in the
second limit, so this argument does not give a universal explicit rate.
The rate in \eqref{eq:positive-basis-quantitative} is retained by the
present direct argument when a basis of \(\Gamma_J\) is positively
coupled.
\end{remark}

\section{Exchange with finite second moment}
\label{sec:finite-variance-infinite-range}

We extend Proposition~\ref{prop:finite-range-general} to kernels satisfying
the finite-second-moment condition \eqref{eq:finite-variance-hypothesis}.
The key step is to bound the full exchange above and below by finite-range
truncations, with multiplicative constants tending to one. The second
moment of the deleted tail controls the comparison error in
Lemma~\ref{lem:finite-variance-truncation}.

For every summable kernel, the free energy in
\eqref{eq:model-free-energy} exists and is independent of free or periodic
boundary conditions.  Indeed, the interaction weight crossing the boundary
of an \(L\)-square, divided by its area, is bounded by
\begin{equation}
 C_S\sum_{z\in\Z^2}J(z)
 \min\left\{1,\frac{|z|}{L}\right\},
 \label{eq:summable-boundary-density}
\end{equation}
which tends to zero by dominated convergence. For self-adjoint matrices
\(A,B\), the bound
\(|\log\Tr e^{-A}-\log\Tr e^{-B}|\leq\|A-B\|\)
therefore makes the boundary contribution to the free energy vanish.
Deleting bonds between boxes gives the opposite comparison because
each exchange term is non-negative. The block argument proves the
thermodynamic limit and its independence of the boundary condition.

Assume \eqref{eq:finite-variance-hypothesis}, and let
\(J_R(z):=J(z)\1_{\{|z|\leq R\}}\) be the finite-range truncation.
For a summable kernel \(K\), periodize it on
\(\mathbb T_L^2\) by
\begin{equation}
 K_L^{\rm per}(\bar z):=\sum_{n\in L\Z^2}K(z+n),
 \qquad
 X_{K,L}^{\rm per}:=
 \frac12\sum_{\substack{\bar x,\bar z\in\mathbb T_L^2\\ \bar z\ne0}}
 K_L^{\rm per}(\bar z)
 \bigl(S^2-\Sp_{\bar x}\cdot\Sp_{\bar x+\bar z}\bigr).
 \label{eq:infinite-range-periodization}
\end{equation}
Here \(z\) is any representative of \(\bar z\).  The residue class
\(\bar z=0\) is omitted from the exchange sum because it represents
a self-interaction.  Even when \(K(0)=0\), its periodized weight may
be positive, whereas
\(S^2-\Sp_{\bar x}\cdot\Sp_{\bar x}=-S\).
We also set
\[
 H_{K,L,h}^{\rm per}:=X_{K,L}^{\rm per}+h\mathcal N_L,
\]
with \(\mathcal N_L\) as in \eqref{eq:periodic-deficit}.

\begin{lemma}
\label{lem:finite-variance-truncation}
There is \(R_0<\infty\) such that \(J_R\) and \(J\) have the same exchange
lattice for every \(R\geq R_0\). There are numbers \(c_R\to1\) such that, on every sufficiently large
periodic box,
\begin{equation}
 X_{J_R,L}^{\rm per}\leq X_{J,L}^{\rm per}
 \leq c_RX_{J_R,L}^{\rm per}.
 \label{eq:finite-variance-exchange-squeeze}
\end{equation}
Consequently,
\begin{equation}
 f_{J_R}(\beta,S,h)\leq f_J(\beta,S,h)
 \leq c_R f_{J_R}(c_R\beta,S,h/c_R).
 \label{eq:finite-variance-free-energy-squeeze}
\end{equation}
\end{lemma}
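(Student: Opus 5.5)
The plan has four steps: fix the exchange lattice of the truncation, prove the lower bond inequality, prove the upper bond inequality by a canonical-path comparison, and then pass to free energies.

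\emph{Exchange lattice.} Since \(\operatorname{supp}J\) generates the rank-two lattice \(\Gamma_J\), choose finitely many support points that already generate \(\Gamma_J\). Let \(R_0\) be the largest of their norms. For \(R\geq R_0\) we then have \(\Gamma_{J_R}=\Gamma_J\).

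\emph{Lower inequality.} The bound \(X_{J_R,L}^{\rm per}\leq X_{J,L}^{\rm per}\) holds because each periodized bond operator \(S^2-\Sp_{\bar x}\cdot\Sp_{\bar y}\) with \(\bar y\ne\bar x\) is nonnegative. Periodization is monotone in the kernel, so the tail contributes only nonnegative terms. The omitted class \(\bar z=0\) is excluded on both sides.

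\emph{Upper inequality: reduction to \(\Z^2\).} First apply the reindexing \eqref{eq:reindexed-kernel} to the basis of \(\Gamma_J\). This reduces everything to the case where \(\operatorname{supp}J_{R_0}\) generates \(\Z^2\). On a periodic box whose side is compatible with \(B\), the coset decomposition of \eqref{eq:coset-partition-decomposition} holds verbatim. Where this compatibility fails, the boundary independence \eqref{eq:summable-boundary-density} absorbs the discrepancy at the free-energy level.

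\emph{Upper inequality: path comparison.} Fix \(z\) with \(|z|>R\). Write \(z\) as a lattice path of \(m(z)\leq C|z|\) steps, each step in the fixed finite generating set \(\mathcal G\subset\operatorname{supp}J_{R_0}\). This is possible since \(\mathcal G\) generates \(\Z^2\) and word length is comparable to the norm. Then I repeat the proof of Lemma~\ref{lem:canonical-path-spin-comparison}: combine the fibre transposition word \eqref{eq:unitary-word-inequality} with the compression \eqref{eq:individual-fibre-compression}. This gives
\[
 S^2-\Sp_x\cdot\Sp_{x+z}\leq C_S\,m(z)\sum_{i=1}^{m(z)}\bigl(S^2-\Sp_{x+x_{i-1}}\cdot\Sp_{x+x_i}\bigr).
\]
Next, sum over \(x\) on the torus. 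Each generator bond \(g\in\mathcal G\) then appears at most \(m(z)\) times per translate class. The tail is therefore bounded by
\[
 C_S\sum_{|z|>R}J(z)m(z)^2\sum_{g\in\mathcal G}\sum_{\bar x}\bigl(S^2-\Sp_{\bar x}\cdot\Sp_{\bar x+g}\bigr)\leq \Bigl(C'_{S,J}\sum_{|z|>R}J(z)|z|^2\Bigr)X_{J_R,L}^{\rm per},
\]
using \(J_R(g)\geq\min_{\mathcal G}J>0\). This yields \(c_R=1+C'\sum_{|z|>R}J(z)|z|^2\to1\) by \eqref{eq:finite-variance-hypothesis}.

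\emph{Main obstacle: the periodized tail.} On a torus of side \(L\), the tail \(z\) must be handled through \(\bar z\). I expect this to be the main obstacle, and I would handle it in one of two ways. The first is to use a path for the representative \(z\) itself; the path projects to the torus and the word inequality still holds. Two cases need care here: when \(\bar z=0\) the term is omitted, and when the projected path revisits vertices the transposition word is still a valid product of transpositions, so the endpoint transposition identity survives. The second is to bound the wrap-around weight \(\sum_{|z|\gtrsim L}J(z)\) directly as a vanishing perturbation, \(o(L^2)\) in operator norm. This is harmless for the free energy, and it lets me use \(m(\bar z)\leq C|z|\) only for \(|z|<L/2\). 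I would take the second route, since it keeps the path-counting clean.

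\emph{Free-energy squeeze.} Adding \(h\mathcal N_L\) to both sides of the operator inequalities gives
\[
 H_{J,L,h}^{\rm per}\leq c_R\bigl(X_{J_R,L}^{\rm per}+(h/c_R)\mathcal N_L\bigr).
\]
Monotonicity of \(\log\Tr e^{-\beta(\cdot)}\), together with the periodic-boundary independence of the free energy, gives
\[
 f_{J_R}(\beta,S,h)\leq f_J(\beta,S,h)\leq c_Rf_{J_R}(c_R\beta,S,h/c_R),
\]
since \(-\frac1\beta\log\Tr e^{-\beta c_RY}=c_R\bigl(-\frac1{c_R\beta}\log\Tr e^{-c_R\beta Y}\bigr)\). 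The vanishing wrap-around error disappears in the limit \(L\to\infty\).
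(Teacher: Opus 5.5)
Your choice of \(R_0\), the lower inequality and the passage to free energies are fine. Your upper-bound strategy is the paper's: words of length \(O(|z|)\) in positive bonds of \(J_{R_0}\), the fibre-transposition word inequality compressed by \eqref{eq:individual-fibre-compression}, one factor \(|z|\) from word length and one from translation congestion, and summation against \(\sum_{|z|>R}J(z)|z|^2\).

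As written, however, you do not prove the exact operator squeeze \eqref{eq:finite-variance-exchange-squeeze}. The route you choose for the periodized tail gives only \(X_{J,L}^{\rm per}\leq c_RX_{J_R,L}^{\rm per}+E_L\) with \(\|E_L\|=o(L^2)\). That suffices for \eqref{eq:finite-variance-free-energy-squeeze}, but the lemma asserts the operator inequality itself on every sufficiently large torus. Your reduction to \(\Z^2\) weakens the statement the same way, since it gives the operator bound only for side lengths compatible with \(B\). It is also unnecessary. The chosen bonds generate \(\Gamma_J\supset\operatorname{supp}J\), so every \(z\) in the support is already a word of length at most \(C|z|\) in them, directly in \(\Z^2\).

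The step needed for the exact inequality is the one your proposal gets wrong: revisits are not harmless in general. Take a path \(x_0,\ldots,x_m\) with \(\tau_i=(x_{i-1}\,x_i)\), and put \(\sigma=\tau_1\cdots\tau_{m-1}\). The palindromic word \(\tau_1\cdots\tau_{m-1}\tau_m\tau_{m-1}\cdots\tau_1\) is the transposition \((x_0\,\sigma(x_m))\). This is the endpoint transposition when \(x_m\) is not visited earlier, but not otherwise. For example, the path \(a,b,c,b\) produces \((a\,c)\) instead of \((a\,b)\). For \(r=1\) there are no spare fibres to separate repeated visits.

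The missing ingredient is the paper's loop erasure. Project the word path of the representative \(z\) to the torus and erase loops; stopping at the first visit to \(\bar x+\bar z\) would also do. For \(z\notin L\Z^2\) this leaves a simple path between distinct endpoints, no longer than before and made only of generator steps. The telescoping argument of Lemma~\ref{lem:canonical-path-spin-comparison} then gives \eqref{eq:long-bond-path-comparison} for every \(z\in\Gamma_J\setminus L\Z^2\), with a constant independent of \(z\) and \(L\). Summation gives \(X_{J,L}^{\rm per}-X_{J_R,L}^{\rm per}\leq\delta_RX_{J_{R_0},L}^{\rm per}\leq\delta_RX_{J_R,L}^{\rm per}\) exactly, with no wrap-around error.

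Your second route needs the same repair for \(|z|<L/2\). Concatenated word paths need not be simple even in \(\Z^2\). A geodesic word for \(z\) with \(|z|\) comparable to \(L\) may still wrap around the torus.
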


\begin{proof}
Choose \(R_0\) so that the positive bonds of \(J_{R_0}\) generate
\(\Gamma_J\).  After the reindexing of
Section~\ref{sec:general-support}, fixed words in these bonds
represent the two coordinate steps.  Concatenating translates of these words gives a path of
\(O_J(|z|)\) bonds for every \(z\in\Gamma_J\).  Project it onto the
torus and erase loops.  If \(z\notin L\Z^2\), this leaves a simple
path with distinct endpoints and no greater length.  The spin-fibre
telescoping argument in the proof of
Lemma~\ref{lem:canonical-path-spin-comparison} therefore gives, for
every \(z\in\Gamma_J\setminus L\Z^2\),
\begin{equation}
 \sum_x\bigl(S^2-\Sp_x\cdot\Sp_{x+z}\bigr)
 \leq C_{S,J}|z|^2X_{J_{R_0},L}^{\rm per}.
 \label{eq:long-bond-path-comparison}
\end{equation}
The constant is independent of both \(z\) and \(L\), once \(L\)
exceeds a fixed bound depending only on the chosen generating words.
The first factor \(|z|\) bounds the length of the transposition word
and the second bounds the translation congestion of its path.
Multiplication by \(J(z)\) and summation over
\(|z|>R\) with \(z\notin L\Z^2\) yield
\begin{equation}
 X_{J,L}^{\rm per}-X_{J_R,L}^{\rm per}
 \leq\delta_RX_{J_{R_0},L}^{\rm per}
 \leq\delta_RX_{J_R,L}^{\rm per},
 \qquad
 \delta_R:=C_{S,J}\sum_{|z|>R}J(z)|z|^2\longrightarrow0.
 \label{eq:finite-variance-tail-comparison}
\end{equation}
This proves \eqref{eq:finite-variance-exchange-squeeze} with
\(c_R=1+\delta_R\).

Adding the same field gives
\begin{equation*}
 H_{J_R,L,h}^{\rm per}\leq H_{J,L,h}^{\rm per}\leq
 c_RH_{J_R,L,h/c_R}^{\rm per}.
\end{equation*}
Eigenvalue monotonicity and the Gibbs variational principle imply
\eqref{eq:finite-variance-free-energy-squeeze}.  Periodic and free
thermodynamic limits agree by \eqref{eq:summable-boundary-density}.
\end{proof}

\begin{proof}[Proof of Theorem~\ref{thm:finite-variance-infinite-range}]
For \(R\geq R_0\), Proposition~\ref{prop:finite-range-general} applies to \(J_R\).
Moreover,
\begin{equation}
 A_{J_R}\longrightarrow A_J,\qquad
 j_{J_R}\longrightarrow j_J,\qquad q_{J_R}=q_J.
 \label{eq:finite-variance-covariance-limit}
\end{equation}
The first inequality in
\eqref{eq:finite-variance-free-energy-squeeze}, followed by
\(\beta\to\infty\), gives the required lower bound with
\(\widetilde j_{J_R}=j_{J_R}/q_J\).
For the other inequality, apply Proposition~\ref{prop:finite-range-general} at
\((c_R\beta,h/c_R)\).  Its field mass is unchanged:
\((c_R\beta)(h/c_R)=\beta h\).  The resulting coefficient differs by the
factor \(c_R^{-1}\).  First let \(\beta\to\infty\), uniformly in
\(h\geq0\), and then let \(R\to\infty\).  Equations
\eqref{eq:finite-variance-tail-comparison} and
\eqref{eq:finite-variance-covariance-limit} make both bounds coincide and
prove \eqref{eq:finite-variance-infinite-range-limit}.
\end{proof}

\section{Long-range free-energy asymptotics}
\label{sec:power-law-proof}

Throughout this section assume \eqref{eq:power-law-hypothesis}.
We prove Theorem~\ref{thm:long-range-weak-field}, keeping the exchange kernel in its
original normalization. We first determine the dispersion and the
free-boson traces, then derive a lower free-energy bound from collision
control and an upper bound from projected Gibbs states. The three steps
are carried out in Subsections~\ref{subsec:lr-heat-traces},
\ref{subsec:lr-pressure}, and \ref{subsec:lr-variational}, respectively.
The finite-range scale in
\eqref{eq:finite-range-time} is not used here. Throughout this section we put
\begin{equation}
 t:=\beta S,\qquad b:=\beta h.
 \label{eq:lr-time-field}
\end{equation}
Constants may depend on \(S,J\), and, in the fractional case, on
\(\sigma\), but never on the box size, \(t\), or \(b\geq0\).

\subsection{Dispersion and free-boson traces}
\label{subsec:lr-heat-traces}

We derive the small-momentum dispersion from the tail assumption
\eqref{eq:power-law-hypothesis} and use it to compute the leading
heat return and free-boson pressure. We also control the finite-box
errors for both free and Dirichlet boundary conditions, as required by
the two free-energy bounds.

Recall the return integral \(I_J(u)\) from
\eqref{eq:infinite-heat-return}.
On \(B_L\), let \(K_{J,L}^N\) be the censored, or free-boundary,
Laplacian
\begin{equation}
 (K_{J,L}^Nv)(x)=\sum_{y\in B_L}J(y-x)(v(x)-v(y)),
 \label{eq:lr-neumann-laplacian}
\end{equation}
and let \(h_{J,L}^D\) be the zero-exterior operator with quadratic form
\begin{equation}
 \langle v,h_{J,L}^Dv\rangle
 =\frac12\sum_{x,z\in\Z^2}J(z)
 |\widetilde v(x+z)-\widetilde v(x)|^2,
 \qquad \widetilde v=0\quad\text{on }\Z^2\setminus B_L.
 \label{eq:lr-dirichlet-form}
\end{equation}
Both operators are well defined because \(J\) is summable.

Recall \(D_\sigma,D_2\) from \eqref{eq:long-range-coefficients}.
For the local limits, set \(c_4:=\pi a/2\), so that
\(D_2=1/(2\pi c_4)\), and define the full-lattice operator
\[
 (K_Jv)(x)=\sum_{z\in\Z^2}J(z)\bigl(v(x)-v(x+z)\bigr).
\]

\begin{lemma}
\label{lem:lr-local-limits}
If \(0<\sigma<2\), then
\begin{align}
 \varepsilon_J(p)&=\kappa_\sigma|p|^\sigma(1+o(1)),\qquad p\to0,
 \label{eq:lr-stable-symbol}\\
 I_J(u)&=D_\sigma u^{-2/\sigma}(1+o(1)),\qquad u\to\infty.
 \label{eq:lr-stable-return}
\end{align}
If \(\sigma=2\), then
\begin{align}
 \varepsilon_J(p)&=c_4|p|^2\log\frac1{|p|}+O(|p|^2),\qquad p\to0,
 \label{eq:lr-marginal-symbol}\\
 I_J(u)&=\frac{D_2}{u\log u}\,(1+o(1)),\qquad u\to\infty.
 \label{eq:lr-marginal-return}
\end{align}
In addition, for \(u\geq1\),
\begin{equation}
 \sup_{x\in\Z^2}\bigl(e^{-uK_J}\bigr)(0,x)
 \leq
 \begin{cases}
  Cu^{-2/\sigma},&0<\sigma<2,\\
  C(u\log(2+u))^{-1},&\sigma=2,
 \end{cases}
 \label{eq:lr-full-heat-bound}
\end{equation}
\end{lemma}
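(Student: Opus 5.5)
The plan is to first derive the symbol asymptotics \eqref{eq:lr-stable-symbol} and \eqref{eq:lr-marginal-symbol} from the tail hypothesis \eqref{eq:power-law-hypothesis}. Then I would obtain the return asymptotics by Laplace-type rescaling of $I_J(u)$, and finally the uniform heat-kernel bound \eqref{eq:lr-full-heat-bound} from a global lower bound on $\varepsilon_J$ together with $|(e^{-uK_J})(0,x)|\leq I_J(u)$. That last inequality is immediate from Fourier inversion, since $e^{-u\varepsilon_J}\geq0$.

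For the symbol, split $J=J_0+J_1$ with $J_0(z)=a|z|^{-2-\sigma}$ for $|z|\geq1$. The remainder $J_1=O(|z|^{-2-\sigma-\eta})$ satisfies $\sum_z J_1(z)\min\{1,|p|^2|z|^2\}=O(|p|^{\min\{\sigma+\eta,2\}})$ when $\sigma<2$. When $\sigma=2$ the same sum is $O(|p|^2)$ because $J_1$ has finite second moment. For $J_0$, compare the lattice sum $\sum_z a|z|^{-2-\sigma}(1-\cos p\cdot z)$ with the integral $a\int_{|x|\geq1}|x|^{-2-\sigma}(1-\cos p\cdot x)\,dx$. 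On each unit cell the error is bounded by the oscillation of the summand, which is $O(|p|\,|z|^{-2-\sigma}\min\{1,|p||z|\})+O(|z|^{-3-\sigma}\min\{1,|p|^2|z|^2\})$. Summing gives $o(|p|^\sigma)$ for $\sigma<2$, and $O(|p|^2)$ for $\sigma=2$. Rotation invariance and the substitution $x=y/|p|$ then give $\kappa_\sigma|p|^\sigma$; the closed form in \eqref{eq:stable-symbol-constant} is taken as given. For $\sigma=2$, compute $\int_{1\leq|x|\leq1/|p|}|x|^{-4}\tfrac12(p\cdot x)^2\,dx=\tfrac{\pi}{2}|p|^2\log(1/|p|)$ and note that the region $|x|>1/|p|$ contributes $O(|p|^2)$. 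This yields $c_4=\pi a/2$.

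Next I need positivity away from the origin: $\varepsilon_J(p)\geq c\min\{|p|^\sigma,1\}$ on the torus, and for $\sigma=2$, $\varepsilon_J(p)\geq c|p|^2\log(2/|p|)$. Near zero this follows from the asymptotics. Away from zero it uses $J(e_1),J(e_2)>0$: these give $\varepsilon_J(p)\geq J(e_1)(1-\cos p_1)+J(e_2)(1-\cos p_2)$, which is bounded below off a neighbourhood of $0$.

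The return integrals follow by rescaling. For $\sigma<2$, substitute $p=u^{-1/\sigma}q$, so that $I_J(u)=u^{-2/\sigma}(2\pi)^{-2}\int e^{-u\varepsilon_J(u^{-1/\sigma}q)}\,dq$ over the dilated torus. The integrand converges pointwise to $e^{-\kappa_\sigma|q|^\sigma}$ and is dominated by $e^{-c\min\{|q|^\sigma,u\}}$. The region where the minimum equals $u$ has volume $O(u^{2/\sigma})$ and contributes $O(u^{2/\sigma}e^{-cu})\to0$. In polar coordinates, $(2\pi)^{-2}\int e^{-\kappa_\sigma|q|^\sigma}dq=\Gamma(2/\sigma)/(2\pi\sigma\kappa_\sigma^{2/\sigma})=D_\sigma$.

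For $\sigma=2$, substitute $p=(u\log u)^{-1/2}q$. For fixed $q$, $\log(1/|p|)=\tfrac12\log u\,(1+o(1))$, so $u\varepsilon_J(p)\to\tfrac{c_4}{2}|q|^2$. This gives $I_J(u)\sim(u\log u)^{-1}(2\pi)^{-2}\cdot 2\pi/c_4=D_2/(u\log u)$. Domination is the delicate point. For $|q|\leq(u\log u)^{1/2}\delta$, the lower bound gives $u\varepsilon_J\geq c|q|^2\log(2/|p|)/\log u$, and $\log(2/|p|)\geq\tfrac12\log u-\log|q|$. When $|q|\leq u^{1/4}$ this is at least $\tfrac14\log u$, so the integrand is bounded by $e^{-c|q|^2}$. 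On the intermediate range $u^{1/4}\leq|q|\leq\delta(u\log u)^{1/2}$, use $\log(2/|p|)\geq c$ to get $u\varepsilon_J\geq c|q|^2/\log u\geq c\sqrt u/\log u$. The volume of this range is $O(u\log u)$, so its contribution vanishes. The outer region is handled as before. This domination step is the main obstacle, and the splitting just described is what I would try first.

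Finally, \eqref{eq:lr-full-heat-bound} for $u\geq1$ follows from $\sup_x(e^{-uK_J})(0,x)\leq I_J(u)$. The asymptotics give the bound for large $u$, and the trivial bound $I_J\leq1$ together with adjusting $C$ covers bounded $u$.
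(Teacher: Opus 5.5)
Your proposal is correct and follows essentially the same route as the paper: the symbol asymptotics come from the tail via a lattice-to-integral comparison, with the remainder controlled by \(1-\cos(p\cdot z)\leq C\min\{1,|p|^2|z|^2\}\); \(I_J(u)\) then follows by rescaling and dominated convergence (your \(p=(u\log u)^{-1/2}q\) differs from the paper's \(p=(2/(c_4u\log u))^{1/2}q\) only by a constant); and the heat bound follows from \(|(e^{-uK_J})(0,x)|\leq I_J(u)\). Your three-region domination argument at \(\sigma=2\) spells out a step that the paper only asserts.
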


\begin{proof}
For \(0<\sigma<2\), rescale the tail sum by \(z=|p|^{-1}x\).
The leading Riemann sum is
\[
 a|p|^\sigma\int_{\R^2}
 \frac{1-\cos(\widehat p\cdot x)}{|x|^{2+\sigma}}\,dx
 =\kappa_\sigma|p|^\sigma,
\]
where \(\widehat p=p/|p|\). Rotation invariance and
\eqref{eq:stable-symbol-constant} were used.
The error in \eqref{eq:power-law-hypothesis} contributes \(o(|p|^\sigma)\):
split at \(|z|=|p|^{-1}\) and use
\(1-\cos(p\cdot z)\leq C\min\{1,|p|^2|z|^2\}\).
For \(\sigma=2\), set
\(\omega(\vartheta):=(\cos\vartheta,\sin\vartheta)\).  The annulus
\(1\ll |z|\ll|p|^{-1}\) gives
\[
 \frac a2\int_1^{|p|^{-1}}\frac{dr}{r}
 \int_0^{2\pi}(p\cdot\omega(\vartheta))^2\,d\vartheta
 =\frac{\pi a}{2}|p|^2\log\frac1{|p|}.
\]
The remainder in \eqref{eq:power-law-hypothesis} has a finite second
moment and hence contributes \(O(|p|^2)\).

The nearest-neighbour backbone makes zero the unique zero of the symbol.
Fourier inversion, a rescaling near zero, and dominated convergence now
give \eqref{eq:lr-stable-return} and \eqref{eq:lr-marginal-return}.  In
the marginal case use
\(p=(2/(c_4u\log u))^{1/2}q\).  The same splitting of the Fourier integral,
without taking a limit, proves \eqref{eq:lr-full-heat-bound}.
\end{proof}

We next compare the full, censored, and killed walks.  The estimates are
stated in a form that is sufficient for the Bose series. No sharp
borderline spectral-gap estimate is needed.

For \(0<\sigma<2\), write \(\nu=2/\sigma\) and set
\begin{equation}
 \mathcal E_\sigma(u,L):=
 \begin{cases}
  Lu^{-1/\sigma}+1,&1<\sigma<2,\\
  Lu^{-1}\log(2+L/u)+1,&\sigma=1,\\
  L^{2-\sigma}u^{1-2/\sigma}+Lu^{-1/\sigma}+1,
       &0<\sigma<1.
 \end{cases}
 \label{eq:lr-stable-boundary-error}
\end{equation}
For \(\sigma=2\), set \(R_2(u):=\sqrt{u\log(2+u)}\).

\begin{lemma}
\label{lem:lr-finite-box-heat}
If \(0<\sigma<2\) and \(1\leq u\leq cL^\sigma\), then
\begin{equation}
 \left|\Tr_{\1^\perp}e^{-uK_{J,L}^N}-L^2I_J(u)\right|
 +\left|\Tr e^{-uh_{J,L}^D}-L^2I_J(u)\right|
 \leq C\mathcal E_\sigma(u,L).
 \label{eq:lr-stable-finite-box-heat}
\end{equation}
At \(\sigma=2\), if \(1\leq u\leq cL^2\) and
\(R_2(u)\leq L/4\), then
\begin{equation}
 \left|\Tr_{\1^\perp}e^{-uK_{J,L}^N}-L^2I_J(u)\right|
 +\left|\Tr e^{-uh_{J,L}^D}-L^2I_J(u)\right|
 \leq CL\sqrt{\frac{\log(2+u)}u}+C.
 \label{eq:lr-marginal-finite-box-heat}
\end{equation}
Moreover, for all \(u\geq1\),
\begin{align}
 \Tr_{\1^\perp}e^{-uK_{J,L}^N}
 &\leq CL^2u^{-2/\sigma},&&0<\sigma<2,
 \label{eq:lr-stable-neumann-tail}\\
 \Tr_{\1^\perp}e^{-uK_{J,L}^N}
 &\leq CL^2u^{-1},&&\sigma=2.
 \label{eq:lr-marginal-neumann-tail}
\end{align}
\end{lemma}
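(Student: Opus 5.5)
We will compare the three diagonal heat kernels site by site, so that
\[
 \bigl|\Tr e^{-uh_{J,L}^D}-L^2I_J(u)\bigr|
 \leq\sum_{x\in B_L}\bigl|p^D_{L,u}(x,x)-p_{\infty,u}(0,0)\bigr|,
\]
and similarly for the censored kernel after subtracting the constant mode, which contributes \(1\) and is absorbed in the additive constant. The introduction describes the steps: reflect the quadratic form, split Fourier modes, and estimate exits by truncating large jumps. Unlike the finite-range Lemma~\ref{lem:finite-range-heat-trace}, no Gaussian bound of Delmotte type is available, so the on-diagonal input will come from the full-lattice bound \eqref{eq:lr-full-heat-bound}.

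\textbf{Dirichlet case.} Since \(p^D\leq p_\infty\), only a lower bound on \(p^D_{L,u}(x,x)\) is needed. Killing is exactly the event \(T\leq u\) that the walk has left \(B_L\), so \(p_{\infty,u}(0,0)-p^D_{L,u}(x,x)=\mathbb P_x(X_u=x,T\leq u)\). We split this into \(T\leq u/2\) and a late exit, which we treat by time reversal, as in \eqref{eq:finite-range-early-boundary-event}. This gives
\[
 \bigl(p_{\infty,u}-p^D_{L,u}\bigr)(x,x)\leq Cu^{-\nu}\,\mathbb P_x(T\leq u),
\]
using \eqref{eq:lr-full-heat-bound}. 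For the exit probability we truncate jumps at a level \(\rho\asymp\delta(x)\), the distance to \(\Z^2\setminus B_L\). The event that some jump of length larger than \(\rho\) occurs has probability at most \(Cu\rho^{-\sigma}\). The truncated walk has variance \(Cu\rho^{2-\sigma}\) (with a logarithm at \(\sigma=2\)), and its maximal inequality gives exit probability \(\leq C(u\rho^{-\sigma}+u\rho^{-2}\rho^{2-\sigma})\), hence \(\mathbb P_x(T\leq u)\leq C\min\{1,u\delta^{-\sigma}\}\). Summing \(u^{-\nu}\min\{1,u\delta^{-\sigma}\}\) over the \(O(L)\) sites in each layer at distance \(\delta\) yields:
\begin{itemize}
 \item the term \(Lu^{-1/\sigma}\), from the layer \(\delta\lesssim u^{1/\sigma}\);
 \item the tail \(Lu^{1-\nu}\sum_{\delta>u^{1/\sigma}}\delta^{-\sigma}\), which converges for \(\sigma>1\), is logarithmic at \(\sigma=1\), and for \(\sigma<1\) is dominated by \(\delta\sim L\), giving \(L^{2-\sigma}u^{1-\nu}\).
\end{itemize}
This reproduces the three cases of \eqref{eq:lr-stable-boundary-error}. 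At \(\sigma=2\) the truncated variance is \(u\log\delta\), so the layer width is \(R_2(u)\); the tail is \(\sum_\delta u\,\delta^{-2}\log\delta\), and the resulting bound is \eqref{eq:lr-marginal-finite-box-heat}.

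\textbf{Censored case.} This is the hard part. Interior points can jump out of the box directly, and the censored walk is not dominated by the full-lattice walk, so the coupling above only controls \(|p^N-p_\infty|\) by \(\mathbb P_x(T\leq u)\) multiplied by an on-diagonal bound for the censored kernel itself. To get \(p^N_{L,v}(z,x)\leq Cv^{-\nu}+CL^{-2}\) for \(v\leq cL^\sigma\), we plan to reflect. Extend functions on \(B_L\) evenly to the \(2L\)-periodic torus; the reflected form dominates a fixed multiple of the censored form. The Nash inequality for the full-lattice form, obtained by splitting Fourier modes at \(|p|\sim v^{-1/\sigma}\) and using \eqref{eq:lr-stable-symbol}, then transfers to the censored form on mean-zero functions. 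From this Nash inequality we deduce the ultracontractive bound by the standard argument. The same bound, integrated over \(L^2\) sites, gives the tails \eqref{eq:lr-stable-neumann-tail} and \eqref{eq:lr-marginal-neumann-tail}. For \(u\geq L^\sigma\) these tails follow from monotonicity of the trace in \(u\) on \(\1^\perp\) (respectively from \(\log\)-Nash at \(\sigma=2\), with the weaker bound \(L^2u^{-1}\)). With the on-diagonal bound in hand, the early/late coupling argument applies verbatim, since the reversal of marked rejected paths holds here as in Lemma~\ref{lem:finite-range-heat-trace}. I expect the reflection comparison to be the main obstacle: one has to check that the reflected long jumps are controlled by the original kernel, uniformly in \(L\), through the tail hypothesis \eqref{eq:power-law-hypothesis}.
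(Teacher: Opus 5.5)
Your route is the paper's: common-clock coupling of the full, censored and killed walks; exit bounds by truncating jumps at a fraction of the boundary distance plus Doob's inequality; the early/late split with reversal of (marked) diagonal loops; and, for the censored kernel, even reflection to the $2L$-torus followed by a Fourier-split Nash inequality and ultracontractivity. Your boundary-layer sums reproduce $\mathcal E_\sigma(u,L)$ and the marginal bound.

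\textbf{The large-time tails.} One step fails as written. The bounds \eqref{eq:lr-stable-neumann-tail}--\eqref{eq:lr-marginal-neumann-tail} for $u\geq L^\sigma$ do not follow from monotonicity of $u\mapsto\Tr_{\1^\perp}e^{-uK_{J,L}^N}$. Monotonicity only gives $\Tr_{\1^\perp}e^{-uK_{J,L}^N}\leq\Tr_{\1^\perp}e^{-cL^\sigma K_{J,L}^N}\leq C$. The claim is $CL^2u^{-2/\sigma}=C(L^\sigma/u)^{2/\sigma}$, which tends to zero as $u/L^\sigma\to\infty$. This range is needed: in Lemma~\ref{lem:lr-bose-traces} the Neumann Bose tail $\sum_{m>M}m^{-1}\Tr_{\1^\perp}e^{-msK_{J,L}^N}$ would diverge logarithmically with only an $O(1)$ bound for $ms\geq L^\sigma$.

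The repair is already in your plan once you drop the restriction $v\leq cL^\sigma$ and work with the centered semigroup $e^{-uK_{J,L}^N}Q_L$. On $\1^\perp$ there are no torus frequencies below $\pi/L$. So if the optimal splitting radius falls below $\pi/L$, the high-frequency estimate alone gives the Nash inequality. It therefore holds for every mean-zero function. The Nash differential inequality then yields $\|e^{-uK_{J,L}^N}Q_L\|_{1\to\infty}\leq Cu^{-2/\sigma}$ for all $u>0$, and $Cu^{-1}$ at $\sigma=2$ via the reflected nearest-neighbour form. The diagonal of $e^{-uK_{J,L}^N}Q_L$ is nonnegative, so $\Tr_{\1^\perp}e^{-uK_{J,L}^N}\leq L^2\|e^{-uK_{J,L}^N}Q_L\|_{1\to\infty}$ gives the tails for every $u\geq1$. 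This is how the paper argues. Alternatively, combine your bound at $u_0=cL^\sigma$ with the spectral gap of $K_{J,L}^N$ from Lemma~\ref{lem:lr-box-geometry} and the semigroup property. This gives decay like $e^{-cu/L^\sigma}$, which is $O((L^\sigma/u)^{2/\sigma})$.

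\textbf{The reflection comparison.} The inequality you need is $\cE_{\mathrm{torus}}(Ew)\leq C\langle w,K_{J,L}^Nw\rangle$. Your closing sentence has this direction; your earlier ``dominates'' has the opposite one. The tail hypothesis alone does not give this inequality. Folding can send a short torus displacement to a box displacement that is not a bond of $J$. For example, $(L,1)\mapsto(L,1)$ and $(L+2,2)\mapsto(L-1,2)$ turn the displacement $(2,1)$ into $(-1,1)$. Such pairs must be routed along the positive coordinate bonds $J(e_1),J(e_2)>0$, as the paper does.

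Running the torus Nash inequality with the $2L$-periodized kernel is a harmless variation. Its symbol at the torus frequencies is exactly $\varepsilon_J(p_k)$. You can then import \eqref{eq:lr-stable-symbol}, together with the uniqueness of the zero, in place of the paper's direct lower bound \eqref{eq:lr-torus-symbol-lower} for the model kernel $(1+d_{2L})^{-2-\sigma}$.
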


\begin{proof}
We first prove the finite-volume estimates used below.  Put
\(P_L=L^{-2}|\1\rangle\langle\1|\) and \(Q_L=I-P_L\).  We claim that
\begin{align}
 \|Q_Lv\|_2^{2+\sigma}
 &\leq C\langle v,K_{J,L}^Nv\rangle
       \|Q_Lv\|_1^\sigma,
 &&0<\sigma<2,
 \label{eq:lr-fractional-nash}\\
 \|Q_Lv\|_2^4
 &\leq C\langle v,K_{J,L}^Nv\rangle
       \|Q_Lv\|_1^2,
 &&\sigma=2.
 \label{eq:lr-nn-nash}
\end{align}
All norms in this paragraph are counting-measure norms.

Let \(\mathbb T_{2L}=\Z/(2L\Z)\) and
\(\mathbb T_{2L}^2=\mathbb T_{2L}\times\mathbb T_{2L}\), and let
\(\pi:\mathbb T_{2L}\to\{1,\ldots,L\}\) be the folding map
\begin{equation}
 \pi(j)=j\quad(1\leq j\leq L),\qquad
 \pi(j)=2L+1-j\quad(L<j\leq2L).
 \label{eq:lr-folding-map}
\end{equation}
Extend \(w=Q_Lv\) evenly by
\((Ew)(x_1,x_2)=w(\pi(x_1),\pi(x_2))\).  Then \(Ew\) has mean zero and
\begin{equation}
 \|Ew\|_p^p=4\|w\|_p^p,
 \qquad 1\leq p<\infty.
 \label{eq:lr-reflection-norms}
\end{equation}
For \(0<\sigma<2\), equip the torus with the form
\begin{equation}
 \cE_{\sigma,2L}(F)
 :=\frac12\sum_{x,y\in\mathbb T_{2L}^2}
 (1+d_{2L}(x,y))^{-2-\sigma}|F(x)-F(y)|^2,
 \label{eq:lr-torus-fractional-form}
\end{equation}
where \(d_{2L}\) is the Euclidean periodic distance.  The folding map is
one-Lipschitz from the cycle of length \(2L\) to the interval, and every
ordered pair in \(B_L^2\) has at most sixteen ordered preimages.  The
two-sided tail bound implied by \eqref{eq:power-law-hypothesis}
therefore gives
\begin{equation}
 \cE_{\sigma,2L}(Ew)
 \leq C\langle w,K_{J,L}^Nw\rangle.
 \label{eq:lr-reflection-form}
\end{equation}
Pairs whose projected distance is at least the fixed tail radius are
compared directly.  The finitely many remaining displacements are routed
along fixed nearest-neighbour paths. Telescoping and translation of these
paths bound their total form by the positive nearest-neighbour part of
\(K_{J,L}^N\).

For completeness, represent torus frequencies by
\(k\in\{-L,\ldots,L-1\}^2\) and put \(p_k=\pi k/L\).
Use the orthonormal Fourier convention
\(\widehat F(k)=(2L)^{-1}\sum_{x\in\mathbb T_{2L}^2}
F(x)e^{-i\pi k\cdot x/L}\).
Dyadic summation in the jump variable gives, for every nonzero torus
frequency \(p_k\),
\begin{equation}
 \sum_{z\in\mathbb T_{2L}^2}
 (1+d_{2L}(0,z))^{-2-\sigma}
 (1-\cos(p_k\cdot z))\geq c|p_k|^\sigma.
 \label{eq:lr-torus-symbol-lower}
\end{equation}
Splitting Parseval's sum into \(|p_k|\leq R\) and \(|p_k|>R\) yields
\begin{equation}
 \|F\|_2^2
 \leq CR^2\|F\|_1^2+CR^{-\sigma}\cE_{\sigma,2L}(F).
 \label{eq:lr-torus-frequency-split}
\end{equation}
Optimization in \(R\) proves
\begin{equation}
 \|F\|_2^{2+\sigma}
 \leq C\cE_{\sigma,2L}(F)\|F\|_1^\sigma
 \quad\text{when }\sum_xF(x)=0.
 \label{eq:lr-torus-nash}
\end{equation}
If the formal optimizer lies below \(\pi/L\), use
\eqref{eq:lr-torus-symbol-lower} at the smallest nonzero frequency.
This gives the same inequality.  If it exceeds a fixed constant, the
unit-scale part of \eqref{eq:lr-torus-symbol-lower}, together with
\(\|F\|_2\leq\|F\|_1\), gives the result directly.  Equations
\eqref{eq:lr-reflection-norms}--
\eqref{eq:lr-torus-nash} prove \eqref{eq:lr-fractional-nash}.  At
\(\sigma=2\), reflect only the nearest-neighbour form and repeat the same
Fourier splitting with its quadratic symbol. This proves
\eqref{eq:lr-nn-nash}.

We next derive the centered heat-kernel and trace bounds.  Let
\(T_u=e^{-uK_{J,L}^N}Q_L\), let \(F(u)=\|T_uf\|_2^2\), and set
\(\theta=\sigma\) for \(0<\sigma<2\) and \(\theta=2\) for
\(\sigma=2\).  Since the censored semigroup is Markovian,
\(\|T_uf\|_1\leq\|Q_Lf\|_1\leq2\|f\|_1\).  The two Nash inequalities
therefore imply
\begin{equation}
 F'(u)=-2\langle T_uf,K_{J,L}^NT_uf\rangle
 \leq-cF(u)^{1+\theta/2}\|f\|_1^{-\theta}.
 \label{eq:lr-nash-differential-inequality}
\end{equation}
Integration and self-adjointness give
\begin{equation}
 \|T_u\|_{1\to2}\leq Cu^{-1/\theta},
 \qquad
 \|T_u\|_{1\to\infty}
 \leq\|T_{u/2}\|_{1\to2}^2\leq Cu^{-2/\theta}.
 \label{eq:lr-centered-ultracontractivity}
\end{equation}
In particular,
\begin{align}
 \sup_{x,y}(e^{-uK_{J,L}^N})(x,y)
 &\leq C(u^{-2/\sigma}+L^{-2}),&&0<\sigma<2,\notag\\
 \sup_{x,y}(e^{-uK_{J,L}^N})(x,y)
 &\leq C(u^{-1}+L^{-2}),&&\sigma=2,
 \label{eq:lr-censored-kernel-bound}
\end{align}
and, because the diagonal of \(T_u\) is nonnegative,
\begin{equation}
 \Tr_{\1^\perp}e^{-uK_{J,L}^N}
 =\sum_{x\in B_L}\langle\delta_x,T_u\delta_x\rangle
 \leq L^2\|T_u\|_{1\to\infty}.
 \label{eq:lr-centered-trace-bound}
\end{equation}
This proves \eqref{eq:lr-stable-neumann-tail} and
\eqref{eq:lr-marginal-neumann-tail} for every \(u\geq1\), uniformly in
\(L\).  For \(0<\sigma<2\), these are the finite-volume analogues of
the stable-like estimates in \cite{BarlowBassKumagai2009}. The marginal
bound follows instead from \eqref{eq:lr-nn-nash}.

Let \(X_u\) be the full-lattice walk and couple it to the censored walk by
using the same jump clocks, suppressing a jump only if it leaves \(B_L\).
Let \(\tau_L\) be the first clock ring at which the proposed jump leaves
\(B_L\).  Until time \(\tau_L\) the two paths agree.  If
\(d_L(x)=\operatorname{dist}_\infty(x,\Z^2\setminus B_L)\), truncation of
jumps at \(d_L(x)/4\), followed by Doob's inequality for the truncated
martingale, gives
\begin{align}
 \mathbb P_x(\tau_L\leq u)
 &\leq C\min\{1,ud_L(x)^{-\sigma}\},&&0<\sigma<2,
 \label{eq:lr-stable-exit}\\
 \mathbb P_x(\tau_L\leq u)
 &\leq C\min\left\{1,
 \frac{u\log(2+d_L(x))}{d_L(x)^2}\right\},&&\sigma=2.
 \label{eq:lr-marginal-exit}
\end{align}
Here the rate of jumps longer than \(d\) is \(O(d^{-\sigma})\), while the
second moment truncated at \(d\) is \(O(d^{2-\sigma})\), or \(O(\log d)\)
at \(\sigma=2\).

Write \(p_{L,u}^N,p_{L,u}^D\), and \(p_u^\infty\) for the censored,
killed, and full-lattice kernels.  Regard every rejected clock ring of the
censored walk as a mark.  The marked path measures of the full and
censored walks agree until \(\tau_L\), and therefore
\begin{align}
 |p_{L,u}^N(x,x)-p_u^\infty(x,x)|
 \leq{}&\mathbb P_x(Y_u=x,\tau_L\leq u)
       +\mathbb P_x(X_u=x,\tau_L\leq u),
 \label{eq:lr-common-clock-difference}\\
 p_u^\infty(x,x)-p_{L,u}^D(x,x)
 ={}&\mathbb P_x(X_u=x,\tau_L\leq u),
 \label{eq:lr-killed-loop-identity}
\end{align}
where \(Y\) denotes the censored path.  For either probability in
\eqref{eq:lr-common-clock-difference}, the contribution from
\(\tau_L\leq u/2\) is, by the strong Markov property, at most
\begin{equation}
 \mathbb P_x(\tau_L\leq u/2)
 \sup_{v\geq u/2,z}p_v(z,x),
 \label{eq:lr-early-exit-loop-bound}
\end{equation}
with the kernel of the corresponding walk.  Reversal of diagonal loops
moves a rejected mark or exit in \([u/2,u]\) to \([0,u/2]\) and gives the
same bound for the late contribution.  This is ordinary reversibility for
\(X\), and reversibility of the marked rejected-clock measure for \(Y\).
For \(u\leq cL^\sigma\), the \(L^{-2}\) term in
\eqref{eq:lr-censored-kernel-bound} is bounded by \(Cu^{-2/\sigma}\).
Consequently, in the stable case,
\begin{equation}
 |p_{L,u}^N(x,x)-I_J(u)|
 +|p_{L,u}^D(x,x)-I_J(u)|
 \leq Cu^{-2/\sigma}\mathbb P_x(\tau_L\leq u/2).
 \label{eq:lr-common-clock-diagonal}
\end{equation}
At \(\sigma=2\), equations \eqref{eq:lr-full-heat-bound} and
\eqref{eq:lr-censored-kernel-bound} instead give
\begin{equation}
 |p_{L,u}^N(x,x)-I_J(u)|
 +|p_{L,u}^D(x,x)-I_J(u)|
 \leq \frac Cu\mathbb P_x(\tau_L\leq u/2),
 \qquad u\leq cL^2.
 \label{eq:lr-marginal-common-clock-diagonal}
\end{equation}
Summing
\eqref{eq:lr-stable-exit} over the \(O(L)\) sites in every boundary layer
gives
\[
 \sum_{x\in B_L}\mathbb P_x(\tau_L\leq u)
 \leq CL
 \begin{cases}
  u^{1/\sigma},&\sigma>1,\\
  u\log(2+L/u),&\sigma=1,\\
  uL^{1-\sigma}+u^{1/\sigma},&\sigma<1.
 \end{cases}
\]
This proves \eqref{eq:lr-stable-finite-box-heat}. Subtracting the constant
Neumann mode accounts for the harmless \(1\).

At the marginal exponent use the second estimate in
\eqref{eq:lr-censored-kernel-bound}, rather than a sharp marginal bound.
The boundary-layer sum in \eqref{eq:lr-marginal-exit} is
\(O(LR_2(u))\).  Multiplication by \(C/u\) proves
\eqref{eq:lr-marginal-finite-box-heat}.  This deliberately conservative
estimate is sufficient for the box scale
\eqref{eq:lr-marginal-pressure-scales}, which contains an extra
logarithmic volume factor.
\end{proof}

For later use define the nonzero Neumann and Dirichlet Bose traces
\begin{align}
 \Phi_{J,L}^N(s,b)
 &:=-\Tr_{\1^\perp}\log(1-e^{-sK_{J,L}^N-b}),
 \label{eq:lr-neumann-bose-trace}\\
 \Phi_{J,L}^D(s,b)
 &:=-\Tr\log(1-e^{-sh_{J,L}^D-b}).
 \label{eq:lr-dirichlet-bose-trace}
\end{align}

Choose \(s=s(t)\) with \(s/t\to1\), and set \(g(t)=(\log t)^8\).
For \(0<\sigma<2\), choose integer box sizes with
\(L^2\asymp t^\nu g(t)\) and \(\ell=t^\gamma+O(1)\), where
\begin{equation}
 \frac1\sigma<\gamma<\frac2\sigma.
 \label{eq:lr-stable-dirichlet-window}
\end{equation}
For \(\sigma=2\), choose instead
\(L^2\asymp t\log t\,g(t)\) and \(\ell=t^{5/6}+O(1)\).

\begin{lemma}
\label{lem:lr-bose-traces}
Uniformly in \(b\geq0\), for \(0<\sigma<2\),
\begin{align}
 \Phi_{J,L}^N(s,b)
 &=L^2D_\sigma t^{-\nu}
   \Li_{1+\nu}(e^{-b})+o(L^2t^{-\nu}),
 \label{eq:lr-stable-neumann-bose}\\
 \Phi_{J,\ell}^D(t,b)
 &\geq \ell^2D_\sigma t^{-\nu}
 \Li_{1+\nu}(e^{-b})-o(\ell^2t^{-\nu}).
 \label{eq:lr-stable-dirichlet-bose}
\end{align}
At \(\sigma=2\), uniformly in \(b\geq0\),
\begin{align}
 \Phi_{J,L}^N(s,b)
 &=\frac{L^2D_2}{t\log t}\LiTwo(e^{-b})
  +o\left(\frac{L^2}{t\log t}\right),
 \label{eq:lr-marginal-neumann-bose}\\
 \Phi_{J,\ell}^D(t,b)
 &\geq\frac{\ell^2D_2}{t\log t}\LiTwo(e^{-b})
  -o\left(\frac{\ell^2}{t\log t}\right).
 \label{eq:lr-marginal-dirichlet-bose}
\end{align}
\end{lemma}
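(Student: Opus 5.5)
We expand the Bose logarithm, \(-\log(1-e^{-x})=\sum_{m\geq1}e^{-mx}/m\), and write each trace as
\[
 \Phi(s,b)=\sum_{m\geq1}\frac{e^{-mb}}{m}\,\Tr e^{-msK}
\]
for the relevant operator. We then split the \(m\)-sum at a cutoff \(M\), exactly as in Lemma~\ref{lem:finite-range-bose-trace}. For \(m\leq M\) we use Lemma~\ref{lem:lr-finite-box-heat} to replace each trace by \(L^2I_J(ms)\) plus the boundary error, and Lemma~\ref{lem:lr-local-limits} to replace \(I_J(ms)\) by its asymptotic form. For \(m>M\) we use the centered tail bounds \eqref{eq:lr-stable-neumann-tail}--\eqref{eq:lr-marginal-neumann-tail}. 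In the Dirichlet direction we only need a lower bound, so we may simply drop the terms with \(m>M\), since they are nonnegative. All error terms are bounded using \(e^{-mb}\leq1\), so uniformity in \(b\) is automatic. In the main term, the polylogarithm arises from \(\sum_m e^{-mb}m^{-1-\nu}\) with \(\nu=2/\sigma\), or from \(\sum_m e^{-mb}m^{-2}\) at \(\sigma=2\).

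\emph{Fractional case.} The main term \(L^2\sum_{m\leq M}m^{-1}e^{-mb}D_\sigma(ms)^{-\nu}(1+o(1))\) differs from \(L^2D_\sigma s^{-\nu}\Li_{1+\nu}(e^{-b})\) by \(o(L^2t^{-\nu})\). The \(o(1)\) in \eqref{eq:lr-stable-return} is uniform for \(ms\geq s\to\infty\), and the tail \(\sum_{m>M}m^{-1-\nu}\) tends to zero as \(M\to\infty\). Since \(s/t\to1\), we may replace \(s^{-\nu}\) by \(t^{-\nu}\). The Neumann tail for \(m>M\) is at most \(CL^2s^{-\nu}\sum_{m>M}m^{-1-\nu}\), which is also small.

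The boundary errors must be \(o(L^2t^{-\nu})\). We need \(ms\leq cL^\sigma\), which restricts \(M\leq cL^\sigma/s\). Since \(L^\sigma\asymp t\,g^{\sigma/2}\), this allows \(M\) to grow like a power of \(\log t\). The Neumann error is then \(\sum_{m\leq M}m^{-1}\mathcal E_\sigma(ms,L)\). For example, when \(1<\sigma<2\) the term \(L(ms)^{-1/\sigma}\) sums to \(O(Lt^{-1/\sigma})\), which relative to \(L^2t^{-\nu}\) is \(t^{1/\sigma}/L=g^{-1/2}\to0\). The constant error contributes \(\log M\), and \(\log M/(L^2t^{-\nu})=\log M/g\to0\). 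The cases \(\sigma\leq1\) are analogous: \(L^{2-\sigma}u^{1-\nu}/(L^2u^{-\nu})=u/L^\sigma\), and \(\sum_{m\leq M}m^{-1}\,ms/L^\sigma\leq Ms/L^\sigma\), which is small once \(M\) is chosen so that \(Ms/L^\sigma\to0\). We take \(M=\sqrt g\).

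For the Dirichlet lower bound at box size \(\ell=t^\gamma\), the window \(\gamma>1/\sigma\) gives \(\ell\gg t^{1/\sigma}\). Hence \(\ell^2t^{-\nu}\to\infty\) polynomially, and \(t\leq c\ell^\sigma\) holds with room to spare. We choose \(M\to\infty\) slowly, for instance a small power of \(t\) satisfying \(Mt\leq c\ell^\sigma\), and the same computation applies.

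\emph{Marginal case.} We follow the introduction. With \(L^2\asymp t\log t\,g\), we use \eqref{eq:lr-marginal-return}. Since \(\log(ms)/\log t\to1\) uniformly for \(m\leq M=(\log t)^4\), the main term is \(L^2D_2\LiTwo(e^{-b})/(t\log t)\) up to \(o\) of the scale. The Neumann tail \(\sum_{m>M}m^{-1}CL^2/(ms)\leq CL^2/(tM)\) is \(C\log t/M\) relative to the scale, hence \(O((\log t)^{-3})\). The conditions \(R_2(Ms)\leq L/4\) and \(Ms\leq cL^2\) hold because \(g\) exceeds \(M\) by a logarithmic margin. The boundary error \(\sum_{m\leq M}m^{-1}\bigl(CL\sqrt{\log t/(ms)}+C\bigr)\) is \(O(L\sqrt{\log t/t}+\log M)\). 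Relative to the scale this gives \(O(\log t/\sqrt g+\log M/g)=O((\log t)^{-3})\).

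For the Dirichlet bound at \(\ell=t^{5/6}\), we drop the terms with \(m>M\) and take \(M\) to be a small power of \(t\), so that \(R_2(Mt)\leq\ell/4\). Then \(\log(mt)/\log t\) stays bounded and tends to \(1\) only for \(m\leq t^{o(1)}\). We therefore truncate the main term at \(M'=(\log t)^4\), using positivity of the terms beyond it. Relative to the scale \(\ell^2/(t\log t)\), the boundary error is \(O(\ell^{-1}\sqrt{t}\,(\log t)^{3/2})\to0\).

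The main obstacle is this parameter bookkeeping in the marginal case. The crude bound \(CL\sqrt{\log(2+u)/u}\), which is off by a \(\log\) relative to the sharp scale, together with the slow variation of \(1/\log(mt)\), forces the logarithmic choices of \(g\) and \(M\). Each must be checked against the constraints \(R_2(Mt)\leq L/4\) and \(Mt\leq cL^2\).
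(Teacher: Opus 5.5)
Your proposal is essentially the paper's proof. It uses the same Bose expansion, Lemma~\ref{lem:lr-finite-box-heat} together with Lemma~\ref{lem:lr-local-limits} for the retained terms, the centered tail bounds for the Neumann tail, and positivity to discard the Dirichlet tail. In the marginal Neumann case it makes exactly the paper's choice \(M=g^{1/2}=(\log t)^4\), with relative errors \(O(\log t/\sqrt g+\log t/M+\log M/g)\).

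The only difference is that in the fractional case, and for the marginal Dirichlet bound, you let \(M\) grow with \(t\). The paper instead keeps \(M\) fixed, lets \(t\to\infty\), and then sends \(M\to\infty\). Your route is fine.

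However, your concrete choice \(M=\sqrt g\) in the fractional Neumann case contradicts the constraint you state yourself. Since \(L^\sigma/s\asymp g^{\sigma/2}\), the hypothesis \(ms\leq cL^\sigma\) of Lemma~\ref{lem:lr-finite-box-heat} fails for \(m\) near \(\sqrt g\) when \(\sigma<1\), and it depends on constants at \(\sigma=1\). Take instead, for example, \(M=g^{\sigma/4}\), or use the paper's fixed-\(M\) double limit.

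Note also that the term \(L^{2-\sigma}u^{1-\nu}\) carries the weight \(m^{-\nu}\) after normalization by \(L^2t^{-\nu}\). Its total contribution is therefore \(O(s/L^\sigma)\) whatever \(M\) is, and only the range restriction of the lemma limits \(M\). With this adjustment the argument goes through.
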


\begin{proof}
Expand \(-\log(1-q)=\sum_{m\geq1}q^m/m\).  In the stable case,
Lemma~\ref{lem:lr-finite-box-heat} and
\eqref{eq:lr-stable-return} give, for each fixed \(m\geq1\),
\[
 \frac{\Tr_{\1^\perp}e^{-msK_{J,L}^N}}
 {L^2D_\sigma t^{-\nu}}\longrightarrow m^{-\nu},
 \qquad
 \frac{\Tr e^{-mt h_{J,\ell}^D}}
 {\ell^2D_\sigma t^{-\nu}}\longrightarrow m^{-\nu}.
\]
Indeed, the relative boundary errors for a box of side \(n\) are bounded,
up to constants depending on \(m\), by
\[
 \begin{cases}
 t^{1/\sigma}/n,&1<\sigma<2,\\
 (t/n)\log(2+n/t),&\sigma=1,\\
 t/n^\sigma+t^{1/\sigma}/n,&0<\sigma<1,
 \end{cases}
 \quad+\frac{t^\nu}{n^2},
\]
and vanish for both prescribed box scales.  For any fixed cutoff \(M\),
convergence of the finite Bose sums is uniform in \(b\geq0\).
The Neumann tail is, by \eqref{eq:lr-stable-neumann-tail}, at most
\(CL^2t^{-\nu}M^{-\nu}\).  For the Dirichlet lower bound, discard the
positive tail.  The target polylogarithm has tail at most
\(CM^{-\nu}\), uniformly in \(b\).  Letting first \(t\to\infty\) and
then \(M\to\infty\) proves
\eqref{eq:lr-stable-neumann-bose} and
\eqref{eq:lr-stable-dirichlet-bose}.

The same fixed-cutoff argument proves the marginal Dirichlet bound.
For each fixed \(m\), \eqref{eq:lr-marginal-return} and
\eqref{eq:lr-marginal-finite-box-heat} give
\[
 \frac{\Tr e^{-mt h_{J,\ell}^D}}
 {\ell^2D_2/(t\log t)}\longrightarrow m^{-1}.
\]
The relative boundary error is
\(O(\sqrt t(\log t)^{3/2}/\ell+t\log t/\ell^2)=o(1)\).
Discarding the positive Bose tail and bounding the omitted dilogarithm
by \(C/M\) proves \eqref{eq:lr-marginal-dirichlet-bose}.

For the marginal Neumann trace a growing cutoff is needed because
\eqref{eq:lr-marginal-neumann-tail} is weaker.  Take
\(M=g(t)^{1/2}\).  Relative to \(L^2/(t\log t)\asymp g(t)\),
\eqref{eq:lr-marginal-finite-box-heat} gives summed boundary error
\(O(\log t/\sqrt{g(t)})\), while
\eqref{eq:lr-marginal-neumann-tail} bounds the remaining Bose series by
\(O(\log t/M)\).  Both vanish.  Equation
\eqref{eq:lr-marginal-return} and
\(\log(ms)/\log t=1+o(1)\), uniformly for \(m\leq M\), now give
\eqref{eq:lr-marginal-neumann-bose}.  All error estimates are uniform in
\(b\geq0\), since \(e^{-mb}\leq1\).
\end{proof}

\subsection{Collision control and the pressure bound}
\label{subsec:lr-pressure}

We prove the lower free-energy bound in
Proposition~\ref{prop:lr-pressure}. Gap and resistance estimates adapt
the collision comparison to the long-range graph, while a rough spin
trace bound allows us to discard high-energy states before applying
that comparison.

Let \(G_{J,L}\) be the weighted graph on \(B_L\) with conductances
\(J(y-x)\). For \(r=2S\), decorate it as in
Section~\ref{sec:finite-range-proof}: the conductance between each
pair of fibre labels over \(x,y\) is \(J(y-x)/r\).  Denote the decorated
graph by \(G_{J,L}^{(r)}\) and its one-particle Laplacian by
\(K_{J,L}^{(r)}\).
Use \(\cH_L^{(r,J)}\) and \(H_{L,k}^{(r,J)}\) from
Section~\ref{sec:finite-range-proof} for the corresponding
normalized spin Hamiltonian and exclusion generator.

The resistance diameters \(\rho\) and \(\rho_\circ\) are those in
\eqref{eq:punctured-resistance-definition}.

\begin{lemma}
\label{lem:lr-box-geometry}
For \(L\geq3\), the graph \(G_{J,L}^{(r)}\), and every deletion of one
of its vertices, is connected and has uniformly bounded weighted degree.
Moreover,
\begin{align}
 \gap(K_{J,L}^{(r)})&\geq
 \begin{cases}
  cL^{-\sigma},&0<\sigma<2,\\
  cL^{-2},&\sigma=2,
 \end{cases}
 \label{eq:lr-box-gap}\\
 \rho(G_{J,L}^{(r)})+\rho_\circ(G_{J,L}^{(r)})&\leq
 \begin{cases}
  C,&0<\sigma<2,\\
  C\log(2L),&\sigma=2.
 \end{cases}
 \label{eq:lr-box-resistance}
\end{align}
\end{lemma}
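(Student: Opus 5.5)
The proof treats the elementary properties, the gap and the resistance bound separately. In each case the long-range graph is compared either with its nearest-neighbour backbone or with the reflected fractional form already built in the proof of Lemma~\ref{lem:lr-finite-box-heat}. The decorated statements then follow from the undecorated ones, exactly as in Lemma~\ref{lem:finite-range-core-geometry}.

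\emph{Connectivity, degree and gap.} The weighted degree is at most \(\sum_zJ(z)<\infty\), or \(r^{-1}\cdot r\sum_zJ(z)\) after decoration. By \eqref{eq:power-law-hypothesis}, \(G_{J,L}\) contains the nearest-neighbour square with conductance \(j_{\rm bb}=\min\{J(e_1),J(e_2)\}>0\). That square, and the square with one vertex deleted for \(L\geq3\), are connected, so the full graph and every one-vertex deletion are connected.

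At \(\sigma=2\), the form inequality \(K_{J,L}^N\geq j_{\rm bb}K_L^{\rm nn}\) and the square Poincar\'e inequality give \(cL^{-2}\).

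For \(0<\sigma<2\), I take \(v\perp\1\) and reflect it to \(Ev\) on \(\mathbb T_{2L}^2\). By \eqref{eq:lr-reflection-form}, \(\cE_{\sigma,2L}(Ev)\leq C\langle v,K_{J,L}^Nv\rangle\). Since \(Ev\) has mean zero, only nonzero frequencies occur, and these satisfy \(|p_k|\geq\pi/L\). The symbol bound \eqref{eq:lr-torus-symbol-lower} with Parseval then gives \(\cE_{\sigma,2L}(Ev)\geq cL^{-\sigma}\|Ev\|_2^2=4cL^{-\sigma}\|v\|_2^2\), which is \eqref{eq:lr-box-gap}.

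For the decorated graph I use the splitting \eqref{eq:decorated-backbone-split}. The optical part is multiplication by \(d_J\geq2j_{\rm bb}\), which is bounded below uniformly and so cannot spoil the gap.

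\emph{Resistance.} At \(\sigma=2\), Rayleigh monotonicity reduces everything to the backbone, and the bound \(C\log(2L)\), punctured or not, is \eqref{eq:core-resistance}--\eqref{eq:core-punctured-resistance} of Lemma~\ref{lem:finite-range-core-geometry}.

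For \(0<\sigma<2\), I use Thomson's principle, bounding \(R_{\rm eff}(x,y)\) by the energy \(\sum\theta_e^2/c_e\) of a unit flow. By concatenation it suffices to build a unit flow of bounded energy from \(\delta_x\) to the uniform measure on \(B_L\).

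Choose nested squares \(x\in Q_0\subset Q_1\subset\cdots\subset Q_J=B_L\), where \(Q_j\) has side comparable to \(2^j\) and the last step may be a truncated doubling. Let \(\mu_j\) be the uniform probability on \(Q_j\). Transport \(\mu_j\) to \(\mu_{j+1}\) by the product plan \(\mu_j\otimes\mu_{j+1}\), sending each pair's mass along the direct bond. For pairs with \(|u-v|\geq r_0\), the tail radius, that bond has conductance \(\geq c\,2^{-j(2+\sigma)}\). There are \(O(2^{4j})\) such pairs, each carrying \(O(2^{-4j})\), so the scale energy is \(O(2^{4j}\cdot2^{-8j}\cdot2^{j(2+\sigma)})=O(2^{j(\sigma-2)})\).

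Pairs with \(|u-v|<r_0\) number \(O(2^{2j})\). I route each along a fixed backbone path of length at most \(Cr_0\); each backbone edge then carries \(O(r_0^2\,2^{-4j})\), for a total energy \(O(2^{2j-8j})\). The initial step \(\delta_x\to\mu_{j_0}\) at a fixed scale \(j_0\) uses backbone paths and costs \(O(1)\). Summing the geometric series gives a bound \(C\) independent of \(L\).

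The main obstacle is the punctured bound \(\rho_\circ\). To handle it, I replace every \(\mu_j\) by the uniform measure on \(Q_j\setminus\{a\}\); this changes the masses only by factors of at most \(1+O(2^{-2j})\). The long bonds used by the product plan then never touch \(a\). Only the short backbone routes and the initial fixed-scale step can pass through \(a\). There I reroute as in the proof of Lemma~\ref{lem:finite-range-core-geometry}: remove the currents on edges incident with \(a\) and transport them within the surrounding \(3\)-by-\(3\) square, at a bounded multiplicative energy cost. Squares at sides and corners of \(B_L\) need care only in keeping \(|Q_j|\asymp4^j\), which holds by choosing the nested squares inside \(B_L\).

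\emph{Decorated graph.} The decorated resistance bounds follow by the fibre fans and the redistribution over the \(r-1\) surviving fibres described in the proof of Lemma~\ref{lem:finite-range-core-geometry}. These cost only \(r\)-dependent constants.
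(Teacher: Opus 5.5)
Your proposal is correct. For \(0<\sigma<2\), the resistance flow, the puncture detours and the fibre decoration follow the paper's proof. Two sub-steps take a different route.

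\emph{The gap.} The paper avoids Fourier analysis. For mean-zero \(v\), the complete-graph form over pairs at distance at least the tail radius \(R_0\) is \(\geq(L^2-C_{R_0})\|v\|_2^2\). Every such pair in \(B_L\) has conductance \(\geq cL^{-2-\sigma}\). This gives \eqref{eq:lr-complete-gap-comparison} in one line, and the same computation gives \(cL^{-2}\) at \(\sigma=2\). Your route through \eqref{eq:lr-reflection-form} and \eqref{eq:lr-torus-symbol-lower} is also valid, since \(Ev\) has mean zero and every nonzero torus frequency satisfies \(|p_k|\geq\pi/L\). It does import estimates established inside the proof of Lemma~\ref{lem:lr-finite-box-heat}, whereas the complete-graph comparison is self-contained.

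\emph{The resistance at \(\sigma=2\).} The paper runs the same dyadic flow there. It needs the pointwise overlap bound \eqref{eq:lr-flow-overlap}, so that \(O(\log L)\) scales of energy \(O(1)\) add up to \(O(\log L)\) rather than to its square. Your Rayleigh reduction to the nearest-neighbour backbone removes that step. The claimed bound \(C\log(2L)\) is already the backbone bound, punctured or not, and the long bonds can only lower the resistance.

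\emph{Summing scales for \(\sigma<2\).} Read your phrase ``summing the geometric series'' as summing the square roots of the scale energies (Minkowski), because energies of superposed flows are not additive. Those square roots are still geometric, so your bounded-resistance conclusion holds without any overlap estimate.
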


\begin{proof}
The hypothesis \eqref{eq:power-law-hypothesis} implies, for all sufficiently
large \(|z|\),
\begin{equation}
 c(1+|z|)^{-2-\sigma}\leq J(z)
 \leq C(1+|z|)^{-2-\sigma}.
 \label{eq:lr-two-sided-tail}
\end{equation}
The missing finitely many short displacements are supplied, when needed,
by paths in the positive nearest-neighbour backbone.  Fix a radius
\(R_0\) beyond which the lower bound in \eqref{eq:lr-two-sided-tail}
holds.  If \(v\) has mean zero, then
\[
 \frac12\sum_{\substack{x,y\in B_L\\|x-y|\geq R_0}}
 |v(x)-v(y)|^2
 \geq (L^2-C_{R_0})\|v\|_2^2.
\]
Consequently, for all sufficiently large \(L\),
\begin{align}
 \langle v,K_{J,L}^Nv\rangle
 &\geq cL^{-2-\sigma}\frac12
 \sum_{\substack{x,y\in B_L\\|x-y|\geq R_0}}
 |v(x)-v(y)|^2
 \geq cL^{-\sigma}\|v\|_2^2.
 \label{eq:lr-complete-gap-comparison}
\end{align}
The finitely many smaller boxes are absorbed by decreasing the constant,
using the nearest-neighbour Poincar\'e inequality.
At \(\sigma=2\) we retain the weaker consequence \(cL^{-2}\), which is
sufficient for Proposition~\ref{prop:lr-pressure}.

We next construct the resistance flow and keep track of its divergence.
A flow \(\theta\) is antisymmetric on ordered edges with positive
conductance, and its divergence is
\(\operatorname{div}\theta(u)=\sum_{v:J(v-u)>0}\theta(u,v)\).
Given \(x\in B_L\), choose nested sets
\begin{equation}
 \{x\}=Q_0^x\subset Q_1^x\subset\cdots\subset Q_N^x=B_L,
 \qquad r_j:=2^{j-1}r_0\wedge L\quad(j\geq1),
 \label{eq:lr-nested-flow-sets}
\end{equation}
obtained by intersecting and, near the boundary, translating coordinate
squares, so that \(|Q_j^x|\asymp r_j^2\) for \(j\geq1\).  Let
\(\mu_j^x\) be the uniform probability measure on \(Q_j^x\).  After
increasing the fixed \(r_0\), couple consecutive measures, for
\(1\leq j\leq N-1\), by
\begin{equation}
 \pi_j^x(u,v):=\mu_j^x(u)\mu_{j+1}^x(v),\qquad
 |u-v|\leq Cr_j,\qquad
 \max_{u,v}\pi_j^x(u,v)\leq Cr_j^{-4}.
 \label{eq:lr-scale-coupling}
\end{equation}
Discard the diagonal transports \(u=v\), which have zero divergence.
Orient every remaining mass \(\pi_j^x(u,v)\) from \(u\) to \(v\), and
call the resulting flow \(\theta_j^x\).  If \(|u-v|<R_0\), replace this
transport by a fixed nearest-neighbour path.  This preserves the exact
identity
\begin{equation}
 \operatorname{div}\theta_j^x=\mu_j^x-\mu_{j+1}^x.
 \label{eq:lr-scale-divergence}
\end{equation}
Let \(\theta_0^x\) transport \(\delta_x\) to \(\mu_1^x\) along fixed
nearest-neighbour paths.  Because \(r_0\) is fixed, this initial flow has
bounded energy.
For every later scale, \eqref{eq:lr-two-sided-tail} and
\eqref{eq:lr-scale-coupling} give
\begin{align}
 \cE_{\rm flow}(\theta_j^x)
 &:=\frac12\sum_{\substack{u,v\in B_L\\J(v-u)>0}}
 \frac{|\theta_j^x(u,v)|^2}{J(v-u)}\notag\\
 &\leq Cr_j^4(r_j^{-4})^2r_j^{2+\sigma}
       +Cr_j^2(r_j^{-4})^2
 \leq Cr_j^{\sigma-2}.
 \label{eq:lr-flow-scale-energy}
\end{align}
The second term bounds the finitely long nearest-neighbour routings.  For
a fixed graph edge, the currents contributed by all later dyadic scales
decrease geometrically.  Hence, pointwise on edges,
\begin{equation}
 \left(\sum_{j=1}^{N-1}|\theta_j^x(e)|\right)^2
 \leq C\sum_{j=1}^{N-1}|\theta_j^x(e)|^2.
 \label{eq:lr-flow-overlap}
\end{equation}
It follows that
\(\theta^x:=\theta_0^x+\sum_{j=1}^{N-1}\theta_j^x\) satisfies
\begin{equation}
 \operatorname{div}\theta^x=\delta_x-|B_L|^{-1}\1,
 \qquad
 \cE_{\rm flow}(\theta^x)
 \leq C+C\sum_{j=1}^{N-1}r_j^{\sigma-2}.
 \label{eq:lr-flow-telescoping}
\end{equation}
Thus \(\theta^x-\theta^y\) is a unit flow from \(x\) to \(y\).
Thomson's principle gives a uniform resistance diameter for
\(\sigma<2\) and an \(O(\log(2L))\) diameter for \(\sigma=2\).

The same construction gives the punctured estimate without a mass
defect.  Fix \(o\in B_L\) and \(x\in B_L\setminus\{o\}\), and replace
every set by
\(Q_j^x\setminus\{o\}\), and normalize counting measure on the resulting
set.  The product couplings retain \eqref{eq:lr-scale-coupling}.  Whenever
a fixed short path meets \(o\), detour it inside a \(3\)-by-\(3\) or
\(4\)-by-\(4\) neighbourhood. The number and length of these detours are
uniformly bounded.  The divergences telescope to
\begin{equation}
 \delta_x-|B_L\setminus\{o\}|^{-1}\1_{B_L\setminus\{o\}},
 \label{eq:lr-punctured-flow-divergence}
\end{equation}
so subtracting the flows from \(x\) and \(y\) proves the asserted
punctured resistance bound.

It remains to pass to the spin-fibre decoration.  Let \(P\) project onto
the uniform fibre vector and \(Q=I-P\).  Direct summation over the fibre
labels gives
\begin{equation}
 K_{J,L}^{(r)}=K_{J,L}^N\otimes P+M_{d_{J,L}}\otimes Q,
 \qquad d_{J,L}(x)=\sum_{y\in B_L}J(y-x)\geq c_J.
 \label{eq:lr-decorated-split}
\end{equation}
Thus the acoustic gap is the base gap and the optical gap is uniform.
Lift a base flow by distributing the current on an edge \((x,y)\)
equally among its \(r^2\) fibre edges.  Since each has conductance
\(J(y-x)/r\), the lifted energy is \(r^{-1}\) times the base energy.  A
current from a single fibre vertex is first fanned to the uniform fibre
measure by two-edge paths through a fixed inward nearest-neighbouring base
point. This costs only a constant depending on \(r\) and the backbone
weights.  Reverse the fan at the endpoint.  After deletion of one
decorated vertex, use the punctured base flow if \(r=1\). If \(r\geq2\),
distribute the current over the surviving labels above that base point and
use the same local fan.  This proves both decorated resistance estimates
in \eqref{eq:lr-box-resistance}.
\end{proof}

Fix \(1\leq k\leq rL^2/2\), and consider the corresponding exclusion
space on the decorated graph.

\begin{proposition}
\label{prop:lr-collision}
\begin{align}
 0\leq N_{L,k}^{(r,J)}
 &\leq Ck^2(1+k+k^2L^{\sigma-2})^2
       (H_{L,k}^{(r,J)})^2,
 &&0<\sigma<2,
 \label{eq:lr-stable-collision}\\
 0\leq N_{L,k}^{(r,J)}
 &\leq Ck^6\log^3(2L)(H_{L,k}^{(r,J)})^2,
 &&\sigma=2.
 \label{eq:lr-marginal-collision}
\end{align}
\end{proposition}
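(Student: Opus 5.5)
This follows directly from Proposition~\ref{prop:abstract-graph-collision} applied to the decorated graph \(G_{J,L}^{(r)}\). All inputs come from Lemma~\ref{lem:lr-box-geometry}, exactly as Proposition~\ref{prop:finite-range-collision} used Lemma~\ref{lem:finite-range-core-geometry}. The abstract proposition requires \(\rho_\circ<\infty\), i.e.\ connectedness of every one-vertex deletion; Lemma~\ref{lem:lr-box-geometry} provides this. It also supplies the remaining inputs:
\begin{itemize}[label={}, leftmargin=*]
 \item[] the vertex count \(n=rL^2\);
 \item[] the weighted degree bound \(\Delta\leq C\), which holds because \(J\) is summable;
 \item[] the gap \(\lambda\geq cL^{-\sigma}\) for \(\sigma<2\) and \(\lambda\geq cL^{-2}\) for \(\sigma=2\), by \eqref{eq:lr-box-gap};
 \item[] the resistance bounds \(\rho+\rho_\circ\leq C\) for \(\sigma<2\) and \(\rho+\rho_\circ\leq C\log(2L)\) for \(\sigma=2\), by \eqref{eq:lr-box-resistance}.
\end{itemize}
The case \(k=1\) is trivial since \(N_{G,1}=0\). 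For \(2\leq k\leq rL^2/2\), one substitutes these inputs into \eqref{eq:abstract-graph-collision}.

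In the fractional case, \(1/(n\lambda)\leq CL^{\sigma}/L^2=CL^{\sigma-2}\leq C\), so the prefactor is \(Ck^2(L^{\sigma-2}+C)\leq Ck^2\). Inside the bracket,
\[
 \Delta\Bigl(\frac{k^2}{n\lambda}+k\rho_\circ\Bigr)\leq C\bigl(k^2L^{\sigma-2}+k\bigr),
\]
which gives the squared factor \((1+k+k^2L^{\sigma-2})^2\) in \eqref{eq:lr-stable-collision}. Note that \(L^{\sigma-2}\) must be kept inside the bracket, not bounded by one. This matters because \(k\) may be as large as \(L^2\), so the term \(k^2L^{\sigma-2}\) carries real information.

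In the marginal case, \(1/(n\lambda)\leq C\), so the prefactor is \(Ck^2\log(2L)\). The bracket is at most
\[
 1+C\bigl(k^2+k\log(2L)\bigr)\leq Ck^2\log(2L).
\]
Squaring the bracket and multiplying by the prefactor gives \(Ck^6\log^3(2L)\), which is \eqref{eq:lr-marginal-collision}. This is the same arithmetic as in Proposition~\ref{prop:finite-range-collision}.

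There is no real obstacle here; the substantive work is in Lemma~\ref{lem:lr-box-geometry} and Proposition~\ref{prop:abstract-graph-collision}. The one point to check is that constants from the lemma hold uniformly for all \(L\geq3\) and are absorbed into the numerical \(C\) of the abstract bound. The lemma asserts this uniformity, with small boxes handled there by shrinking constants.
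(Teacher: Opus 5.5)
Your proposal is correct and is the paper's argument: you substitute \(n=rL^2\), the degree bound, and the gap and resistance estimates of Lemma~\ref{lem:lr-box-geometry} into Proposition~\ref{prop:abstract-graph-collision}, and your arithmetic in both regimes matches the paper's. One small point of wording: the final constant depends on \(S\), \(J\) and \(\sigma\) through the lemma, so it is not the numerical constant of the abstract bound but that constant multiplied by the lemma's constants.
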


\begin{proof}
In Proposition~\ref{prop:abstract-graph-collision}, substitute
\(n=rL^2\), the degree bound, and
\eqref{eq:lr-box-gap}--\eqref{eq:lr-box-resistance}.  For
\(0<\sigma<2\), the factor
\((n\lambda)^{-1}+\rho\) is bounded and the bracket is
\(O(1+k+k^2L^{\sigma-2})\).  At \(\sigma=2\), the prefactor is
\(O(k^2\log(2L))\) and the squared bracket is
\(O(k^4\log^2(2L))\).  This proves the two assertions.
\end{proof}

The cutoff argument also needs a rough estimate in the highest-weight
multiplicity spaces introduced in Section~\ref{sec:finite-range-proof}.

Consider a free \(m\)-square with \(m\geq2\), put \(M_m=rm^2\), and
fix \(1\leq k\leq M_m/2\).

\begin{lemma}
\label{lem:lr-highest-weight}
\begin{equation}
 H_{m,k}^{(r,J),\HW}\geq c\frac{k}{m^\sigma},
 \qquad 0<\sigma\leq2.
 \label{eq:lr-highest-weight}
\end{equation}
Consequently, for every \(u>0\) and \(b\geq0\),
\begin{equation}
 \Tr e^{-u\cH_m^{(r,J)}-b\mathcal N}
 \leq(M_m+1)\exp\{M_m e^{-cu/m^\sigma-b}\}.
 \label{eq:lr-rough-trace}
\end{equation}
\end{lemma}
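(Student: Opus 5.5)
The plan is to obtain \eqref{eq:lr-highest-weight} by comparison with the finite-range case, treated in Section~\ref{sec:finite-range-proof}, using only the nearest-neighbour backbone and the long-range lower bound \eqref{eq:lr-two-sided-tail}. The trace bound \eqref{eq:lr-rough-trace} then follows formally from the multiplet decomposition \eqref{eq:exact-field-multiplet-decomposition}.

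\emph{Step 1: the regime \(\sigma=2\), or \(m\) below a fixed size.} I would delete all bonds except the coordinate bonds of strength at least \(j_{\rm bb}\), using the form monotonicity \eqref{eq:duhamel-bond-removal} at the operator level. Each exchange term is nonnegative, and deleting bonds preserves the spin-rotation invariance and the highest-weight multiplicity spaces. The nearest-neighbour highest-weight estimate of \cite{Klippel2026}, which is the same input as \eqref{eq:finite-range-rough-sector}, then gives \(H^{\HW}_{m,k}\geq ck/m^2\). For \(\sigma=2\) this is already the claim. For bounded \(m\) it gives \(ck/m^\sigma\) after adjusting \(c\).

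\emph{Step 2: the regime \(0<\sigma<2\) with \(m\) large.} Here I would use the complete-graph part of the kernel. By \eqref{eq:lr-two-sided-tail}, every pair \(x,y\) in the \(m\)-square with \(|x-y|\geq R_0\) has \(J(y-x)\geq cm^{-2-\sigma}\). The short pairs can be supplied through backbone paths, using the path comparison \eqref{eq:single-bond-path-comparison} with a bounded number of fixed paths and therefore an \(m\)-independent constant. This yields the operator inequality
\[
\cH_m^{(r,J)}\geq c\,m^{-2-\sigma}\,\cH_m^{(r,\mathrm{cg})},
\]
where \(\mathrm{cg}\) denotes the unit complete-graph kernel on the \(m\)-square. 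The complete-graph Hamiltonian is a Casimir: in the decorated normalisation,
\[
\sum_{\{x,y\}}\bigl(S^2-\Sp_x\cdot\Sp_y\bigr)=\tfrac12\bigl[(m^2S)(m^2S+1)-T(T+1)\bigr]-\text{(on-site terms)},
\]
with \(T\) the total spin. At total spin \(T=M_m/2-k\), the deficit from the maximum is \(\asymp k\,M_m\asymp km^2\), up to the on-site spin-\(S\) constants, which cancel exactly. Multiplying by \(m^{-2-\sigma}\) gives \(ck/m^\sigma\) on \(\cM^{(r)}_{m,k}\). The main obstacle is this step: checking carefully that the normalisation \(2/r\) and the fibre sums in \(\cH^{(r,J)}\) reproduce exactly this Casimir difference. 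Concretely, the exchange part must equal \(\tfrac{2}{r}\) times the sum of \(\tfrac{r^2}{2}\) over fibre-pair transpositions between distinct sites. I also need the excluded intra-site pairs to contribute nothing on the symmetric subspace. If they do not, I would instead bound the complete-graph exchange from below by its spectral gap times the number operator restricted to highest-weight vectors.

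\emph{Step 3: the trace bound.} I would insert \(H^{\HW}_{m,k}\geq ck/m^\sigma\) into \eqref{eq:exact-field-multiplet-decomposition}. Bounding \(d_{M_m,k}\leq M_m+1\) and the multiplicity of \(\cM^{(r)}_{m,k}\) by \(\binom{M_m}{k}\) gives
\[
\Tr e^{-u\cH_m^{(r,J)}-b\mathcal N}\leq (M_m+1)\sum_k\binom{M_m}{k}\bigl(e^{-cu/m^\sigma-b}\bigr)^k\leq (M_m+1)\bigl(1+e^{-cu/m^\sigma-b}\bigr)^{M_m}.
\]
Finally \(1+x\leq e^x\) yields \eqref{eq:lr-rough-trace}.
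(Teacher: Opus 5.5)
Your Steps~1 and~3 are fine. At \(\sigma=2\), Step~1 uses the nearest-neighbour input of \eqref{eq:finite-range-rough-sector}, which is legitimate; the paper instead runs a single complete-graph comparison for all \(0<\sigma\le2\).

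\emph{The gap is in Step~2.} The spaces \(\cM_{m,k}^{(r)}\) and the trace in \eqref{eq:lr-rough-trace} refer to the full \(2^{M_m}\)-dimensional decorated space, not to its fibre-symmetric (physical) subspace. This is how the lemma enters \eqref{eq:physical-to-decorated-trace}--\eqref{eq:exact-field-multiplet-decomposition}, and it is why your Step~3 bounds the multiplicity by \(\binom{M_m}{k}\). Your Casimir identity, with exact cancellation of the on-site terms, holds only on the symmetric subspace. On the full space, with \(\Sp_x:=\sum_\alpha\mathbf s_{x,\alpha}\), \(N=m^2\) and \(\mathbf T=\sum_x\Sp_x\), one has
\[
 \frac r2\,\cH_m^{(r,\mathrm{cg})}
 =\frac12\bigl[NS(NS+1)-\mathbf T^2\bigr]
 -\frac12\sum_{x}\bigl(S(S+1)-\Sp_x^2\bigr).
\]
The last sum is exactly the omitted intra-site exchange \(\sum_x\sum_{\alpha<\delta}(\tfrac14-\mathbf s_{x,\alpha}\cdot\mathbf s_{x,\delta})\). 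It is nonzero off the symmetric subspace and enters with the unfavourable sign. Your fallback, that the exchange dominates its gap times the number operator on highest-weight vectors, is not an available general fact; for \(k\ge2\) it is essentially the statement being proved. Similarly, \eqref{eq:single-bond-path-comparison} was obtained by compression with \(\mathcal P_{\rm sym}\), so it holds only on the physical subspace. On the decorated space, route the short pairs by applying \eqref{eq:unitary-word-inequality} directly to fibre transpositions.

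\emph{The repair is short.} The operators \(\mathbf T^2\) and \(\Sp_x^2\) all commute. Take a joint eigenspace in \(\cM_{m,k}^{(r)}\) with \(\Sp_x^2=T_x(T_x+1)\) and \(T_x=S-\ell_x\). There
\[
 S(S+1)-\Sp_x^2=\ell_x(2S-\ell_x+1)\le(r+1)\ell_x,
\]
and the addition rule \(M_m/2-k\le\sum_xT_x\) forces \(\sum_x\ell_x\le k\). Hence, on \(\cM_{m,k}^{(r)}\),
\[
 \frac r2\,\cH_m^{(r,\mathrm{cg})}\ge\frac12k(M_m-k+1)-\frac{r+1}2k\ge\frac18kM_m
 \qquad(m\ge2),
\]
and your comparison then yields \(ck/m^\sigma\).

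Alternatively, include the intra-site pairs in the complete graph and route each through a fibre over a neighbouring site, using \((uv)=(uw)(vw)(uw)\). The full \(M_m\)-vertex Casimir \(k(M_m-k+1)\) then applies directly. This is essentially what the paper does. It compares with the complete graph on all of \(B_m\times[r]\) and counts every pair with base distance below \(R_0\) as bad, intra-fibre pairs included. It then averages the three-transposition identity over far-away \(w\), which even gives the bad pairs an extra factor \(C/M_m\).
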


\begin{proof}
Let \(M_m=rm^2\) and write
\[
 \cE_{uv}(F)=\frac12\mathbb E_{\nu_{m,k}}
 |F(\eta^{uv})-F(\eta)|^2,
\]
where \(\nu_{m,k}\) is uniform on the \(k\)-subsets of
\(B_m\times[r]\).  The complete-graph exclusion form is
\(\sum_{\{u,v\}}\cE_{uv}\), and the form of
\(H_{m,k}^{(r,J)}\) is \(\sum_{\{u,v\}}c_{uv}\cE_{uv}\), with
\(c_{(x,\alpha),(y,\gamma)}=J(y-x)/r\) for \(x\ne y\) and zero
within a fibre.

Call a pair good if its base points are at distance at least the fixed
tail radius \(R_0\), and bad otherwise.  On good pairs,
\(c_{uv}\geq cm^{-2-\sigma}\).  For a bad pair, let \(W_{uv}\) consist
of decorated vertices whose base points are at distance at least
\(R_0\) from both endpoints.  For large \(m\),
\(|W_{uv}|\geq cM_m\), and the transposition identity
\((uv)=(uw)(vw)(uw)\), telescoping, and invariance of \(\nu_{m,k}\)
give
\[
 \cE_{uv}(F)\leq\frac C{|W_{uv}|}\sum_{w\in W_{uv}}
 \bigl(\cE_{uw}(F)+\cE_{vw}(F)\bigr).
\]
Each vertex has only \(C_{r,R_0}\) bad partners, so summing yields
\[
 \sum_{\{u,v\}\,\mathrm{bad}}\cE_{uv}(F)
 \leq\frac C{M_m}
 \sum_{\{u,v\}\,\mathrm{good}}\cE_{uv}(F).
\]
Together with the good-pair conductance bound, this proves
\begin{equation}
 K_{M_m,k}\leq Cm^{2+\sigma}H_{m,k}^{(r,J)},
 \label{eq:lr-complete-graph-comparison}
\end{equation}
where \(K_{M_m,k}\) is the unit-conductance complete-graph exclusion
generator.  The positive nearest-neighbour backbone handles the finitely
many smaller boxes after increasing \(C\).
On the highest-weight space of total spin \(M_m/2-k\), the Casimir
identity gives
\begin{equation}
 K_{M_m,k}^{\HW}=k(M_m-k+1)I\geq\frac12kM_mI.
 \label{eq:lr-complete-hw-casimir}
\end{equation}
This proves \eqref{eq:lr-highest-weight}.  Bounding the multiplicity by
\(\binom{M_m}{k}\) and the number of descendants by \(M_m+1\), we obtain
\[
 \Tr e^{-u\cH_m^{(r,J)}-b\mathcal N}
 \leq(M_m+1)\sum_{k=0}^{\lfloor M_m/2\rfloor}
 \binom{M_m}{k}e^{-cuk/m^\sigma-bk}
 \leq(M_m+1)\bigl(1+e^{-cu/m^\sigma-b}\bigr)^{M_m},
\]
which implies \eqref{eq:lr-rough-trace}.
\end{proof}

These estimates yield the following lower bound on the free energy.

\begin{proposition}
\label{prop:lr-pressure}
For \(0<\sigma<2\),
\begin{equation}
 \liminf_{\beta\to\infty}\ \inf_{h\geq0}
 \left\{\beta^{1+2/\sigma}S^{2/\sigma}f_J(\beta,S,h)
 +D_\sigma\Li_{1+2/\sigma}(e^{-\beta h})\right\}\geq0.
 \label{eq:lr-stable-pressure-limit}
\end{equation}
At \(\sigma=2\),
\begin{equation}
 \liminf_{\beta\to\infty}\ \inf_{h\geq0}
 \left\{\beta^2S\log(\beta S)f_J(\beta,S,h)
 +D_2\LiTwo(e^{-\beta h})\right\}\geq0.
 \label{eq:lr-marginal-pressure-limit}
\end{equation}
\end{proposition}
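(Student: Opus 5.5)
The proof follows the template of Proposition~\ref{prop:finite-range-pressure}, with the long-range inputs substituted. We work with \(t=\beta S\), \(b=\beta h\), \(g=(\log t)^8\), and a box side \(L\) with \(L^2\asymp t^\nu g\) in the fractional case (\(\nu=2/\sigma\)) and \(L^2\asymp t\log t\,g\) at \(\sigma=2\). In both cases the zero-field Bose trace on \(B_L\) is of order \(g\). Using \eqref{eq:physical-to-decorated-trace}, we pass to the decorated Hamiltonian \(t\cH_L^{(r,J)}+b\mathcal N\). Bonds between \(L\)-boxes are deleted using \eqref{eq:duhamel-bond-removal}. This is in the correct direction for a lower bound on \(f_J\), and for summable \(J\) the thermodynamic limit is insensitive to it by \eqref{eq:summable-boundary-density}. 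It therefore suffices to bound \(\frac{1}{L^2}\log\Tr e^{-t\cH_L^{(r,J)}-b\mathcal N}\) by \(t^{-\nu}(D_\sigma\Li_{1+\nu}(e^{-b})+o(1))\), respectively by \((t\log t)^{-1}(D_2\LiTwo(e^{-b})+o(1))\), uniformly in \(b\).

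The first step is the high-energy cutoff. We tile \(B_L\) by \(m\)-cells with \(m^\sigma\asymp t/(A\log t)\) for a large constant \(A\), delete the intercell bonds, and apply \eqref{eq:lr-rough-trace} at \(u=t/2\). Then \(e^{-cu/m^\sigma}\) is a large negative power of \(t\), which beats \(M_m\). The number of cells is \(L^2/m^2\). This gives
\[
\log\Tr e^{-(t/2)\cH_L^{(r,J)}-b\mathcal N}\leq C\,\frac{L^2}{m^2}\log t,
\]
which is polylogarithmic in \(t\) times \(L^2t^{-\nu}\) (respectively \(L^2/(t\log t)\)). We set \(e_0\) equal to a large multiple of this quantity divided by \(t\), so \(e_0\lesssim g(\log t)^{C}/t\). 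Applying \eqref{eq:high-energy-spectral-inequality} then discards exchange energies above \(e_0\) at a cost of \((rL^2+1)e^{-cte_0}\), which is negligible. Lemma~\ref{lem:lr-highest-weight} on the box, \(H^{\HW}_{L,k}\geq ck/L^\sigma\), restricts the retained states to \(k\leq K:=CL^\sigma e_0\). This is polylogarithmic in the fractional case, since \(L^\sigma\asymp t\,g^{\sigma/2}\), and also at \(\sigma=2\).

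The second step is the collision comparison on the retained states. Proposition~\ref{prop:lr-collision} together with \(H^2\leq e_0H\) on the spectral subspace gives a relative error \(\delta_t\leq CK^2(1+K+K^2L^{\sigma-2})^2e_0\) (respectively \(CK^6\log^3(2L)e_0\)). Both are powers of \(\log t\) divided by \(t\), so \(\delta_t\to0\). Harmonic extension, \eqref{eq:extension-norm}--\eqref{eq:extension-energy}, and min--max then give \eqref{eq:fixed-k-field-comparison} with \(s=t/(1+\delta_t)\), and \(s/t\to1\). Summing over \(k\leq K\), the zero mode contributes \(\log(K+1)\) and the descendants contribute \(\log(rL^2+1)\). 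After normalization by the Bose scale \(g\), both are \(O(\log t/g)\to0\). The optical bands are handled by \eqref{eq:lr-decorated-split}, using \(d_{J,L}\geq c_J\), and contribute \(O(L^2e^{-cs})\). The acoustic part is exactly the Neumann trace \(\Phi^N_{J,L}(s,b)\), so Lemma~\ref{lem:lr-bose-traces}, \eqref{eq:lr-stable-neumann-bose} and \eqref{eq:lr-marginal-neumann-bose}, supplies the leading term with errors uniform in \(b\). Multiplying by \(t^{\nu}/L^2\) (respectively \(t\log t/L^2\)) and using \(\beta f=-\lim\frac{1}{L^2}\log Z\), we obtain \eqref{eq:lr-stable-pressure-limit} and \eqref{eq:marginal-long-range-limit}'s lower half. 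At \(\sigma=2\), the factor \(\log(\beta S)=\log t\) matches the normalization directly.

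The main obstacle is the bookkeeping in the first step. The cell scale, \(e_0\), and \(K\) must be chosen so that three requirements hold together. First, the rough bound from Lemma~\ref{lem:lr-highest-weight} must make the cell contributions small after normalization. Second, \(K\) must stay polylogarithmic so that the \(K^2L^{\sigma-2}\) and \(K^6\log^3 L\) factors leave \(\delta_t=o(1)\). Third, the discarded tail must be negligible relative to \(g\). In the fractional case the subtle point is that \(L^{\sigma-2}\) decays while \(K\) involves \(L^\sigma e_0\asymp g^{\sigma/2}\cdot\mathrm{polylog}\). I would check these exponents first, adjusting the power of \(\log t\) in \(g\) if needed. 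Uniformity in \(b\) holds throughout because every error bound uses only \(e^{-b}\leq1\).
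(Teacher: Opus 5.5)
Your proposal is correct and follows essentially the same route as the paper's proof: the decorated trace, the $m$-cell rough bound from Lemma~\ref{lem:lr-highest-weight} with $m^\sigma\asymp t/(A\log t)$, the cutoffs $e_0$ and $K\asymp L^\sigma e_0$, the relative collision bound from Proposition~\ref{prop:lr-collision}, and the Neumann Bose traces of Lemma~\ref{lem:lr-bose-traces} with $g=(\log t)^8$. The exponent check you flag already closes with this choice of $g$: the paper gets $\delta_t\le Ct^{-1}g^{5+2\sigma}(\log t)^{5(1+\nu)}$ in the fractional case and $\delta_t\le Ct^{-1}g^{13}(\log t)^{30}$ at $\sigma=2$. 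Your appeal to \eqref{eq:summable-boundary-density} after deleting the inter-box bonds is unnecessary, since only the one-sided inequality from \eqref{eq:duhamel-bond-removal} and the existence of the limit along side lengths that are multiples of $L$ are used.
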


\begin{proof}
We use the harmonic-extension relations
\eqref{eq:extension-norm}--\eqref{eq:extension-energy}, the spin
decomposition \eqref{eq:exact-field-multiplet-decomposition}, and the
cutoff argument of Section~\ref{sec:finite-range-proof}.
Their energy and particle cutoffs close with the following choices.

Suppose first that \(0<\sigma<2\), put \(\nu=2/\sigma\), and choose
\begin{equation}
 g=(\log t)^8,\qquad L^2\asymp t^\nu g,
 \qquad m^\sigma\asymp\frac{t}{A\log t},
 \label{eq:lr-stable-pressure-scales}
\end{equation}
Choose integers \(m,n\) with these asymptotics and set \(L=mn\). Then
\(m\mid L\), and all relative rounding errors tend to zero.  Delete all
positive bonds between the cells.  Lemma~\ref{lem:lr-highest-weight},
with \(A\) large and fixed, gives
\begin{equation}
 \log\Tr e^{-(t/2)\cH_L^{(r,J)}-b\mathcal N}
 \leq Cg(\log t)^{1+\nu}=: \mathcal R_\sigma(t).
 \label{eq:lr-stable-rough-main}
\end{equation}
Indeed, there are \(O(g(\log t)^\nu)\) cells and the descendant factor in
each cell has logarithm \(O(\log t)\).  Set
\begin{equation}
 e_0:=C\frac{g(\log t)^{1+\nu}}t,
 \qquad
 K:=\lfloor c^{-1}L^\sigma e_0\rfloor
 \leq Cg^{1+\sigma/2}(\log t)^{1+\nu}.
 \label{eq:lr-stable-cutoffs}
\end{equation}
Increasing the constant in \(e_0\), the spectral inequality
\eqref{eq:high-energy-spectral-inequality} and
\eqref{eq:lr-stable-rough-main} make the contribution above \(e_0\)
exponentially small.  Equation \eqref{eq:lr-highest-weight} on the full
box excludes \(k>K\) below that energy.

For \(k\leq K\),
\begin{equation}
 K^2L^{\sigma-2}=o(1),
 \qquad
 \delta_t\leq CK^4e_0
 \leq \frac Ct g^{5+2\sigma}(\log t)^{5(1+\nu)}=o(1),
 \label{eq:lr-stable-collision-delta}
\end{equation}
where the first assertion follows from \(g=t^{o(1)}\), and the second
from \eqref{eq:lr-stable-collision}.  Harmonic extension and min--max
therefore compare the retained interacting exchange eigenvalues with the
free product eigenvalues at time
\(s=t/(1+\delta_t)\).  As in
\eqref{eq:fixed-k-field-comparison}, the field factor \(e^{-bk}\) is not
rescaled.

For \(\sigma=2\), take instead
\begin{equation}
 g=(\log t)^8,\qquad L^2\asymp t\log t\,g,
 \qquad m^2\asymp\frac{t}{A\log t}.
 \label{eq:lr-marginal-pressure-scales}
\end{equation}
Again choose the integers so that \(m\mid L\). The relative rounding
errors tend to zero.
The corresponding estimates are
\begin{align}
 \mathcal R_2(t)&\leq Cg(\log t)^3,
 &e_0&=C\frac{g(\log t)^3}{t},
 &K&\leq Cg^2(\log t)^4,
 \label{eq:lr-marginal-cutoffs}\\
 \delta_t&\leq CK^6\log^3(2L)e_0
 \leq\frac Ct g^{13}(\log t)^{30}=o(1).
 \label{eq:lr-marginal-collision-delta}
\end{align}
Here \eqref{eq:lr-marginal-collision} was used.  Thus the same min--max
comparison applies.

It remains to evaluate the free product trace.  By
\eqref{eq:lr-decorated-split}, it has one acoustic band
\(K_{J,L}^N\), one acoustic zero mode, and \(r-1\) optical bands separated
from zero by \(c_J\).  The zero-mode occupation is truncated at \(K\), so
\eqref{eq:uniform-zero-mode-sum} contributes only \(O(\log(K+1))\), also
at \(b=0\). The optical contribution is \(O(L^2e^{-c_Js})\).  The spin
descendants contribute \(O(\log L)\).  Lemma~\ref{lem:lr-bose-traces}
and the exponentially small discarded trace consequently give, uniformly
in \(b\geq0\),
\begin{align}
 \frac1{L^2}\log Z_{L,J,h}^{(S)}(\beta)
 &\leq D_\sigma t^{-\nu}
   \Li_{1+\nu}(e^{-b})+o(t^{-\nu}),
 &&0<\sigma<2,
 \label{eq:lr-stable-box-pressure}\\
 \frac1{L^2}\log Z_{L,J,h}^{(S)}(\beta)
 &\leq\frac{D_2}{t\log t}\LiTwo(e^{-b})
   +o((t\log t)^{-1}),
 &&\sigma=2.
 \label{eq:lr-marginal-box-pressure}
\end{align}
Tile a large square by \(L\)-boxes and delete all positive bonds between
them.  To make the comparison explicit for the infinite-range kernel,
first truncate \(J\) at range \(R\), apply
\eqref{eq:duhamel-bond-removal}, and then let \(R\to\infty\). Monotone
convergence of the finite-volume Hamiltonians and partition functions
preserves the inequality.  The box bounds therefore pass to the
thermodynamic pressure.  Since \(f=-\beta^{-1}p\),
\eqref{eq:lr-stable-box-pressure} and
\eqref{eq:lr-marginal-box-pressure} prove
\eqref{eq:lr-stable-pressure-limit} and
\eqref{eq:lr-marginal-pressure-limit}.
\end{proof}

\subsection{The variational bound and the free energy}
\label{subsec:lr-variational}

We construct projected free-boson trial states for the upper bound in
Proposition~\ref{prop:lr-variational}. Pinning the exterior spins gives
Dirichlet boxes whose Gibbs states can be combined while controlling
every bond between boxes, including arbitrarily long ones. The upper
and lower bounds then prove Theorems~\ref{thm:long-range-weak-field}
and \ref{thm:long-range}.

For an active square \(B_\ell\), pin every exterior spin in the state
with third component \(+S\) and define
\begin{align}
 H_{\ell,J,h}^D
 :={}&\frac12\sum_{x\in B_\ell}
 \sum_{\substack{z\in\Z^2\\x+z\in B_\ell}}
 J(z)(S^2-\Sp_x\cdot\Sp_{x+z})
 \notag\\
 &+\sum_{x\in B_\ell}
 \sum_{\substack{z\in\Z^2\\x+z\notin B_\ell}}
 J(z)S(S-S_x^3)
 +h\sum_{x\in B_\ell}(S-S_x^3).
 \label{eq:lr-dirichlet-spin-hamiltonian}
\end{align}
Its restriction to the one-magnon sector is
\(S h_{J,\ell}^D+hI\), where \(h_{J,\ell}^D\) is defined by
\eqref{eq:lr-dirichlet-form}.
Write
\begin{equation}
 f_{J,\ell}^D(\beta,S,h)
 :=-\frac1{\beta\ell^2}\log\Tr e^{-\beta H_{\ell,J,h}^D}
 \label{eq:lr-dirichlet-free-energy}
\end{equation}
for its free energy per active site.

\begin{proposition}
\label{prop:lr-variational}
For \(0<\sigma<2\),
\begin{equation}
 \limsup_{\beta\to\infty}\ \sup_{h\geq0}
 \left\{\beta^{1+2/\sigma}S^{2/\sigma}f_J(\beta,S,h)
 +D_\sigma\Li_{1+2/\sigma}(e^{-\beta h})\right\}\leq0.
 \label{eq:lr-stable-variational-limit}
\end{equation}
At \(\sigma=2\),
\begin{equation}
 \limsup_{\beta\to\infty}\ \sup_{h\geq0}
 \left\{\beta^2S\log(\beta S)f_J(\beta,S,h)
 +D_2\LiTwo(e^{-\beta h})\right\}\leq0.
 \label{eq:lr-marginal-variational-limit}
\end{equation}
\end{proposition}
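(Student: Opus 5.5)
The plan is the projected free-boson variational bound \eqref{eq:general-projected-free-energy}, applied in Dirichlet boxes \(B_\ell\) with the Hamiltonian \eqref{eq:lr-dirichlet-spin-hamiltonian}, followed by a block construction on the full lattice. In each box take \(A=Sh_{J,\ell}^D+hI\). The operator inequality \eqref{eq:trial-operator-inequality} extends to the summable infinite-range kernel by truncating at range \(R\) and letting \(R\to\infty\). We then get
\[
 f_{J,\ell}^D\leq\frac1{\beta\ell^2}\Tr\log(1-e^{-th_{J,\ell}^D-b})
 -\frac{\log(1-q)}{\beta\ell^2}+\frac{qE_1+\rho_*E_2}{2\ell^2(1-q)},
\]
and the leading term is handled by the Dirichlet halves \eqref{eq:lr-stable-dirichlet-bose} and \eqref{eq:lr-marginal-dirichlet-bose} of Lemma~\ref{lem:lr-bose-traces}, with \(\ell=t^\gamma\) in the window \eqref{eq:lr-stable-dirichlet-window}, respectively \(\ell=t^{5/6}\) at \(\sigma=2\).

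The error terms come next. For \(0<\sigma<2\), domination of the killed kernel by the full-lattice kernel and \eqref{eq:lr-full-heat-bound} give \(\rho_x\leq\sum_{m\geq1}(e^{-mth^D})(x,x)\leq C\sum_m(mt)^{-\nu}\leq Ct^{-\nu}\), since \(\nu>1\). Hence \(q\leq C\ell^2t^{-2\nu}\to0\) because \(\gamma<\nu\). For the energy traces \(E_1,E_2\) we write them as \(t^{-1}\Tr F(th^D+b)\) and \(t^{-1}\Tr G(th^D+b)\), as in \eqref{eq:finite-range-energy-function}. Using \(F(u)\leq Ce^{-u/2}\), \(G(u)\leq C(u^{-1}+1)e^{-u/2}\), together with the heat-trace bound \(\Tr e^{-uh^D}\leq C\ell^2u^{-\nu}\) and an integral over \(u\), we aim for \(E_1\leq C\ell^2t^{-1-\nu}\) and \(E_2\leq C\ell^2t^{-1-\nu}\log t\). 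Relative to the scale \(t^{-\nu}\ell^2/\beta\), the corrections are then \(O(\ell^2t^{-2\nu}\log t)\) and \(O(t^{-\nu}\log t)\), which vanish. At \(\sigma=2\) we reuse the nearest-neighbour Green bound, since \(h^D_{J,\ell}\geq j_{\rm bb}h^D_{\rm nn}\) as forms. This gives \(\rho_x\leq C\log\ell/t\) and \(q\leq C\ell^2\log^2\ell/t^2\). The energy traces are bounded as in \eqref{eq:finite-range-energy-sum} and \eqref{eq:finite-range-entropy-sum}, through eigenvalue comparison with the nearest-neighbour operator. After multiplication by the marginal normalization \(\beta^2S\log t\), the errors are \(O((t^{-1}+t^{-1/3})\log^3t)\) with \(\ell=t^{5/6}\).

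The main obstacle is passing from one box to the lattice, because corridors cannot separate infinite-range bonds. The plan is to tile \(B_{n\ell}\) by \(n^2\) boxes with no corridors and use the product state \(\Gamma=\bigotimes_i\Gamma_i\). Here each \(\Gamma_i\) is the projected trial state of box \(i\), viewed as a state on the physical spin space. Its entropy is additive, and each \(\Gamma_i\) commutes with the total \(S^3\) rotation of its box, since it is built from \(\mathrm d\Gamma(A)\) and occupation cutoffs. Therefore \(\langle S^1_x\rangle=\langle S^2_x\rangle=0\). For \(x,y\) in different boxes, factorization gives \(\langle S^2-\Sp_x\cdot\Sp_y\rangle=S(\rho_x+\rho_y)-\rho_x\rho_y\leq S(\rho_x+\rho_y)\), where now \(\rho_x=\langle S-S^3_x\rangle\). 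Summing \(J(y-x)\) over cross-box pairs, each such term is bounded by the pinned boundary terms \(J(z)S(S-S^3_x)\) already present in the two Dirichlet Hamiltonians. Thus \(\langle H_{B_{n\ell}}\rangle_\Gamma\leq\sum_i\langle H^D_{\ell,i}\rangle_{\Gamma_i}\). The Gibbs variational principle then yields \(f_J\leq\) the Dirichlet trial free energy per site, which is \eqref{eq:lr-block-variational-passage}, and boundary independence \eqref{eq:summable-boundary-density} lets \(n\to\infty\).

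Finally, we multiply by the displayed scales and use \(\beta^{1+\nu}S^\nu f=t^\nu\beta f\). Every error is uniform in \(b\geq0\), since it uses only \(e^{-mb}\leq1\) and \(b\geq0\) in \(F\) and \(G\). This gives \eqref{eq:lr-stable-variational-limit} and \eqref{eq:lr-marginal-variational-limit}.
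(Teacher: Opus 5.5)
Your proposal is correct and follows the paper's proof essentially step by step. It uses the same projected free-boson bound in pinned Dirichlet boxes, with $1/\sigma<\gamma<2/\sigma$ (respectively $\ell=t^{5/6}$ and the nearest-neighbour form comparison at $\sigma=2$), and the same corridor-free tiling in which rotation invariance and factorization let the pinned boundary terms absorb every cross-block bond. Your deviations are harmless: you tile with the projected trial states rather than the exact Dirichlet Gibbs states, which works because both commute with each box's $\mathcal N$; and the extra $\log t$ in $E_2$ is unnecessary, since $\nu>1$ makes $\int_{1/2}^\infty s^{-\nu}\,ds$ finite.
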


\begin{proof}
Use the free trial operator
\begin{equation}
 K_{J,\ell,h}^D=S\,\mathrm d\Gamma(h_{J,\ell}^D)+h\mathcal N
 \label{eq:lr-free-trial-operator}
\end{equation}
and the cutoff \(\mathcal P\) from
\eqref{eq:holstein-primakoff-cutoff}.  Adopt the notation
\(Z_0,Z_{\mathcal P},\Gamma_{\mathcal P}\) of
Lemma~\ref{lem:general-projected-entropy} with
\(A=S h_{J,\ell}^D+hI\). Equation~\eqref{eq:trial-operator-inequality}
gives
\begin{equation}
 \mathcal P H_{\ell,J,h}^D\mathcal P\leq K_{J,\ell,h}^D.
 \label{eq:lr-trial-operator-inequality}
\end{equation}

We record the estimates entering the projected entropy calculation.  In
the stable case, with \(\nu=2/\sigma>1\), domination of the killed heat
kernel by the full kernel and \eqref{eq:lr-full-heat-bound} give
\begin{align}
 \rho_*&:=\sup_x
 \bigl(e^{t h_{J,\ell}^D+b}-1\bigr)^{-1}(x,x)
 \leq Ct^{-\nu},
 \label{eq:lr-stable-local-density}\\
 1-\frac{Z_{\mathcal P}}{Z_0}
 &\leq C\ell^2t^{-2\nu}=:Q_{\ell,t}.
 \label{eq:lr-stable-projection-defect}
\end{align}
The heat-trace eigenvalue count following from
\eqref{eq:lr-full-heat-bound} also yields
\begin{align}
 E_1&:=\Tr\frac{2(Sh_{J,\ell}^D+h)}
 {e^{t h_{J,\ell}^D+b}-1}
 &\leq CS\ell^2t^{-1-\nu},
 \label{eq:lr-stable-energy-trace}\\
 E_2&:=\Tr\frac{Sh_{J,\ell}^D+h}
 {\sinh^2((t h_{J,\ell}^D+b)/2)}
 &\leq CS\ell^2t^{-1-\nu}.
 \label{eq:lr-stable-entropy-trace}
\end{align}
The integral at zero in the second estimate is finite precisely because
\(\nu>1\).  The functions of \(t\lambda+b\) in these traces decrease in
\(b\), so the bounds are uniform in the field.

For \(Q_{\ell,t}\leq1/2\), insert
\eqref{eq:lr-stable-local-density}--
\eqref{eq:lr-stable-entropy-trace} into
\eqref{eq:general-projected-free-energy}, using \(q\leq Q_{\ell,t}\).
This yields
\begin{align}
 f_{J,\ell}^D(\beta,S,h)
 \leq{}&-\frac1{\beta\ell^2}\Phi_{J,\ell}^D(t,b)
 +CS\left(t^{-1-2\nu}+\ell^2t^{-1-3\nu}\right).
 \label{eq:lr-stable-projected-bound}
\end{align}
Choose \(\ell=t^\gamma+O(1)\) with
\(1/\sigma<\gamma<2/\sigma\).  Then
\(Q_{\ell,t}=o(1)\). Relative to the leading Bose term, the two displayed
errors are \(O(t^{-\nu})\) and \(O(\ell^2t^{-2\nu})\).  Equation
\eqref{eq:lr-stable-dirichlet-bose} proves the stable box estimate. The
passage to the full lattice is made in
\eqref{eq:lr-block-variational-passage}.

At \(\sigma=2\), positivity of the nearest-neighbour bonds gives the form
comparison \(h_{J,\ell}^D\geq c h_{0,\ell}^D\), where
\(h_{0,\ell}^D\) is the unit-coupling nearest-neighbour zero-exterior
Laplacian.  The Green-function bound \eqref{eq:finite-range-local-green-bound} and
the trace estimates
\eqref{eq:finite-range-energy-sum}--\eqref{eq:finite-range-entropy-sum}
therefore apply without an upper comparison and give
\begin{align}
 1-\frac{Z_{\mathcal P}}{Z_0}
 &\leq C\frac{\ell^2\log^2(2\ell)}{t^2},
 \label{eq:lr-marginal-projection-defect}\\
 f_{J,\ell}^D(\beta,S,h)
 &\leq-\frac1{\beta\ell^2}\Phi_{J,\ell}^D(t,b)
 +CS\left(\frac{\log^2(2\ell)}{t^3}
 +\frac{\ell^2\log^2(2\ell)}{t^4}\right).
 \label{eq:lr-marginal-projected-bound}
\end{align}
With \(\ell=t^{5/6}+O(1)\), the defect tends to zero and both error terms
are \(o((\beta t\log t)^{-1})\).  Together with
\eqref{eq:lr-marginal-dirichlet-bose}, this proves the required marginal
box estimate.

It remains to pass from a Dirichlet box to the infinite system.  Tile a
large square by \(\ell\)-blocks and put the exact Dirichlet Gibbs state in
each block.  This state is invariant under rotations about the third spin
axis, hence its transverse one-spin expectations vanish.  For sites
\(x,y\) in different blocks, put
\(\rho_x=\langle S-S_x^3\rangle\).  The tensor-product state satisfies the
exact estimate
\begin{align}
 \langle S^2-\Sp_x\cdot\Sp_y\rangle
 &=S(\rho_x+\rho_y)-\rho_x\rho_y
 \leq S(\rho_x+\rho_y).
 \label{eq:lr-cross-block-bound}
\end{align}
After multiplication by \(J(y-x)\), the right side is precisely the sum
of the two pinned terms assigned to this bond by
\eqref{eq:lr-dirichlet-spin-hamiltonian}.  Thus all cross-block bonds,
including arbitrarily long ones, are controlled with no corridor and no
loss of active density.  Entropy is additive.  Incomplete outer blocks
have vanishing density, and their boundary contribution tends to zero by
\eqref{eq:summable-boundary-density}.  The Gibbs variational principle
therefore gives
\begin{equation}
 f_J(\beta,S,h)\leq f_{J,\ell}^D(\beta,S,h).
 \label{eq:lr-block-variational-passage}
\end{equation}
Combining \eqref{eq:lr-stable-projected-bound} or
\eqref{eq:lr-marginal-projected-bound} with the corresponding Dirichlet
Bose trace proves \eqref{eq:lr-stable-variational-limit} and
\eqref{eq:lr-marginal-variational-limit}.
\end{proof}

\begin{proof}[Proof of Theorem~\ref{thm:long-range-weak-field} for \(0<\sigma<2\)]
By \eqref{eq:lr-stable-return}, the constant in both one-sided bounds is
\(D_\sigma\).  Proposition~\ref{prop:lr-pressure} gives
the lower free-energy inequality and Proposition~\ref{prop:lr-variational}
gives the upper one, both uniformly in \(h\geq0\).  Their combination is
\eqref{eq:stable-long-range-limit}.
\end{proof}

\begin{proof}[Proof of Theorem~\ref{thm:long-range-weak-field} for \(\sigma=2\)]
Equations \eqref{eq:lr-marginal-return},
\eqref{eq:lr-marginal-pressure-limit}, and
\eqref{eq:lr-marginal-variational-limit} give the common coefficient
\(D_2=1/(\pi^2a)\), uniformly in \(h\geq0\).  This is exactly
\eqref{eq:marginal-long-range-limit}.
\end{proof}

\begin{proof}[Proof of Theorem~\ref{thm:long-range}]
Expanding the logarithm in \eqref{eq:spin-wave-free-energy} and using
the return integral \eqref{eq:infinite-heat-return} gives
\[
 -\beta f_J^{\rm sw}(\beta,S)
 =\sum_{m=1}^{\infty}\frac{I_J(m\beta S)}m.
\]
The limits and bounds in Lemma~\ref{lem:lr-local-limits} permit dominated
convergence in this series. Its leading coefficient is
\(D_\sigma\zeta(1+2/\sigma)\) for \(0<\sigma<2\), and
\(D_2\zeta(2)\) at \(\sigma=2\), with the respective temperature
scales of Theorem~\ref{thm:long-range-weak-field}. Applying that theorem
at \(h=0\) gives the same leading terms for \(f_J\). Both coefficients
are strictly positive, so the ratio tends to one.
\end{proof}


\begin{thebibliography}{99}
\fontsize{9.5pt}{10.1pt}\selectfont
\setlength{\itemsep}{0em}
\setlength{\parsep}{0pt}
\sloppy
\raggedright

\bibitem{BarlowBassKumagai2009}
M.~T. Barlow, R.~F. Bass and T.~Kumagai,
\newblock Parabolic Harnack inequality and heat kernel estimates for random
walks with long range jumps,
\newblock \emph{Mathematische Zeitschrift} \textbf{261} (2009), 297--320.
\newblock \href{https://doi.org/10.1007/s00209-008-0326-5}{doi:10.1007/s00209-008-0326-5},
\href{https://arxiv.org/abs/math/0702221}{arXiv:math/0702221}.

\bibitem{Benedikter2017}
N.~Benedikter,
\newblock Interaction corrections to spin-wave theory in the large-\(S\)
limit of the quantum Heisenberg ferromagnet,
\newblock \emph{Mathematical Physics, Analysis and Geometry} \textbf{20}
(2017), article no.~5.
\newblock \href{https://doi.org/10.1007/s11040-016-9237-6}{doi:10.1007/s11040-016-9237-6},
\href{https://arxiv.org/abs/1604.02548}{arXiv:1604.02548}.

\bibitem{Bruno2001}
P.~Bruno,
\newblock Absence of spontaneous magnetic order at nonzero temperature in
one- and two-dimensional Heisenberg and XY systems with long-range
interactions,
\newblock \emph{Physical Review Letters} \textbf{87} (2001), 137203.
\newblock \href{https://doi.org/10.1103/PhysRevLett.87.137203}{doi:10.1103/PhysRevLett.87.137203},
\href{https://arxiv.org/abs/cond-mat/0105129}{arXiv:cond-mat/0105129}.

\bibitem{CaputoLiggettRichthammer2010}
P.~Caputo, T.~M. Liggett and T.~Richthammer,
\newblock Proof of Aldous' spectral gap conjecture,
\newblock \emph{Journal of the American Mathematical Society}
\textbf{23} (2010), 831--851.
\newblock \href{https://doi.org/10.1090/S0894-0347-10-00659-4}{doi:10.1090/S0894-0347-10-00659-4}.

\bibitem{Chen2017}
J.~P. Chen,
\newblock The moving particle lemma for the exclusion process on a
weighted graph,
\newblock \emph{Electronic Communications in Probability} \textbf{22}
(2017), paper no.~47, 1--13.
\newblock \href{https://doi.org/10.1214/17-ECP82}{doi:10.1214/17-ECP82}.

\bibitem{ConlonSolovej1990}
J.~G. Conlon and J.~P. Solovej,
\newblock On asymptotic limits for the quantum Heisenberg model,
\newblock \emph{Journal of Physics A: Mathematical and General}
\textbf{23} (1990), 3199--3213.
\newblock \href{https://doi.org/10.1088/0305-4470/23/14/018}{doi:10.1088/0305-4470/23/14/018}.

\bibitem{CorreggiGiuliani2012}
M.~Correggi and A.~Giuliani,
\newblock The free energy of the quantum Heisenberg ferromagnet at large
spin,
\newblock \emph{Journal of Statistical Physics} \textbf{149} (2012),
234--245.
\newblock \href{https://doi.org/10.1007/s10955-012-0589-4}{doi:10.1007/s10955-012-0589-4}.

\bibitem{CorreggiGiulianiSeiringer2015}
M.~Correggi, A.~Giuliani and R.~Seiringer,
\newblock Validity of the spin-wave approximation for the free energy of
the Heisenberg ferromagnet,
\newblock \emph{Communications in Mathematical Physics} \textbf{339}
(2015), 279--307.
\newblock \href{https://doi.org/10.1007/s00220-015-2402-0}{doi:10.1007/s00220-015-2402-0}.

\bibitem{Delmotte1999}
T.~Delmotte,
\newblock Parabolic Harnack inequality and estimates of Markov chains on
graphs,
\newblock \emph{Revista Matem\'atica Iberoamericana} \textbf{15} (1999),
181--232.
\newblock \href{https://doi.org/10.4171/RMI/254}{doi:10.4171/RMI/254}.

\bibitem{Dyson1956}
F.~J. Dyson,
\newblock General theory of spin-wave interactions,
\newblock \emph{Physical Review} \textbf{102} (1956), 1217--1230.
\newblock \href{https://doi.org/10.1103/PhysRev.102.1217}{doi:10.1103/PhysRev.102.1217}.

\bibitem{Hofmann2012}
C.~P. Hofmann,
\newblock Low-temperature properties of two-dimensional ideal
ferromagnets,
\newblock \emph{Physical Review B} \textbf{86} (2012), 054409.
\newblock \href{https://doi.org/10.1103/PhysRevB.86.054409}{doi:10.1103/PhysRevB.86.054409}.

\bibitem{HolsteinPrimakoff1940}
T.~Holstein and H.~Primakoff,
\newblock Field dependence of the intrinsic domain magnetization of a
ferromagnet,
\newblock \emph{Physical Review} \textbf{58} (1940), 1098--1113.
\newblock \href{https://doi.org/10.1103/PhysRev.58.1098}{doi:10.1103/PhysRev.58.1098}.

\bibitem{JuhaszJunger2008}
I.~Juh\'asz Junger, D.~Ihle, L.~Bogacz and W.~Janke,
\newblock Thermodynamics of Heisenberg ferromagnets with arbitrary spin
in a magnetic field,
\newblock \emph{Physical Review B} \textbf{77} (2008), 174411.
\newblock \href{https://doi.org/10.1103/PhysRevB.77.174411}{doi:10.1103/PhysRevB.77.174411}.
\newblock \href{https://arxiv.org/abs/0802.3395}{arXiv:0802.3395}.

\bibitem{Klippel2026}
A.~Klippel,
\newblock Free-energy asymptotics of the two-dimensional quantum Heisenberg
ferromagnet,
\newblock arXiv preprint (2026).
\newblock \href{https://arxiv.org/abs/2608.25506}{arXiv:2608.25506}.

\bibitem{MerminWagner1966}
N.~D. Mermin and H.~Wagner,
\newblock Absence of ferromagnetism or antiferromagnetism in one- or
two-dimensional isotropic Heisenberg models,
\newblock \emph{Physical Review Letters} \textbf{17} (1966), 1133--1136.
\newblock \href{https://doi.org/10.1103/PhysRevLett.17.1133}{doi:10.1103/PhysRevLett.17.1133}.

\bibitem{NakanoTakahashi1994}
H.~Nakano and M.~Takahashi,
\newblock Quantum Heisenberg model with long-range ferromagnetic interactions,
\newblock \emph{Physical Review B} \textbf{50} (1994), 10331--10334.
\newblock \href{https://doi.org/10.1103/PhysRevB.50.10331}{doi:10.1103/PhysRevB.50.10331}.

\bibitem{NakanoTakahashi1995}
H.~Nakano and M.~Takahashi,
\newblock Magnetic properties of quantum Heisenberg ferromagnets with
long-range interactions,
\newblock \emph{Physical Review B} \textbf{52} (1995), 6606--6610.
\newblock \href{https://doi.org/10.1103/PhysRevB.52.6606}{doi:10.1103/PhysRevB.52.6606},
\href{https://arxiv.org/abs/cond-mat/9505066}{arXiv:cond-mat/9505066}.

\bibitem{NapiorkowskiSeiringer2021}
M.~Napi\'orkowski and R.~Seiringer,
\newblock Free energy asymptotics of the quantum Heisenberg spin chain,
\newblock \emph{Letters in Mathematical Physics} \textbf{111} (2021),
article no.~31.
\newblock \href{https://doi.org/10.1007/s11005-021-01375-4}{doi:10.1007/s11005-021-01375-4},
\href{https://arxiv.org/abs/1912.01967}{arXiv:1912.01967}.

\bibitem{Seiringer2025}
R.~Seiringer,
\newblock The Heisenberg ferromagnet: a dilute Bose gas in disguise,
\newblock in \emph{Mini-Workshop: New Directions in Correlated Quantum
Systems}, \emph{Oberwolfach Reports} \textbf{22} (2025), no.~1, 357--358.
\newblock \href{https://doi.org/10.4171/OWR/2025/7}{doi:10.4171/OWR/2025/7}.

\bibitem{SongZhaoZhouMeng2023}
M.~Song, J.~Zhao, C.~Zhou and Z.~Y. Meng,
\newblock Dynamical properties of quantum many-body systems with long-range
interactions,
\newblock \emph{Physical Review Research} \textbf{5} (2023), 033046.
\newblock \href{https://doi.org/10.1103/PhysRevResearch.5.033046}{doi:10.1103/PhysRevResearch.5.033046},
\href{https://arxiv.org/abs/2301.00829}{arXiv:2301.00829}.

\bibitem{Takahashi1986}
M.~Takahashi,
\newblock Quantum Heisenberg ferromagnets in one and two dimensions at low
temperature,
\newblock \emph{Progress of Theoretical Physics Supplement} \textbf{87}
(1986), 233--246.
\newblock \href{https://doi.org/10.1143/PTPS.87.233}{doi:10.1143/PTPS.87.233}.

\bibitem{VassilievCottamRojdestvenski2001}
O.~N. Vassiliev, M.~G. Cottam and I.~V. Rojdestvenski,
\newblock Quantum Monte Carlo study of a Heisenberg spin system in two
dimensions with long-range interactions,
\newblock \emph{Journal of Applied Physics} \textbf{89} (2001), 7329--7331.
\newblock \href{https://doi.org/10.1063/1.1356040}{doi:10.1063/1.1356040}.

\bibitem{ZhaoSongQiRongMeng2023}
J.~Zhao, M.~Song, Y.~Qi, J.~Rong and Z.~Y. Meng,
\newblock Finite-temperature critical behaviors in 2D long-range quantum
Heisenberg model,
\newblock \emph{npj Quantum Materials} \textbf{8} (2023), article no.~59.
\newblock \href{https://doi.org/10.1038/s41535-023-00591-6}{doi:10.1038/s41535-023-00591-6},
\href{https://arxiv.org/abs/2306.01044}{arXiv:2306.01044}.

\end{thebibliography}
\end{document}